\newtheorem{notation}[theorem]{Notation}
\tikzstyle{vertex}=[circle, draw, inner sep=0pt, minimum size=4pt, fill=black]
\newcommand{\vertex}{\node[vertex]}
\begin{document}

\frontmatter         

\pagestyle{headings}  

\mainmatter            

\title{On the Permutation-Representation Number of Bipartite Graphs using Neighborhood Graphs}

\titlerunning{The Permutation-Representation Number of Bipartite Graphs} 

\author{Khyodeno Mozhui \and K. V. Krishna} 

\authorrunning{Khyodeno Mozhui \and K. V. Krishna} 

\institute{Indian Institute of Technology Guwahati, Guwahati, India\\
	\email{k.mozhui@iitg.ac.in};
	\email{kvk@iitg.ac.in}}

\maketitle             

\begin{abstract}
	The problems of determining the permutation-representation number (\textit{prn}) and the representation number of bipartite graphs are open in the literature. Moreover, the decision problem corresponding to determination of the \textit{prn} of a bipartite graph is NP-complete. However, these numbers were established for certain subclasses of bipartite graphs, e.g., for crown graphs. Further, it was conjectured that the crown graphs have the highest representation number among the bipartite graphs. In this work, first, we reconcile the relation between the \textit{prn} of a comparability graph and the dimension of its induced poset and review the upper bounds on the \textit{prn} of bipartite graphs. Then, we study the \textit{prn} of bipartite graphs using the notion called neighborhood graphs. This approach substantiates the aforesaid conjecture and gives us theoretical evidence. In this connection, we devise a polynomial-time procedure to construct a word that represents a given bipartite graph permutationally. Accordingly, we provide a better upper bound for the \textit{prn} of bipartite graphs. Further, we construct a class of bipartite graphs, viz., extended crown graphs, defined over posets and investigate its \textit{prn} using the neighborhood graphs. 
\end{abstract}

\keywords{Word-representable graph, comparability graph, bipartite graph, permutation-representation number, poset dimension.}

\section{Introduction}
The class of word-representable graphs, which includes a wide range of graphs such as comparability graphs, circle graphs, and 3-colorable graphs, is an important class of graphs with intriguing properties. The monograph \cite{kitaev15mono} provides a comprehensive account of word-representable graphs, their connections to other contexts, and contributions to the topic. Finding the representation number of an arbitrary word-representable graph is a hard problem, and many authors have focused on determining the numbers for various subclasses. 

The class of comparability graphs --  the graphs which admit  transitive orientations -- is an important subclass of word-representable graphs. Indeed, the class of comparability graphs is precisely the class of permutationally representable graphs, each of which can be represented by a concatenation of permutations of its vertices \cite{kitaev08order}. The minimum number of such permutations of vertices of a comparability graph is called the permutation-representation number (\textit{prn}) of the graph. Every comparability graph induces a poset based on one of its transitive orientations. The \textit{prn} of a comparability graph is precisely the dimension of the corresponding induced poset (cf. Section \ref{prn=dim}). For $ 3 \le k \le \lceil \frac{n}{2} \rceil$, it is known that determining whether a poset with $n$ elements has dimension $k$ is NP-complete \cite{yanna82}. Therefore, it is NP-complete to determine whether an $n$-element comparability graph has \textit{prn} $k$.  

The class of complete graphs is precisely the class of graphs having the representation number and the \textit{prn} equal to one. While the class of graphs with representation number at most two is characterized by the class of circle graphs in \cite{halldorsson11}, the class of permutation graphs precisely has the \textit{prn} at most two (cf. \cite{KK_KM_3}). Although examples of graphs with greater representation numbers and \textit{prn} exist, in both contexts, no characterization of graphs is available for other numbers.

A bipartite graph is a comparability graph that is precisely isomorphic to the cover graph of its induced poset. Determining the representation number and \textit{prn} of bipartite graphs are open problems. In \cite{yanna82}, it was shown that deciding whether a bipartite graph has \textit{prn} $k$, for fixed $k \ge 4$ is NP-complete. Further, in \cite{felsner2017complexity}, it was established that deciding whether a bipartite graph has \textit{prn} three is also NP-complete. Accordingly, in the literature, various authors have focused on determining these numbers or bounds on these numbers for specific subclasses of bipartite graphs. For example, these numbers are determined for even cycles, trees, and crown graphs (cf. \cite{glen18,halldorsson16,KK_KM_3}). The representation number of prisms, ladder graphs, outerplanar graphs, $G_k$ graphs, and complete bipartite graphs are known (cf. \cite{halldorsson11,kitaev13,kitaev15mono,kitaev08}). Upper bounds for the representation number of hypercubes, 3-colorable graphs, etc., were determined in \cite{broere_2019,halldorsson16}. Certain bounds on the \textit{prn} of bipartite graphs are given through the notion called boxicity in \cite{sunil11a}. However, determining boxicity for bipartite graphs is NP-hard \cite{sunil11b}. The upper bound on the \textit{prn} of bipartite graphs given in \cite{Broere18} is the best known in the literature. 

In this paper, we reconcile the relation between the \textit{prn} of a comparability graph and the dimension of its induced poset, and observe that they are the same. Accordingly, we present the classes of graphs for which the \textit{prn} are known (see Section \ref{prn=dim}). Then, in Section \ref{gen_prop}, we provide some general properties of the permutations utilized in representing a bipartite graph; these properties play a vital role in this work. Focusing on the \textit{prn} of bipartite graphs, we employ the notion of a neighborhood graph and give a polynomial-time procedure to construct permutations on the vertices of a given bipartite graph whose concatenation represents the bipartite graph. This gives us an upper bound for the \textit{prn} of bipartite graphs (cf. Section \ref{main}). In Section \ref{comp_broere}, we show that this is an improved upper bound in comparison to the best available one given by Broere in \cite{Broere18}. In Section \ref{k0_class}, we provide a criterion for determining the \textit{prn} of a special class and, further, give examples in Section \ref{kappa0_ex}. Moreover, in Section \ref{Example-ecg}, we construct a class of bipartite graphs, viz., extended crown graphs, defined over posets, and investigate their \textit{prn}. 
\newpage

\section{Preliminaries}

	This section provides the necessary background material on words, graphs, word-representable graphs and posets, and fix the notation. For the concepts that are not presented here, one may refer to  \cite{golumbic,kitaev15mono,Trotter92,west}.  
	
	Let $A$ be a finite set. A word over $A$ is a finite sequence of elements of $A$. The words are written by juxtaposing the symbols of the sequence. A subword $x$ of a word $y$ is a subsequence of the sequence $y$, denoted by $x \le y$. For instance, $abaa \le aabbaba$. Given a word $x$ over $A$ and a subset $B$ of $A$, the subword of $x$ consisting only of the symbols of $B$ is denoted by $x_B$. For example, if $x = abbcbaccb$, then $x_{\{a, b\}} = abbbab$.   We say the letters $a$ and $b$ alternate in a word $x$, if $x_{\{a, b\}}$ is of the form $abab\cdots$ or $baba\cdots$ (of even or odd length). A word $x$ is said to be $k$-uniform if the number of occurrences of each letter in $x$ is $k$. 
	
	In this paper, we only consider the graphs which are simple (i.e., without loops or parallel edges) and undirected. To distinguish between two-letter words and edges of graphs, we write $\overline{ab}$ to denote an undirected edge between vertices $a$ and $b$. A directed edge from $a$ to $b$ is denoted by $\overrightarrow{ab}$. The neighborhood of a vertex $a$ in a graph, denoted by $N(a)$, is the set of all vertices adjacent to $a$ in the graph. An orientation of a graph is an assignment of direction to each edge so that the resulting graph is a directed graph. An orientation of a graph is said to be a transitive orientation if the adjacency relation on the vertices in the resulting directed graph is transitive. A graph is said to be a comparability graph if it admits a transitive orientation.
	
	A graph $G$ is said to be a bipartite graph if its vertex set can be partitioned into $\{A, B\}$, called bipartition, such that every edge in $G$ connects a vertex in $A$ to a vertex in $B$. 
	A bipartite graph with bipartition $\{A, B\}$, where $|A| = m$ and $|B| = n$, is said to be complete, denoted by $K_{m,n}$, if every vertex in $A$ is adjacent to all vertices of $B$.
	
	A graph $G = (V, E)$ is said to be word-representable if there is a word $x$ over the set $V$ such that two vertices $a$ and $b$ are adjacent in $G$ if and only if $a$ and $b$ alternate in $x$. If a $k$-uniform word represents $G$, it is said to be $k$-word-representable. Every word-representable graph is $k$-word-representable, for some $k$ \cite{kitaev08}. The smallest number $k$ such that $G$ is $k$-word-representable is the representation number of a word-representable graph $G$, denoted by $\mathcal{R}(G)$. 
	
	A graph $G$ is said to be permutationally representable if it can be represented by a word of the form $p_1 p_2 \cdots p_k$, where each $p_i$ is a permutation of the vertices of $G$. Furthermore, if a graph can be represented permutationally with $k$ permutations, it is called a permutationally $k$-representable graph. The permutation-representation number (in short, \textit{prn}) of $G$, denoted by $\mathcal{R}^p(G)$, is the smallest number $k$ such that $G$ is permutationally $k$-representable. It is clear that $\mathcal{R}(G) \le \mathcal{R}^p(G)$. 
	
	For a complete bipartite graph $K_{m,n}$ with at least three vertices, we have $\mathcal{R}^p(K_{m,n}) = \mathcal{R}(K_{m,n}) = 2$.  It was shown in \cite[see the proof of Theorem 4]{halldorsson11} that $\mathcal{R}^p(H_{n,n}) = n$ (for $n \ge 2$) and in \cite{glen18} that $\mathcal{R}(H_{n,n}) = \lceil n/2 \rceil$  (for $n \ge 5$), where $H_{n,n}$ is the graph, called a crown graph, obtained by removing a perfect matching from $K_{n,n}$. Further, it was conjectured in \cite[see Conjecture 1]{glen18} that every bipartite graph on $n$ vertices has representation number at most $n/4$. 
	
	A set $P$ with a partial order, say $<$ on $P$, is called a partially ordered set or simply a poset. Two elements $a, b \in P$ are comparable if $a < b$ or $b < a$; otherwise, they are incomparable. A chain in $P$ is a set of pairwise comparable elements of $P$, whereas an antichain is a set of pairwise incomparable elements of $P$. A chain cover of a poset $P$ is a collection of disjoint chains whose union is $P$. The width of $P$, denoted by $w(P)$, is the size of a largest antichain in $P$. The height of $P$ is the size of a largest chain in $P$. It was established in \cite{Dilworth50} that for any finite poset, the size of a smallest chain cover is equal to the width of the poset, known as Dilworth's Theorem. A polynomial-time method for determining the width and a smallest chain cover of a poset  was provided in \cite{Felsner03,hopcroft1973n}. 
	
	A linear order of a set $P$ is a partial order on $P$ in which all elements of $P$ form a chain. A collection of linear orders $\{L_1,L_2, \ldots, L_k\}$ of a poset $P$ is called a realizer of $P$, if  $\cap^k_{i=1} L_i$ is the partial order on $P$,  i.e., for every $a , b \in P$, $a < b$ in $P$ if and only if $a < b $ in all $L_i$. The dimension of a poset $P$  is the smallest positive integer $k$ such that $P$ has a realizer of size $k$, denoted by $\dim(P)$ \cite{dushnik1941partially}.

	A comparability graph $G$ induces a poset denoted by $P_G$ based on the chosen transitive orientation. The cover graph of a poset $P$, denoted by $G_P$, is the graph whose vertex set is the set $P$, and two elements $a , b \in P$ are adjacent in $G_P$ if and only if $a < b$ in $P$ and there is no element $c$ in $P$ such that $a < c < b$. The Hasse diagram of a poset $P$ is a drawing of the cover graph $G_P$ in which $b$ is placed above $a$ whenever $a < b$ in $P$. In this paper, Hasse diagrams are used to depict the examples of posets.

	It is clear from the following result that the \textit{prn} of a disconnected graph is equal to the maximum of the respective numbers of its components.  
	
	\begin{lemma}[\cite{Broere18}] \label{disconnected}
		Let $G = (V,E)$ and $H = (V',E')$ be graphs with $V \cap V' = \emptyset$ that can be permutationally represented by $p_1p_2 \cdots p_k$ and $ q_1q_2 \cdots q_l$ respectively, where $2 \le k \le l$. Then, $G \cup H = (V \cup V', E\cup E')$ can be permutationally represented by $ p_1q_1p_2q_2 \cdots p_{k-1}q_{k-1}q_kp_kq_{k+1}p_k \cdots q_lp_k$.  
	\end{lemma}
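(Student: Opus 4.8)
The plan is to verify directly that the proposed word, call it $w$, is a concatenation of permutations of $V \cup V'$ and that the pairs alternating in $w$ are exactly the edges of $E \cup E'$. The guiding principle for every adjacency check is the standard fact about permutational representations: in a word $r_1 r_2 \cdots r_m$ that concatenates permutations, two vertices $a, b$ alternate if and only if every $r_i$ orders $a$ before $b$, or every $r_i$ orders $b$ before $a$. This is because the restriction to $\{a,b\}$ is a block word $c_1 c_2 \cdots c_m$ with each $c_i \in \{ab, ba\}$, and such a block word alternates precisely when all the $c_i$ coincide. I would record this observation first, since all three cases below reduce to it.

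Next I would confirm the shape of $w$. Grouping it into the blocks $p_1 q_1, \ldots, p_{k-1} q_{k-1}, q_k p_k, q_{k+1} p_k, \ldots, q_l p_k$ gives $(k-1) + 1 + (l-k) = l$ blocks, and since $V \cap V' = \emptyset$ each block (of the form $p_i q_j$ or $q_i p_j$) lists all of $V$ and all of $V'$ exactly once, hence is a permutation of $V \cup V'$. Thus $w$ is a valid permutational word on $l$ permutations, and it remains to check the edges.

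For a pair $a, b \in V$, only the $p$-blocks contribute to $w_{\{a,b\}}$, and reading $w$ from left to right these occur in the order $p_1, p_2, \ldots, p_{k-1}, p_k, p_k, \ldots, p_k$, with $p_k$ repeated $l-k+1$ times. The restriction is therefore $(p_1)_{\{a,b\}} \cdots (p_{k-1})_{\{a,b\}}$ followed by $(p_k)_{\{a,b\}}$ repeated; since appending further copies of the identical final block can neither create nor destroy alternation, this alternates iff $(p_1)_{\{a,b\}}, \ldots, (p_k)_{\{a,b\}}$ all agree, i.e.\ iff $a, b$ alternate in $p_1 \cdots p_k$, i.e.\ iff $\overline{ab} \in E$. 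Symmetrically, for $a, b \in V'$ only the $q$-blocks contribute, and they occur in $w$ in their original order $q_1, q_2, \ldots, q_l$, so $w_{\{a,b\}}$ equals the restriction of $q_1 \cdots q_l$ and alternation holds iff $\overline{ab} \in E'$.

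The one genuinely delicate point, and the step I expect to carry the weight of the lemma, is ruling out spurious cross edges: I must show that no $a \in V$ and $b \in V'$ alternate in $w$. Replacing each $p$-block by $a$ and each $q$-block by $b$ turns $w_{\{a,b\}}$ into the pattern $abab\cdots ab\,b\,a\,ba\cdots ba$, in which the two consecutive $q$-blocks $q_{k-1}$ and $q_k$ (adjacent in $w$ with no intervening $p$-block) produce two consecutive $b$'s. This single defect destroys alternation for every such pair, so no cross edge appears. This is exactly where the hypothesis $k \ge 2$ enters — it guarantees that $q_{k-1}$ exists and sits immediately before $q_k$ — and it explains the otherwise peculiar placement of the block $q_k p_k$ at the seam of the construction. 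Combining the three cases shows that $w$ represents $G \cup H$, completing the proof.
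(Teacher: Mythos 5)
Your proof is correct. Note that the paper itself gives no proof of this lemma --- it is quoted verbatim from Broere's thesis \cite{Broere18} --- so there is nothing to compare against line by line; your direct verification is the natural argument, and it handles all the essential points: the grouping of $w$ into $l$ permutations of $V \cup V'$, the reduction of alternation to ``all two-letter blocks agree,'' the observation that repeating the final block $p_k$ preserves this criterion for pairs inside $V$, and, crucially, the seam $q_{k-1}q_k$ (which exists precisely because $k \ge 2$) that kills alternation for every cross pair $a \in V$, $b \in V'$.
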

	
	\begin{corollary}
		Suppose $\mathcal{R}^p(G)=k$ and $\mathcal{R}^p(H)=l$. Then, $\mathcal{R}^p(G \cup H)= \max \{k,l\}$.
	\end{corollary}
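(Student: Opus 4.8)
The plan is to prove the two inequalities $\mathcal{R}^p(G \cup H) \le \max\{k,l\}$ and $\mathcal{R}^p(G \cup H) \ge \max\{k,l\}$ separately; combining them yields the claimed equality. Throughout I would assume without loss of generality that $k \le l$, which is harmless since both the \textit{prn} and the disjoint union are symmetric in $G$ and $H$, and then $\max\{k,l\} = l$.

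For the upper bound I would invoke Lemma~\ref{disconnected} directly. Since $\mathcal{R}^p(G) = k$ and $\mathcal{R}^p(H) = l$, there are words $p_1 p_2 \cdots p_k$ and $q_1 q_2 \cdots q_l$ representing $G$ and $H$ permutationally. Because $V \cap V' = \emptyset$, the juxtaposition of a permutation of $V$ with a permutation of $V'$ is a permutation of $V \cup V'$. Reading the word produced by the lemma, namely $p_1 q_1\, p_2 q_2 \cdots p_{k-1} q_{k-1}\, q_k p_k\, q_{k+1} p_k \cdots q_l p_k$, and grouping its factors into blocks each of which uses every symbol of $V \cup V'$ exactly once, I count $(k-1) + 1 + (l-k) = l$ permutations. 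Since the lemma already guarantees that this word represents $G \cup H$, it follows that $G \cup H$ is permutationally $l$-representable, whence $\mathcal{R}^p(G \cup H) \le l = \max\{k,l\}$.

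For the lower bound the key observation is that the \textit{prn} does not increase under passing to an induced subgraph. Concretely, suppose $G \cup H$ is permutationally represented by $r_1 r_2 \cdots r_m$. Restricting each permutation $r_i$ to the vertex set $V'$ yields permutations $r_1|_{V'} \cdots r_m|_{V'}$ of $V'$, and two vertices $a, b \in V'$ alternate in this restricted word if and only if they alternate in $r_1 \cdots r_m$, since the restriction to $\{a,b\}$ is unchanged. As $V \cap V' = \emptyset$ and $E \subseteq V \times V$, the edges of $G \cup H$ lying within $V'$ are exactly those of $H$, so $H$ is the subgraph of $G \cup H$ induced on $V'$ and the restricted word represents $H$. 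Hence $m \ge \mathcal{R}^p(H) = l$, and the same argument applied to $V$ gives $m \ge k$. Taking $m = \mathcal{R}^p(G \cup H)$ yields $\mathcal{R}^p(G \cup H) \ge \max\{k,l\} = l$, and combining with the upper bound completes the proof.

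The main point needing care is the boundary hypothesis $2 \le k$ of Lemma~\ref{disconnected}: when $k = 1$ (that is, $G$ is complete) the explicit construction has to be adjusted slightly, and the degenerate case $k = l = 1$ should be excluded outright, since a disjoint union of two complete graphs has \textit{prn} $2$ rather than $1$. Away from this corner case the restriction argument underlying the lower bound is entirely routine, and I expect no genuine obstacle.
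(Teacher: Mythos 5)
Your proof is correct and takes essentially the same route the paper intends: the upper bound is read off from Lemma~\ref{disconnected} (whose word, as you count, consists of exactly $\max\{k,l\}$ permutations of $V \cup V'$), and the lower bound follows by restricting any permutational representation of $G \cup H$ to each vertex class, since $G$ and $H$ are induced subgraphs and alternation of pairs is preserved under taking subwords. Your caveat about the degenerate case $k=l=1$ is also well taken: the statement literally fails for the disjoint union of two complete graphs (which has \textit{prn} $2$), so the corollary must be read as inheriting the hypothesis $2 \le k \le l$ from Lemma~\ref{disconnected}.
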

	
	Accordingly, it is sufficient to focus on determining the \textit{prn} of connected bipartite graphs. In this paper, unless specified otherwise, a bipartite graph with bipartition $\{A, B\}$, written  $G = (A \cup B, E)$, is a connected graph. 

\section{Comparability Graphs: Dimension vs \textit{prn}}
\label{prn=dim}

In this section, we reconcile the results that reveal the connection between the \textit{prn} of a comparability graph and the dimension of its induced poset, and observe that they are the same. Accordingly, we present the upper bounds for the \textit{prn} of comparability graphs, particularly for bipartite graphs. First, we recall the following result. 

\begin{lemma}[\cite{halldorsson11}] 
	If $G$ is a comparability graph, then $\mathcal{R}^p(G) \le k$ if and only if $\dim(P_G) \le k$.
\end{lemma}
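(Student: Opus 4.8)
The plan is to reduce both quantities to a single combinatorial condition on relative orders, exploiting the tight correspondence between permutations and linear orders. The crucial preliminary observation is this: if $w = p_1 p_2 \cdots p_k$ is a concatenation of permutations of the vertex set $V$, then two vertices $a$ and $b$ alternate in $w$ if and only if every $p_i$ lists $a$ and $b$ in the same relative order. Indeed, the subword $w_{\{a,b\}}$ is a concatenation of $k$ blocks, each equal to $ab$ or $ba$ according to the order in $p_i$; at a junction of two consecutive blocks a repeated letter appears precisely when the blocks are of opposite type, so $w_{\{a,b\}}$ is alternating exactly when all blocks coincide. Reading each permutation $p_i$ as a linear order $L_i$ on $V$, this says that $a$ and $b$ alternate in $w$ if and only if the orders $L_1, \ldots, L_k$ agree on the pair $\{a,b\}$, equivalently if and only if $\{a,b\}$ is comparable in the intersection poset $\bigcap_{i=1}^k L_i$.

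For the direction $\dim(P_G) \le k \Rightarrow \mathcal{R}^p(G) \le k$, I would start from a realizer $\{L_1, \ldots, L_k\}$ of $P_G$ and let $p_i$ be the permutation listing the vertices in the order $L_i$. If $a$ and $b$ are adjacent in $G$, they are comparable in $P_G$, so all $L_i$ agree on them and, by the observation above, they alternate in $p_1 \cdots p_k$. If $a$ and $b$ are non-adjacent, they are incomparable in $P_G$, so by the defining property of a realizer some $L_i$ orders $a$ before $b$ while some $L_j$ orders $b$ before $a$; the orders disagree and the two vertices do not alternate. Hence $p_1 \cdots p_k$ represents $G$, giving $\mathcal{R}^p(G) \le k$.

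For the converse, I would take a representation $w = p_1 \cdots p_k$ of $G$, read each $p_i$ as a linear order $L_i$, and set $Q = \bigcap_{i=1}^k L_i$. By the observation, two vertices are comparable in $Q$ if and only if they alternate in $w$, that is, if and only if they are adjacent in $G$; thus $Q$ is a transitive orientation of $G$ whose comparability graph is $G$, and $\{L_1, \ldots, L_k\}$ is a realizer of $Q$, so $\dim(Q) \le k$. The subtle point, and the step I expect to be the main obstacle, is that the orientation $Q$ recovered from an arbitrary representation need not coincide with the chosen orientation $P_G$, so one cannot immediately conclude $\dim(P_G) \le k$. Bridging this gap requires the fact that the dimension is a \emph{comparability invariant}: any two posets sharing the comparability graph $G$ have equal dimension, whence $\dim(P_G) = \dim(Q) \le k$ and the argument closes. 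Establishing or citing this invariance is the only non-routine ingredient; everything else is the block bookkeeping of the first paragraph.
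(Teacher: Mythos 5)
Your proposal is correct and follows essentially the same route as the paper: the paper's justification is exactly the correspondence between realizers of $P_G$ and permutation representations of $G$ that it cites from \cite{halldorsson11}, which is what your block-alternation observation and the two translations (realizer $\to$ representation, representation $\to$ realizer of the intersection order $Q$) make explicit. The comparability-invariance fact you invoke to bridge $Q$ and $P_G$ is likewise the result the paper quotes right after the lemma (its citation \cite{trotter_dimension}), so your argument matches the paper's treatment, merely with the details written out.
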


In fact, it was shown in \cite{halldorsson11} that every realizer of $P_G$ is precisely a set of permutations of vertices of $G$ whose concatenation represents $G$ permutationally. Hence, considering such a minimal set on either side, we can deduce the following corollary.  

\begin{corollary}
	If $G$ is a comparability graph, then $\mathcal{R}^p(G) = k$ if and only if $\dim(P_G) = k$.
\end{corollary}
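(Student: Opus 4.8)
The plan is to derive the corollary directly from the preceding lemma, which asserts the equivalence $\mathcal{R}^p(G) \le k \iff \dim(P_G) \le k$ for every positive integer $k$. Since both $\mathcal{R}^p(G)$ and $\dim(P_G)$ are defined as minima (the smallest $k$ for which a permutational $k$-representation, respectively a realizer of size $k$, exists), the claim is essentially that two quantities characterized by the same family of threshold inequalities must be equal.

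First I would fix the comparability graph $G$ and let $a = \mathcal{R}^p(G)$ and $b = \dim(P_G)$; the goal is to show $a = b$. I would prove this by showing the two "$\le$" inequalities separately, or more cleanly by a direct antisymmetry argument. To show $b \le a$: by definition of $\mathcal{R}^p$ we have $\mathcal{R}^p(G) \le a$, so by the lemma $\dim(P_G) \le a$, i.e. $b \le a$. Symmetrically, to show $a \le b$: by definition of $\dim$ we have $\dim(P_G) \le b$, so the lemma (used in the reverse direction) gives $\mathcal{R}^p(G) \le b$, i.e. $a \le b$. Combining the two yields $a = b$, which is exactly the statement $\mathcal{R}^p(G) = k \iff \dim(P_G) = k$.

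Alternatively, and perhaps more transparently for the reader, I would phrase it using the minimality remark the authors have already flagged: since every realizer of $P_G$ is precisely a set of permutations whose concatenation represents $G$, a minimal realizer and a minimal permutational representation correspond under this bijection, forcing their sizes to coincide. This is the reasoning hinted at in the sentence preceding the corollary, so the proof is really just making that bijective correspondence between minimal objects explicit.

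I do not expect any genuine obstacle here: the corollary is a formal consequence of the lemma, and the only thing to be careful about is applying the lemma in both directions at the correct threshold value. The one subtlety worth a sentence is ensuring that the equivalence in the lemma holds for the specific $k$ equal to each of the two extremal values, which is immediate since the lemma is stated for all $k$.
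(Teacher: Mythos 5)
Your proposal is correct, and in fact it contains the paper's own argument as your ``alternative'' phrasing: the paper justifies the corollary in one sentence by invoking the bijective correspondence between realizers of $P_G$ and sets of permutations representing $G$ permutationally, and then ``considering such a minimal set on either side.'' Your primary argument is slightly different in flavor: rather than appealing to that correspondence, you treat the preceding lemma as a black box and run the threshold argument --- setting $a = \mathcal{R}^p(G)$ and $b = \dim(P_G)$, apply the lemma at $k=a$ to get $b \le a$ and at $k=b$ to get $a \le b$. This is a clean and fully rigorous derivation that needs nothing beyond the lemma's statement, whereas the paper's route exposes the underlying structural reason (minimal realizers and minimal permutational representations are the same objects under the correspondence). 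Both are valid; yours is more self-contained, the paper's more informative about why the lemma holds with equality. No gaps.
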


It was shown in \cite{trotter_dimension} that posets which have the same underlying comparability graph have the same dimension. The posets of dimension at most two can be recognized in linear time (see \cite{ross97}), and these posets were characterized in terms of their underlying comparability graphs, known as permutation graphs \cite{Baker}. The class of permutation graphs is precisely the intersection of comparability and co-comparability graphs\footnote{A co-comparability graph is a graph whose complement is a comparability graph.} \cite{dushnik1941partially}. Moreover, permutation graphs were also characterized in terms of forbidden induced subgraphs in \cite{Gallai} and forbidden Seidel minors in \cite{Vincent}. 

For $k \ge 3$, the problem of recognizing the posets of dimension $k$ is NP-complete. While there were partial results on characterizing these classes, there is no complete characterization in the literature. However, certain upper bounds on the dimension of a poset are known. For instance, it was shown in \cite{hiraguti_55} that  $\dim(P)  \le \lfloor \frac{|P|}{2} \rfloor$, for a poset $P$ with $|P| \ge 4$. In \cite{Dilworth50}, it was proved that $\dim(P)  \le w(P)$, for any poset $P$. In addition, considering the results in \cite{kimble_extermal,trotter_inequalities}, we state the following:
\begin{theorem}[\cite{Dilworth50,kimble_extermal,trotter_inequalities}]
	Let $P$ be a poset with $n$ elements. Then, $\dim(P) \le \min\{w(P), n-w(P)\}$.
\end{theorem}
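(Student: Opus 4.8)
\noindent\emph{Sketch of the intended argument.} The plan is to prove the two upper bounds $\dim(P) \le w(P)$ and $\dim(P) \le n - w(P)$ separately and then read off the claim by taking the smaller of the two; this refines the Hiraguchi-type estimate $\dim(P) \le \lfloor n/2 \rfloor$ recalled above, since $\min\{w, n-w\} \le \lfloor n/2 \rfloor$ for every value of $w$. The first bound is not new: it is exactly Dilworth's inequality \cite{Dilworth50} stated earlier, and its proof is worth recalling only because the realizer it produces is, by the corollary of Section~\ref{prn=dim}, precisely a family of permutations witnessing the \textit{prn}. Concretely, I would fix a minimum chain cover $C_1, \dots, C_w$ of $P$ (available by Dilworth's Theorem) and, for each $i$, build a linear extension $L_i$ by forcing every element of $C_i$ below each element incomparable to it; the only thing to check is that this forcing is acyclic, after which $\{L_1, \dots, L_w\}$ is a realizer, because an incomparable pair $x \parallel y$ with $x \in C_i$ and $y \in C_j$ is reversed by $L_i$ and by $L_j$, so the intersection of the $L_i$ equals $P$.

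The real content is the second bound, $\dim(P) \le n - w(P)$, which I would draw from the extremal analyses of \cite{kimble_extermal,trotter_inequalities}. I would begin from a maximum antichain $A \subseteq P$ with $|A| = w$, and first isolate the structural facts that follow from maximality: no element of $P \setminus A$ is incomparable to all of $A$ (else $A$ could be enlarged), and no such element is simultaneously strictly below one member of $A$ and strictly above another (else two members of $A$ would be comparable). Hence $P \setminus A$ partitions into the points lying below $A$ and those lying above it, a total of $n - w$ points, and the goal is to assemble a realizer whose size is governed by this count rather than by the width $w$. The natural route is a removal argument in the spirit of Hiraguchi: delete the $n - w$ points of $P \setminus A$ one at a time, using that each single deletion lowers the dimension by at most one, with the surviving antichain as the base case.

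The delicate step, and the one I expect to be the main obstacle, is the accounting. A naive induction bottoms out at an antichain of dimension two and therefore only yields $\dim(P) \le (n - w) + 2$, which is two more than claimed; the substance of the bound is to show that the two linear extensions realizing the base antichain can be reused as the outside points are reinserted, so that the additive constant is absorbed. This tightening is exactly what \cite{kimble_extermal,trotter_inequalities} supply, and I would invoke their construction rather than reconstruct it here. Combining $\dim(P) \le w(P)$ with $\dim(P) \le n - w(P)$ then closes the argument by taking the minimum.
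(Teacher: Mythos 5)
The paper offers no proof of this statement at all: it is quoted from the literature, with \cite{Dilworth50} supplying $\dim(P)\le w(P)$ and \cite{kimble_extermal,trotter_inequalities} supplying the bound in terms of $n-w(P)$. So there is no in-paper argument to compare against, and your proposal has to stand on its own. Its first half does: the chain-cover realizer (one linear extension per chain of a minimum chain cover, forcing each element of $C_i$ below everything incomparable to it, with acyclicity of the forcing as the one point to verify) is a complete and correct proof of $\dim(P)\le w(P)$, and your observation that an incomparable pair $x\parallel y$ with $x\in C_i$, $y\in C_j$ is put in opposite orders by $L_i$ and $L_j$ is exactly why the intersection equals $P$.

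The second half, however, is where the content of the theorem lies, and there your proposal stops short. Your structural facts about a maximum antichain $A$ are correct (no point of $P\setminus A$ is incomparable to all of $A$, and none is both strictly above one member of $A$ and strictly below another), and your diagnosis of the naive removal argument is also correct: iterating $\dim(P)\le \dim(P\setminus\{x\})+1$ over the $n-w$ outside points bottoms out at $\dim(A)=2$ and yields only $(n-w)+2$. But at precisely that point you invoke \cite{kimble_extermal,trotter_inequalities} instead of closing the gap, so the hard inequality is cited, not proved. For the record, those references do not proceed by reinsertion: they directly construct $\max\{2,\,n-w\}$ linear extensions indexed by the points of $P\setminus A$, folding the two opposite orderings of $A$ into two of these extensions; this is why at least two extensions are needed and why the absorption of the additive constant cannot be complete. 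Indeed, an antichain ($n=w$, $\dim=2>0=n-w$) shows the second bound is really $\dim(P)\le\max\{2,\,n-w(P)\}$, an edge case that the paper's statement and your sketch both pass over. In short: your proposal is no less complete than the paper's own treatment (which is pure citation), but judged as a proof it is only a proof of the easy half.
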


In the context of bipartite graphs, we deduce the following corollary.

\begin{corollary} \label{known_poset}
	For a bipartite graph $G = (A \cup B, E)$, $\mathcal{R}^p(G) \le \min\{|A|,|B|\}$.
\end{corollary}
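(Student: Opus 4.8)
The plan is to chain together the two results immediately preceding the statement. Since a bipartite graph is a comparability graph, the poset $P_G$ induced by (say) the transitive orientation directing every edge from $A$ to $B$ is well defined, and by the Corollary to the Lemma of \cite{halldorsson11} we have $\mathcal{R}^p(G) = \dim(P_G)$. It therefore suffices to prove the purely order-theoretic inequality $\dim(P_G) \le \min\{|A|,|B|\}$, for which I would invoke the Theorem of \cite{Dilworth50,kimble_extermal,trotter_inequalities}: writing $n = |A| + |B| = |P_G|$, this gives $\dim(P_G) \le \min\{w(P_G),\, n - w(P_G)\}$.

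The crux of the argument is to control the second term $n - w(P_G)$ rather than $w(P_G)$ itself. The key observation I would record is that an antichain of $P_G$ is exactly an independent set of $G$: two elements are comparable in $P_G$ precisely when they are adjacent in $G$, so pairwise-incomparable sets coincide with pairwise-nonadjacent sets. Consequently $w(P_G) = \alpha(G)$, the independence number. Because $G$ is bipartite with parts $A$ and $B$, each of $A$ and $B$ is itself an independent set, whence $w(P_G) = \alpha(G) \ge \max\{|A|,|B|\}$.

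Substituting this into the complementary term yields
\[
n - w(P_G) \;\le\; (|A| + |B|) - \max\{|A|,|B|\} \;=\; \min\{|A|,|B|\},
\]
and therefore $\dim(P_G) \le \min\{w(P_G),\, n - w(P_G)\} \le n - w(P_G) \le \min\{|A|,|B|\}$. Combining this with $\mathcal{R}^p(G) = \dim(P_G)$ gives the claimed bound.

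The step I expect to be the main conceptual hurdle is precisely the choice of which term of $\min\{w(P_G),\, n - w(P_G)\}$ to estimate. A direct attempt to bound $w(P_G)$ by $\min\{|A|,|B|\}$ is doomed, since the width $\alpha(G)$ is typically \emph{large}; the trick is to exploit exactly that largeness, turning the big width into a small co-width $n - w(P_G)$. Everything else is routine: identifying antichains with independent sets is immediate from the definition of $P_G$, and the fact that the parts of a bipartite graph are independent sets is definitional.
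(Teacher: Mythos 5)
Your proof is correct and is essentially the paper's own deduction: the corollary is obtained from the preceding theorem $\dim(P) \le \min\{w(P),\, n - w(P)\}$ together with $\mathcal{R}^p(G) = \dim(P_G)$, exploiting exactly the fact that $A$ and $B$ are antichains of $P_G$ (comparability in $P_G$ being adjacency in $G$), so that $w(P_G) \ge \max\{|A|,|B|\}$ and hence $n - w(P_G) \le \min\{|A|,|B|\}$. Your observation that the bound must come from the co-width term rather than the width term is precisely the point of the paper's deduction.
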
 

An improved upper bound on the \textit{prn} of bipartite graphs was provided as per the following result.

\begin{theorem}[\cite{Broere18}]\label{known_result}
	For a bipartite graph $G = (A \cup B, E)$, we have \[\mathcal{R}^p(G) \le \min \{\alpha, \beta\},\] where
	$\alpha = |\{N(a)|a\in A\}|$ and $\beta = |\{N(b)|b\in B\}|$.
\end{theorem}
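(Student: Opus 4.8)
The plan is to use the identification established above that $\mathcal{R}^p(G) = \dim(P_G)$, together with the fact recalled from \cite{halldorsson11} that a realizer of $P_G$ is exactly a family of permutations of $A\cup B$ whose concatenation represents $G$. Thus it suffices to produce a realizer of $P_G$ of size $\alpha$ and, by a symmetric argument, one of size $\beta$; the bound $\min\{\alpha,\beta\}$ then follows. First I orient every edge from $A$ to $B$. Since a directed path of length two would require an edge leaving a $B$-vertex, none exists and the orientation is vacuously transitive, so $P_G$ is the height-two poset with $A$ minimal and $B$ maximal in which $a<b$ iff $\overline{ab}\in E$.

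Next I partition $A$ by neighborhoods: let $N_1,\dots,N_\alpha$ be the distinct sets $N(a)$ and set $A_i=\{a\in A : N(a)=N_i\}$, giving $\alpha$ blocks. For each $i$ I build a linear extension $L_i$ of $P_G$ of the four-block form $L_i = [\,A\setminus A_i\,]\,[\,B\setminus N_i\,]\,[\,A_i\,]\,[\,N_i\,]$, with the internal order inside each bracket to be fixed later. The routine check is that every $L_i$ respects the order: an edge $\overline{ab}$ with $a\in A_i$ has its endpoint $b\in N_i$ in the last block, after $a$; an edge with $a\notin A_i$ has $a$ in the first block, before all of $B$. Hence every edge is oriented consistently in all $L_i$.

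The heart of the argument is that these extensions reverse all incomparable pairs. A cross non-edge $(a,b)$ with $a\in A_i$ satisfies $b\in B\setminus N_i$, so in $L_i$ the vertex $b$ (second block) precedes $a$ (third block); thus every cross non-edge is reversed in the extension indexed by the class of its $A$-endpoint. Two $A$-vertices in different classes are automatically reversed between the two corresponding extensions, while two of the same class always remain together in one block, so one pair of opposite internal $A$-orders (available since $\alpha\ge 2$) reverses all of them. The step I expect to be the main obstacle is the reversal of two $B$-vertices $b,b'$: their relative order in $L_i$ is governed by whether each lies in $N_i$, and when the membership pattern is nested—every class adjacent to $b$ is also adjacent to $b'$—one direction never arises automatically and must be supplied by the internal $B$-order in a block where $b$ and $b'$ coincide. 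Here I would invoke connectivity of $G$, which forces $\bigcup_i N_i = B$ so that no $B$-vertex is isolated, to guarantee such a coinciding block exists, and then fix the internal $B$-orders to realize the missing direction; doing this bookkeeping consistently across all pairs at once is the delicate point.

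Finally, orienting every edge from $B$ to $A$ gives another transitive orientation of the \emph{same} comparability graph $G$; by the cited fact \cite{trotter_dimension} that posets sharing a comparability graph have equal dimension, the analogous construction—now partitioning $B$ by neighborhoods—produces a realizer of size $\beta$. Combining the two yields $\mathcal{R}^p(G)\le\min\{\alpha,\beta\}$. The degenerate case $\min\{\alpha,\beta\}=1$ forces, by connectivity, $G$ to be complete bipartite, which is treated directly as the standard exceptional case; I would dispose of it at the outset.
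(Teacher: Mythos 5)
The paper does not actually prove this statement --- it imports it from Broere's thesis \cite{Broere18} --- so your proposal stands or falls on its own, and it falls short at exactly the point you yourself flag as delicate. Your block construction $L_i=[\,A\setminus A_i\,][\,B\setminus N_i\,][\,A_i\,][\,N_i\,]$ is indeed a family of linear extensions of $P_G$, and your treatment of cross non-edges and of pairs inside $A$ is correct (both use $\alpha\ge 2$). But for pairs $b,b'\in B$ with $N(b)\subseteq N(b')$, the order ``$b'$ before $b$'' can only come from the internal orders of the $B$-blocks, and choosing these internal orders consistently is not routine bookkeeping: it is where the content of the theorem sits, and your proposal leaves it open. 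Naive choices genuinely fail: one fixed internal order used in every $L_i$ never reverses pairs with $N(b)=N(b')$ (such pairs coincide in a block of every extension), while ad hoc reversals can destroy forced pairs --- already for the path on four vertices with edges $\overline{a_1b'},\overline{a_2b'},\overline{a_2b}$ (so $N_1=\{b'\}$, $N_2=\{b,b'\}$), the pair $b,b'$ coincides only in the last block of $L_2$, so that block \emph{must} read $b'b$. A choice that does work is: fix a linear order $\tau$ on $B$ extending reverse neighborhood-containment (if $N(b)\subsetneq N(b')$ then $b'<_\tau b$), use $\tau$ inside both $B$-blocks of every $L_i$, and in exactly one extension additionally reverse $\tau$ within each class of equal neighborhoods. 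Then every strictly nested pair gets its missing reversal in every coinciding block, every equal-neighborhood pair gets both orders, and nothing else is disturbed. This reverse-containment device is precisely the neighborhood-poset idea ($P_B$, comparability by $N(b)\subseteq N(b')$) that the paper develops in Section \ref{poly_proc} and exploits in Case 3 of the proof of Theorem \ref{k-rep}; without it (or something equivalent) your realizer is not known to exist.

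Separately, your dismissal of the degenerate case is not available to you: if $\min\{\alpha,\beta\}=1$ and $G$ is connected with at least three vertices, then $G$ is complete bipartite, and $\mathcal{R}^p(K_{m,n})=2>1$, so the stated inequality is in fact \emph{false} there --- no argument can ``dispose of'' that case. Any correct proof, including the one sketched above, requires $\alpha,\beta\ge 2$ (equivalently, $G$ not complete bipartite); the honest move is to record this as an exception to the statement as transcribed, not to claim it can be treated directly.
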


In the literature, better upper bounds for dimension were obtained for specific types of posets, whose cover graphs are planar. A poset is said to be planar (or outerplanar) if its Hasse diagram is a planar (or outerplanar\footnote{A planar graph is called outerplanar if all its vertices lie on the unbounded face of a planar drawing.}) graph. For example, in \cite{trotter1977dimension}, it was shown that if the cover graph of a poset $P$ is a tree, then $\dim(P) \le 3.$ Also, if a planar poset $P$ has a maximum element, then $\dim(P) \le 3$.  Generalizing these classes, in \cite{felsner2015dimension}, it was proved that if the cover graph of a poset $P$ is outerplanar, then $\dim(P) \le 4$. Further, if the underlying comparability graph of a poset $P$ is planar, then 
$\dim(P) \le 4$.

Specific to bipartite graphs, in \cite{felsner2010adjacency}, it was proved that $\mathcal{R}^p(G) \le 4$ if a bipartite graph $G$ is planar. Further, if $G$ is outerplanar, then $\mathcal{R}^p(G) \le 3$ \cite{felsner2015dimension}. Without a restriction on planarity, it was shown in \cite{chaplick2018grid} that if a bipartite graph is a grid intersection graph\footnote{A grid intersection graph is an intersection graph of horizontal and vertical segments in the plane.}, then $\mathcal{R}^p(G) \le 4$.

Among the subclasses of bipartite graphs, in addition to crown graphs, the \textit{prn} of generalized crown graphs was obtained in \cite{trotter_74}. A generalized crown graph $S_n^k$ is a bipartite graph obtained by removing $(k+1)$ perfect matchings from the complete bipartite graph $K_{n+k,n+k}$. For $n \ge 3$ and $k \ge 0$, $\mathcal{R}^p(S_n^k) =  \Bigl \lceil \frac{2(n+1)k}{(k+2)}\Bigr \rceil$.
		
\section{General Properties of Permutations}
\label{gen_prop}

In this section, we prove some general properties of permutations used in representing a bipartite graph. These properties will play a vital role in constructing the respective permutations. Let $G = (A \cup B, E)$ be a bipartite graph represented by $p_1 \cdots p_k$, where $p_i$ ($1 \le i \le k$) is a permutation of the vertices of $G$.  

First,  we show that a vertex adjacent to two vertices will not appear in between them in any permutation $p_i$.

\begin{lemma}\label{lemma_1}
	If $\overline{ab}, \overline{cb} \in E$, then $abc \not \le p_i$, for all $1\le i \le k$. 
\end{lemma}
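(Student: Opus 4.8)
The plan is to argue by contradiction, exploiting the fact that each vertex occurs exactly once in every permutation $p_j$. First I would record the structural observation that $a$ and $c$ cannot be adjacent: since $\overline{ab}, \overline{cb} \in E$ and $G$ is bipartite, the vertex $b$ lies in one part of the bipartition while both $a$ and $c$ lie in the other part, so no edge can join them and $\overline{ac} \notin E$. Consequently $a$ and $c$ do \emph{not} alternate in $p_1 \cdots p_k$, and this is the fact I will eventually contradict.

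The key auxiliary observation I would establish is that, in a concatenation of permutations, two vertices $x$ and $y$ alternate in $p_1 \cdots p_k$ if and only if they occur in the same relative order in every $p_j$. This holds because each $p_j$ contributes exactly one $x$ and one $y$, so $(p_1 \cdots p_k)_{\{x,y\}}$ is the concatenation, read off from $p_1, \ldots, p_k$ in turn, of the two-letter blocks $xy$ or $yx$; such a concatenation is alternating precisely when all of these blocks coincide, since any two consecutive blocks of opposite type ($xy$ then $yx$, or $yx$ then $xy$) produce a repeated letter and break the alternation. With this in hand the core argument is short. Since $\overline{ab} \in E$, the pair $a, b$ alternates, so $a$ and $b$ have a fixed relative order across all permutations; likewise $\overline{cb} \in E$ fixes the relative order of $b$ and $c$. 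Now suppose for contradiction that $abc \le p_i$ for some $i$. Then $a$ precedes $b$ and $b$ precedes $c$ in $p_i$, and by the consistency just noted, $a$ precedes $b$ in every $p_j$ and $b$ precedes $c$ in every $p_j$. Hence $a$ precedes $c$ in every $p_j$, so by the auxiliary observation $a$ and $c$ alternate in $p_1 \cdots p_k$, contradicting $\overline{ac} \notin E$. The degenerate cases $a = c$, $a = b$, or $c = b$ need not be treated separately, since $G$ is simple and in any of them $abc$ fails to embed as a subword of a permutation at all.

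The conceptual heart of the proof, and the only step requiring care, is this auxiliary observation linking alternation across a word of permutations to a uniform relative order within each permutation; everything else is bookkeeping around the bipartite structure. I expect no real obstacle beyond stating that observation cleanly, as it is a direct consequence of each letter appearing exactly once per permutation, and I anticipate reusing it repeatedly in the subsequent properties of Section~\ref{gen_prop}.
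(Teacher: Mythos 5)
Your proposal is correct and follows essentially the same route as the paper's proof: both argue by contradiction from the bipartite fact that $a$ and $c$ are non-adjacent, together with the observation (implicit in the paper, made explicit by you) that an adjacent pair must keep the same relative order in every permutation $p_j$. The only cosmetic difference is where the contradiction lands — you conclude $a$ and $c$ would alternate, contradicting their non-adjacency, while the paper picks a permutation $p_s$ with $ca \le p_s$ and contradicts the alternation of $b$ and $c$; the ingredients are identical.
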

\begin{proof}
	Suppose  $abc  \le p_t$ for some $t$. By the choice of $a$ and $c$, they are not adjacent in the bipartite graph $G$. Hence, as $ac \le p_t$, there exists a permutation $p_s$ such that $ca \le p_s$. Further, since $\overline{ab} \in E$ and $ab \le p_t$, we have $ab \le p_i$, for all $1 \le i \le k$. Consequently, $cb \le p_s$. This is not possible, as $\overline{cb} \in E$ and $bc \le p_t$. Hence, $abc \not\le p_i$, for all $1\le i \le k$. \qed
\end{proof}

As a consequence of Lemma \ref{lemma_1}, we observe that if any vertex adjacent to a vertex $a$ appears on the right side (or left side) of $a$ in a permutation $p_i$, then every other vertex adjacent to $a$ also appears on the right side (respectively, left side) of $a$ in $p_i$.   

\begin{lemma}\label{corollary_1}
	For $\overline{ab} \in E$, if  $ab \le p_i$, then $ab' \le p_i$, for all $b' \in N(a)$ and $1 \le i \le k$.
\end{lemma}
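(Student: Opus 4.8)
The plan is to derive this directly from Lemma \ref{lemma_1} by a short argument by contradiction. First I would fix the index $i$ together with the edge $\overline{ab}$ satisfying $ab \le p_i$, and take an arbitrary $b' \in N(a)$; the goal is to show that $a$ likewise precedes $b'$ in $p_i$, i.e.\ that $ab' \le p_i$. The key structural observation is that $a$ is a common neighbor of both $b$ and $b'$: since $b, b' \in N(a)$ we have $\overline{ab}, \overline{ab'} \in E$. This is exactly the configuration governed by Lemma \ref{lemma_1}, with $a$ playing the role of the vertex that is forbidden from appearing between two of its neighbors.

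Next I would suppose, toward a contradiction, that $ab' \not\le p_i$. Since $p_i$ is a permutation of the vertex set, each pair of vertices occurs in precisely one order, so the failure of $ab' \le p_i$ forces $b'a \le p_i$. Combining this with the hypothesis $ab \le p_i$, the three vertices occur in $p_i$ in the order $b', a, b$; that is, $b'ab \le p_i$. Applying Lemma \ref{lemma_1} with $a$ as the common neighbor of $b'$ and $b$ yields $b'ab \not\le p_i$, which is the desired contradiction. Hence $ab' \le p_i$, and since $b'$ was an arbitrary element of $N(a)$, the conclusion holds for every neighbor of $a$ and for each admissible $i$.

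The argument is brief and carries no serious technical obstacle; the only point demanding care is the bookkeeping when invoking Lemma \ref{lemma_1}. One must correctly match its template ``$\overline{ab},\overline{cb} \in E \Rightarrow abc \not\le p_i$'' to the present setting by identifying the common neighbor (here $a$) as the middle letter of the forbidden pattern and the two endpoints (here $b'$ and $b$) as its outer letters, so that the required hypotheses become $\overline{ab'},\overline{ab} \in E$. Once this substitution is set up correctly the contradiction is immediate, and no feature of the bipartite structure beyond ``$b$ and $b'$ are both neighbors of $a$'' is needed.
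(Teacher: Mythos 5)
Your proposal is correct and is essentially identical to the paper's own proof: both assume $b'a \le p_i$ for contradiction, observe that this together with $ab \le p_i$ yields $b'ab \le p_i$, and then invoke Lemma~\ref{lemma_1} (with $a$ as the common neighbor of $b'$ and $b$, i.e.\ the forbidden middle letter) to conclude. Your added care about matching the roles in Lemma~\ref{lemma_1}'s template is exactly the right bookkeeping, but it introduces no new idea beyond the paper's argument.
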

\begin{proof}
	For some $b' \in N(a)$, if $b'a \le p_i$, then $b'ab \le p_i$. But by Lemma \ref{lemma_1}, this is not possible. Hence, $ab' \le p_i$.   \qed
\end{proof}

We now prove a result on appearance of neighbors of two vertices, say $a$ and $c$, in a permutation, when they have common neighbors. If the neighbors of $a$ appear on the right side (or left side) of $a$ in a permutation $p_i$, then the neighbors of $c$ also appear on the right side (respectively, left side) of $c$ in any $p_i$.   

\begin{lemma}\label{lemma_2}
	Suppose $N(a) \cap N(c) \ne \varnothing$, for some vertices $a, c$ of $G$.  For $1 \le i \le k$ and $b \in N(a)$, if $ab \le p_i$, then $cb' \le p_i$, for all $b' \in N(c)$.
\end{lemma}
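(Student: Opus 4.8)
\subsection*{Proof proposal}

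The plan is to reduce the statement to the two preceding lemmas by concentrating on a single common neighbor. Since $N(a) \cap N(c) \ne \varnothing$, I would first fix some $d \in N(a) \cap N(c)$, so that $\overline{ad}, \overline{cd} \in E$. The target intermediate claim is that $cd \le p_i$. Once that is established, Lemma \ref{corollary_1} applied to the vertex $c$ and its neighbor $d$ immediately gives $cb' \le p_i$ for every $b' \in N(c)$, which is exactly the desired conclusion. So the whole problem collapses onto locating $d$ to the right of $c$ in $p_i$.

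To reach $cd \le p_i$ I would proceed in two moves. First, transport the hypothesis to the common neighbor: since $ab \le p_i$ with $b \in N(a)$, and $d \in N(a)$, Lemma \ref{corollary_1} yields $ad \le p_i$. Second, I would pin down the position of $d$ relative to $a$ and $c$ using Lemma \ref{lemma_1}. Because $d$ is a common neighbor of both $a$ and $c$, Lemma \ref{lemma_1} (applied once with the triple $a,d,c$ and once with $c,d,a$, the common neighbor $d$ playing the middle role in each) forbids both $adc \le p_i$ and $cda \le p_i$; informally, $d$ can never sit strictly between $a$ and $c$ in $p_i$. Combining this with $ad \le p_i$, which forces $a$ before $d$, rules out $d$ preceding $a$, and hence $d$ must follow both $a$ and $c$; in particular $cd \le p_i$.

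The only point requiring care is the case analysis on the relative order of $a$, $c$, $d$ in $p_i$: of the six orderings, Lemma \ref{lemma_1} eliminates the two with $d$ in the middle ($adc$ and $cda$), while $ad \le p_i$ eliminates the two that begin with $d$ ($dac$ and $dca$); the two survivors, $acd$ and $cad$, both satisfy $cd \le p_i$, which is all that is needed. I do not anticipate a serious obstacle beyond this bookkeeping. The one structural fact being used silently is that $a$ and $c$ are non-adjacent, since sharing the neighbor $d$ places them in the same part of the bipartition; this is precisely the hypothesis under which Lemma \ref{lemma_1} was proved, so invoking it as a black box is legitimate.
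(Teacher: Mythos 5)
Your proposal is correct and is essentially the paper's own proof: both fix a common neighbor ($d$, the paper's $b''$), use Lemma \ref{corollary_1} to obtain $ad \le p_i$, use Lemma \ref{lemma_1} to force $cd \le p_i$, and finish by applying Lemma \ref{corollary_1} to $c$ and $d$. The only cosmetic difference is that the paper reaches $cd \le p_i$ by a one-line contradiction (if $dc \le p_i$, then $adc \le p_i$, violating Lemma \ref{lemma_1}), whereas you enumerate the six relative orderings of $a$, $c$, $d$ and eliminate four of them.
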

\begin{proof}
	For $b \in N(a)$, suppose $ab \le p_i$. Let $b'' \in N(a) \cap N(c)$. By Lemma \ref{corollary_1}, $ab'' \le p_i$.  If $b''c \le p_i$, then $ab''c \le p_i$. This is not possible by Lemma \ref{lemma_1}. Hence, $cb'' \le p_i$. Again, by Lemma \ref{corollary_1}, $cb' \le p_i$, for all $b' \in N(c)$. \qed
\end{proof}

\section{Upper Bound for \textit{prn}}
\label{main}

In this main section, we present a polynomial-time procedure to obtain permutations of the vertices of a bipartite graph $G$ and provide an upper bound for the \textit{prn} of $G$.    

A graph is said to be a reduced graph if no two vertices have the same neighborhood. Every graph can be reduced by considering only the vertices with distinct neighborhoods and removing the remaining vertices. In \cite{Broere18}, it was shown that the \textit{prn} of a bipartite graph and the \textit{prn} of its reduced graph are the same by constructing their respective words. Accordingly, in what follows, we consider reduced bipartite graphs and discuss their \textit{prn}.

\subsection{Polynomial-time Procedure}
\label{poly_proc}

Let ${S} = \{a_{i_1}, a_{i_2}, \ldots, a_{i_k}\}$ be a set with indices $i_1 < i_2 < \cdots < i_k$. We write $\underline{S}$ to represent the word $a_{i_k} a_{i_{k - 1}} \cdots a_{i_1}$ in which the symbols of $S$ appear exactly once and are arranged in such a way that the indices are in decreasing order. Similarly, we write  $\overline{S}$ to represent the word $a_{i_1} a_{i_2} \cdots a_{i_k}$ in which the symbols of $S$ appear exactly once and are arranged such that the indices are in increasing order.

Let $G=(V, E)$ be a graph and $A \subseteq V$. We define a neighborhood graph of $G$ with respect to $A$ as the directed graph $\mathcal{N}_A(G) = (A, E')$, where $\overrightarrow{ab} \in E'$ if and only if $N(a) \subseteq N(b)$ for all $a, b \in A$. If $A = V$, we may simply call it as a neighborhood graph of $G$, and it is denoted by $\mathcal{N}(G)$.

\begin{remark}
	Since the set containment relation is transitive, the graph $\mathcal{N}_A(G)$ is a comparability graph, and hence, it induces a poset. In the rest of the paper, the poset $P_{\mathcal{N}_A(G)}$ induced by $\mathcal{N}_A(G)$ is simply denoted by $P_A$.
\end{remark}

Let $G = (A \cup B, E)$ be a bipartite graph. Consider a chain cover of $P_A$ as per the following:
\begin{center}
	$X_1: a_{11}<a_{12}<a_{13}<\cdots<a_{1n_1}$\\
	$X_2: a_{21}<a_{22}<a_{23}<\cdots<a_{2n_2}$\\
	$\vdots$\\
	$X_k: a_{k1}<a_{k2}<a_{k3}<\cdots<a_{kn_k}$\\
\end{center} 
Corresponding to each chain $X_i$, we create a permutation $p_i$ of vertices of $A \cup C$, where $C$ is a relabeled set of $B$ as described below. 

For a set of vertices $U$ and a vertex $a$, let $U_{-a} = U \setminus N(a)$, the set of non-adjacent vertices of $a$ in the set $U$. 

Consider one of the chains, say $X_1$, and construct a permutation, say $p_1'$, as per the following. In $X_1$, since $N(a_{1(i-1)}) \subsetneq N(a_{1i})$, we write the exclusive neighbors of $a_{1i}$, i.e., $N(a_{1i})_{-a_{1(i-1)}}$, on the right side of $a_{1i}$ and on the left side of $a_{1(i-1)}$. Then, we write the vertices in $B$ which are not adjacent to any of the vertices in $X_1$, i.e., $B_{-a_{1n_1}}$, on the left side of $a_{1n_1}$. Now we write the remaining vertices of $G$, i.e., the vertices of the chains $X_2, \ldots, X_k$, on the left side of $B_{-a_{1n_1}}$. Although the vertices of $N(a_{1i})_{-a_{1(i-1)}}$ and $B_{-a_{1n_1}}$ can be written in any order, we will write them in the increasing order as per their indices. Thus,
\[p_1' = x_1 \overline{B_{-a_{1n_1}}} a_{1n_1} \overline{N(a_{1n_1})_{-a_{1(n_1-1)}}} a_{1(n_1-1)} \cdots a_{12} \overline{N(a_{12})_{-a_{11}}} a_{11}\overline{N(a_{11})},\] where $x_1 = X_2^{^\text{o}}\cdots X_k^{^\text{o}}$. Here, for $1 \le j \le k$, $X_j^{^\text{o}} = a_{j1}a_{j2}\cdots a_{jn_j}$, the word obtained by writing the vertices of $X_j$ in the increasing order as per the ordering of $P_A$. 

Consider $p'_{1_B}$ the subword of $p_1'$ with the vertices of $B$. Relabel the word $p'_{1_B}$ as $c_1c_2\cdots c_m$ (where $m = |B|$) with indices increasing from left to right. Thus, $C = \{c_1, \ldots, c_m\}$ is the relabeled set of $B$. Now let $p_1$ be the permutation obtained from $p_1'$ by replacing the relabeled vertices from $C$ in place of corresponding vertices of $B$. 

For $i = 2, \ldots, k$, based on the chain $X_i$, we construct the permutation $p_i$ with vertices of $A \cup C$ by 
\[p_i = x_i \underline{C_{-a_{in_i}}} a_{in_i} \underline{N(a_{in_i})_{-a_{i(n_i-1)}}} a_{i(n_i-1)} \cdots a_{i2} \underline{N(a_{i2})_{-a_{i1}}} a_{i1}\underline{N(a_{i1})},\] where $x_i = X_1^{^\text{o}}\cdots X_{i-1}^{^\text{o}}X_{i+1}^{^\text{o}}\cdots X_k^{^\text{o}}$ and the neighborhoods are considered with the vertices of $C$. 

Finally, set $p_0 =  X_k^{^\text{o}}X_{k-1}^{^\text{o}}\cdots  X_1^{^\text{o}}c_mc_{m-1}\cdots c_1$.

A demonstration of this method for obtaining a word that represents a bipartite graph permutationally is given in Appendix \ref{demo}.

\subsubsection{Time Complexity}

 First, recall that obtaining chain cover of a poset can be done in $O(n^3)$ time \cite{Felsner03}. Further, note that the construction of each permutation $p_i$ ($1 \le i \le k$) involves finding the neighbors of each vertex in $X_i$ and arranging them in order. Assume the graph is maintained in the adjacency matrix. Note that relabeling of vertices of $B$ takes $O(n)$ time. Except the labels, since $B$ and $C$ are the same, we will use the set $B$ for measuring the complexity in the following.  Also, maintain an array of vertices of $B$ in the increasing order of indices and another array of vertices of $B$ in the decreasing order of indices. Note that we can prepare these arrays in $O(n^2)$ time. Every time, while constructing the permutation $p_i$ (from its right side), take a copy of arranged vertices of $B$ in $O(n)$ time, say $B'$. Find the neighbors of $a_{i1}$ in $O(n)$ time from $B'$ and arrange them in the same order for constructing $p_i$ while deleting them in $B'$. Then, repeat the same for each $a_{ij}$ ($1 < j \le n_i$) to construct the permutation $p_i$ in $O(n^2)$ time, which includes arranging the prefix $x_i$. The permutation $p_0$ can be easily constructed in $O(n^2)$ time. As the number of permutations is $O(n)$, the entire process of constructing the permutation will take $O(n^3)$ time. 
		
\subsection{Improved Upper Bound}

In continuation, we now prove the correctness of the procedure given in Section \ref{poly_proc}, and subsequently establish an upper bound for $\mathcal{R}^p(G)$. Further, we mention a case for which the \textit{prn} attains the upper bound.   

	\begin{theorem}\label{rep_G'}
		The word $p_0p_1 \cdots p_k$ permutationally represents the bipartite graph $G'= (A \cup C, E')$,  where $$E'=\{\overline{ac}\; |\; \overline{ab} \in E  \ \text{(for $a \in A$, $b \in B$), and $b$ is relabeled as $c \in C$}\}.$$ 
	\end{theorem}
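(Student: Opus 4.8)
The plan is to use the standard alternation criterion for a concatenation of permutations. Since $p_0p_1\cdots p_k$ is a concatenation of $k+1$ permutations of $A \cup C$, every vertex occurs exactly once in each block, so for any two vertices $u,v$ the restriction of the word to $\{u,v\}$ has the form $\sigma_0\sigma_1\cdots\sigma_k$ with each $\sigma_i \in \{uv, vu\}$. Inspecting the junction between two consecutive blocks, the last letter of $\sigma_i$ coincides with the first letter of $\sigma_{i+1}$ (producing a forbidden repetition) precisely when $\sigma_i \ne \sigma_{i+1}$; hence $u$ and $v$ alternate in $p_0\cdots p_k$ if and only if they occur in the same relative order in every block $p_i$. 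This reduces the theorem to comparing relative orders for three kinds of pairs: $A$--$A$ pairs and $C$--$C$ pairs (which must be \emph{inconsistent}, since $G'$ is bipartite), and $A$--$C$ pairs (which must be \emph{consistent} exactly when the edge lies in $E'$).

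First I would settle the $A$--$C$ pairs, which carry the real content. Fix $a = a_{ij} \in X_i$ and $c \in C$. In every block other than $p_i$ --- including $p_0$ --- the vertex $a$ sits in the all-$A$ prefix (the word $x_l$ for $l \ne i$, or the leading $X_k^{^\text{o}}\cdots X_1^{^\text{o}}$ of $p_0$), which is followed by all of $C$; thus $ac \le p_l$ for every $l \ne i$. Inside the chain block $p_i$ the layout is governed by the strict containment $N(a_{i1}) \subsetneq \cdots \subsetneq N(a_{in_i})$ (strict because $G$ is reduced, so distinct vertices have distinct neighbourhoods), together with the telescoping decomposition
\[ N(a_{ij}) = N(a_{i1}) \sqcup \bigsqcup_{l=2}^{j}\bigl(N(a_{il})\setminus N(a_{i(l-1)})\bigr). \]
Reading $p_i$ from left to right, precisely these blocks lie to the right of $a_{ij}$, while $C_{-a_{in_i}}$ together with the exclusive blocks of the higher chain elements lie to its left; hence $a_{ij}c \le p_i$ if and only if $c \in N(a_{ij})$. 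Combining the two observations, the relative order of $a$ and $c$ agrees across all blocks exactly when $c \in N(a)$, i.e. exactly when $\overline{ac}\in E'$.

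Next I would dispose of the same-part pairs. For two distinct $C$-vertices, the relabelling makes the $C$-subword of $p_1$ read $c_1c_2\cdots c_m$, whereas $p_0$ ends with $c_mc_{m-1}\cdots c_1$; thus any $c_s, c_t$ occur in opposite orders in $p_0$ and $p_1$ and so cannot alternate. For two distinct $A$-vertices I would split into cases. If they lie in one chain $X_i$, then $p_0$ lists them in increasing index order while the chain block $p_i$ lists $X_i$ in decreasing index order, giving opposite orders. If they lie in different chains $X_i, X_{i'}$ with $i<i'$, then $p_0$ (which orders the chains as $X_k,\ldots,X_1$) places $X_{i'}$ before $X_i$, whereas the chain block $p_{i'}$ places $a_{ij}$ in its prefix and $a_{i'j'}$ in its chain part, reversing this order; again the orders disagree. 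Hence no two $A$-vertices and no two $C$-vertices alternate, as required for a bipartite graph.

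The main obstacle is the $A$--$C$ analysis inside the single chain block $p_i$: one must verify carefully that the nested layout really places all of $N(a_{ij})$ strictly to the right of $a_{ij}$ and every non-neighbour strictly to its left, which is exactly where the strict chain containments (coming from reducedness) and the telescoping decomposition of $N(a_{ij})$ are indispensable. The remainder rests on the deliberate design of $p_0$ as a \emph{reverser}, whose sole purpose is to break the consistency of every same-part pair while leaving all the $A$--$C$ orders (each of the form $ac$) untouched.
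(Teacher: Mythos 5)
Your proposal is correct and follows essentially the same route as the paper's proof: both arguments verify, pair by pair ($A$--$A$ within a chain, $A$--$A$ across chains, $A$--$C$, $C$--$C$), the relative order of the two vertices in each permutation block, using the telescoping layout of $p_i$ to show that exactly the neighbors of $a_{ij}$ lie to its right, and using $p_0$ as the order-reverser that kills alternation for same-part pairs. The only cosmetic differences are that you state the block-wise alternation criterion explicitly and handle both directions of the $A$--$C$ case at once, where the paper separates the "edge implies alternation" and "non-edge implies non-alternation" directions.
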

	\begin{proof}
		We show that two vertices $a$ and $c$ are adjacent in $G'$ if and only if $a$ and $c$ alternate in the word $p_0p_1 \cdots p_k$.
		
		Suppose $\overline{ac} \in E'$. Let $a \in X_s$, say $a = a_{st}$ for some $1 \le t \le n_s$. Since $c$ is in the neighborhood of $a$, $c$ appears on the right side of $a$ in $p_s$, i.e., $ac \le p_s$. Further, for all $i \ne s$, as $a \in X_s$, we have $a \le x_i$ so that $ac \le p_i$. Clearly, $ac \le p_0$.   Hence, $ac \le p_i$ for all $0 \le i \le k$ so that $a$ and $c$ alternate in the 	word $p_0p_1 \cdots p_k$.
		
		Conversely, suppose $a$ and $c$ are not adjacent in $G'$. We deal with this in three cases. In each case, we identify two permutations: one with the subword $ac$ and the other with the subword $ca$.
		\begin{itemize}
			\item Case 1: $a, c \in A$.
			\begin{itemize}
				\item Subcase 1.1: Suppose $a, c \in X_s$. Then, either $a < c$ or $c < a$. Without loss of generality, assume $a < c$. Then, $ca \le p_s$. Clearly, $ac \le p_0$.  
				\item Subcase 1.2: Suppose $a \in X_s$ and $c \in X_t$, for $t \ne s$. Clearly, $ac \le p_t$ and $ca \le p_s$.
			\end{itemize}
			\item Case 2: $a \in A$ and $c \in C$. Clearly, $ac \le p_0$. Suppose $a \in X_s$. Then, note that $ca \le p_s$.
			\item Case 3: $a, c \in C$. Let $a = c_p$ and $c = c_q$. Without loss of generality, assume $p < q$. Then, $ac = c_pc_q \le p_1$ and $ca = c_qc_p \le p_0$.
		\end{itemize}
		Hence, $a$ and $c$ do not alternate in the word $p_0p_1 \cdots p_k$. \qed
	\end{proof}
	
	For $0 \le i \le k$, let $p_i'$ be the permutation of vertices of $A \cup B$ obtained from $p_i$ by relabeling the vertices of $C$ with their original labels of $B$. Then, clearly, the word $p_0'p_1' \cdots p_k'$ represents the graph $G$ as stated in the following corollary of Theorem \ref{rep_G'}. 
	
	\begin{corollary}
		The word $p_0'p_1' \cdots p_k'$ permutationally represents the graph $G$.
	\end{corollary}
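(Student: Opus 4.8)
The plan is to deduce this corollary directly from Theorem \ref{rep_G'}, which establishes that $p_0 p_1 \cdots p_k$ permutationally represents $G'$. The essential observation is that relabeling is a bijection between $B$ and $C$ that preserves all the structural data used to define adjacency. Concretely, I would argue that for any two symbols, alternation in $p_0' p_1' \cdots p_k'$ holds if and only if the corresponding symbols (after relabeling $B \to C$) alternate in $p_0 p_1 \cdots p_k$.

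First I would fix the bijective relabeling $\phi \colon B \to C$ sending each $b$ to its assigned $c$, extended to the identity on $A$, so that $\phi$ is a bijection on $A \cup B \to A \cup C$. By construction each $p_i'$ is obtained from $p_i$ by applying $\phi^{-1}$ letterwise; equivalently, $p_i = \phi(p_i')$ as words. Since applying a fixed bijection letterwise to a word commutes with extracting subwords and with the notion of alternation, two vertices $u, v \in A \cup B$ alternate in $p_0' p_1' \cdots p_k'$ precisely when $\phi(u), \phi(v)$ alternate in $p_0 p_1 \cdots p_k$. This is the only nontrivial content, and it is immediate because a relabeling is a symbol-by-symbol substitution: the alternating pattern $\phi(u)\phi(v)\phi(u)\phi(v)\cdots$ in the image corresponds bit-for-bit to the pattern $uvuv\cdots$ in the preimage.

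Next I would transport the edge correspondence. By the definition of $E'$ in Theorem \ref{rep_G'}, $\overline{\phi(u)\phi(v)} \in E'$ if and only if $\overline{uv} \in E$, because $\phi$ fixes $A$ and the edges of $G'$ are exactly the images under $\phi$ of the edges of $G$. Chaining the two equivalences gives: $\overline{uv} \in E \iff \overline{\phi(u)\phi(v)} \in E' \iff \phi(u), \phi(v)$ alternate in $p_0 p_1 \cdots p_k \iff u, v$ alternate in $p_0' p_1' \cdots p_k'$, where the middle step is exactly Theorem \ref{rep_G'}. Hence $p_0' p_1' \cdots p_k'$ represents $G$, and since each $p_i'$ is a permutation of $A \cup B$, this representation is permutational.

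I do not anticipate a genuine obstacle here, as the result is a formal consequence of the invariance of the graph-representation relation under bijective relabeling of vertices. The only point requiring care is to state cleanly that alternation is preserved under a global letterwise substitution; I would phrase this as the observation that $(\phi(w))_{\{\phi(u),\phi(v)\}} = \phi\bigl(w_{\{u,v\}}\bigr)$ for any word $w$ over $A \cup B$, from which the preservation of the $abab\cdots$ form follows directly.
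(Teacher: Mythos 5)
Your proposal is correct and matches the paper's approach: the paper derives this corollary from Theorem \ref{rep_G'} precisely by undoing the relabeling of $C$ back to $B$, treating it as an immediate consequence (the word ``clearly'' is used) of the fact that a bijective renaming of vertices preserves both adjacency and alternation. Your write-up simply makes explicit, via the bijection $\phi$ and the identity $(\phi(w))_{\{\phi(u),\phi(v)\}} = \phi\bigl(w_{\{u,v\}}\bigr)$, the invariance argument the paper leaves implicit.
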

	
	Further, since the width of $P_A$ is the size of a minimum chain cover (Dilworth's Theorem), we have the following corollary of Theorem \ref{rep_G'}. 
	
	\begin{corollary}
		$\mathcal{R}^p(G) \le 1 + w(P_A)$.
	\end{corollary}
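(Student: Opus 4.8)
The plan is to derive this bound directly from Theorem~\ref{rep_G'} together with Dilworth's Theorem. The key observation is that the word constructed in Section~\ref{poly_proc} uses exactly $k+1$ permutations, namely $p_0, p_1, \ldots, p_k$, where $k$ is the number of chains $X_1, \ldots, X_k$ in the chosen chain cover of the poset $P_A$. Since Theorem~\ref{rep_G'} guarantees that $p_0 p_1 \cdots p_k$ permutationally represents $G'$, and since $G'$ is isomorphic to $G$ (differing only by the relabeling of $B$ as $C$), the Corollary preceding this statement already tells us that $p_0' p_1' \cdots p_k'$ permutationally represents $G$. Hence $G$ is permutationally $(k+1)$-representable, which immediately yields $\mathcal{R}^p(G) \le k+1$.

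The second and crucial step is to minimize $k$. The number $k$ is simply the number of chains in the chain cover of $P_A$ that we feed into the procedure. By Dilworth's Theorem (recalled in the Preliminaries, established in \cite{Dilworth50}), the size of a smallest chain cover of a finite poset equals its width. Therefore, if we choose a \emph{minimum} chain cover of $P_A$, then $k = w(P_A)$. Since the construction is valid for any chain cover, in particular it is valid for a minimum one, giving $\mathcal{R}^p(G) \le 1 + w(P_A)$.

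I would write the argument as follows: first note that for any chain cover of $P_A$ into $k$ chains, the procedure produces the $k+1$ permutations $p_0, \ldots, p_k$, and by Theorem~\ref{rep_G'} (via its preceding Corollary) these represent $G$, so $\mathcal{R}^p(G) \le k+1$. Then invoke Dilworth's Theorem to select a chain cover with $k = w(P_A)$, concluding $\mathcal{R}^p(G) \le 1 + w(P_A)$. No genuine obstacle arises here, since the heavy lifting was done in Theorem~\ref{rep_G'}; the only point requiring a moment's care is confirming that the permutation count is precisely $k+1$ (the $k$ permutations $p_1, \ldots, p_k$ from the chains, plus the single extra permutation $p_0$) rather than $k$, and that $P_A$ is a well-defined finite poset so that Dilworth's Theorem applies—both of which follow directly from the setup in Section~\ref{poly_proc} and the Remark defining $P_A$ as the poset induced by the comparability graph $\mathcal{N}_A(G)$.
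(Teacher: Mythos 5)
Your proposal is correct and matches the paper's own argument: the paper likewise obtains the bound by applying Theorem~\ref{rep_G'} (with the relabeling corollary) to a minimum chain cover of $P_A$, whose size equals $w(P_A)$ by Dilworth's Theorem, yielding $1 + w(P_A)$ permutations in total. The only difference is presentational—the paper states this in a single sentence, while you spell out the count $k+1$ explicitly.
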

	
	In view of the symmetry between $A$ and $B$, we also have $\mathcal{R}^p(G) \le 1 + w(P_B)$. Since the words representing $G$ can be obtained using chain covers with respect to $A$ or $B$, we have the following result:
	
	\begin{theorem}\label{upperbd}
		Let $G = (A \cup B, E)$ be a bipartite graph and $\kappa_0 = \min \{w(P_A), w(P_B)\}$. Then,  \[\mathcal{R}^p(G) \le 1 + \kappa_0.\]   
	\end{theorem}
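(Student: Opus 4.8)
The plan is to derive the stated bound by combining two one-sided estimates and taking the smaller of them. The substance of the argument has already been assembled in the preceding results; what remains is to package the symmetry between the two sides of the bipartition and to invoke Dilworth's Theorem to convert a chain-cover size into a width. Concretely, I would first establish $\mathcal{R}^p(G) \le 1 + w(P_A)$ exactly as in the corollary following Theorem \ref{rep_G'}, and then mirror the entire construction with the roles of $A$ and $B$ exchanged to get $\mathcal{R}^p(G) \le 1 + w(P_B)$; the minimum of these two yields $1 + \kappa_0$.

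For the first estimate I would run the procedure of Section \ref{poly_proc} on a \emph{minimum} chain cover of $P_A$. By Theorem \ref{rep_G'} (together with its corollary relabeling $C$ back to $B$), the concatenation $p_0' p_1' \cdots p_k'$ of the $k+1$ permutations produced from a $k$-chain cover permutationally represents $G$; hence $\mathcal{R}^p(G) \le 1 + k$ whenever $P_A$ admits a chain cover of size $k$. By Dilworth's Theorem the size of a smallest chain cover of $P_A$ equals $w(P_A)$, so choosing a minimum chain cover gives $\mathcal{R}^p(G) \le 1 + w(P_A)$. This is precisely the content of the corollary already recorded in the excerpt, so no new computation is needed here.

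The second estimate is where I would spend the most care, since it rests on a symmetry claim that should be verified rather than merely asserted. The key observation is that nothing in the construction of the neighborhood graph, the induced poset, the chain cover, or the permutations $p_0, p_1, \ldots, p_k$ privileges the part $A$ over the part $B$: the definition $\mathcal{N}_A(G) = (A, E')$ with $\overrightarrow{ab} \in E'$ iff $N(a) \subseteq N(b)$ is entirely symmetric in the two parts of a bipartite graph, and the proof of Theorem \ref{rep_G'}, along with Lemmas \ref{lemma_1}, \ref{corollary_1}, and \ref{lemma_2}, uses only that $A$ and $B$ are the two independent sides of the bipartition. Consequently the verbatim construction applied to a minimum chain cover of $P_B$ produces a word that permutationally represents $G$ using $1 + w(P_B)$ permutations, giving $\mathcal{R}^p(G) \le 1 + w(P_B)$.

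Combining the two bounds, $\mathcal{R}^p(G) \le 1 + \min\{w(P_A), w(P_B)\} = 1 + \kappa_0$, which is the claim. The main obstacle is therefore not any hard inequality but the careful justification of the symmetry step: one must confirm that every ingredient of the polynomial-time procedure and of Theorem \ref{rep_G'} remains valid after interchanging $A$ and $B$, so that the $B$-side bound is genuinely available and not an artifact of notation. Once that is checked, the theorem follows immediately.
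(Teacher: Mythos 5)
Your proposal is correct and follows the paper's own route exactly: the paper derives $\mathcal{R}^p(G) \le 1 + w(P_A)$ as a corollary of Theorem \ref{rep_G'} via Dilworth's Theorem, then invokes the symmetry between $A$ and $B$ to obtain $\mathcal{R}^p(G) \le 1 + w(P_B)$, and takes the minimum. Your extra care about verifying that the construction is genuinely symmetric in the two parts is sound but matches what the paper asserts in one line; there is no substantive difference in approach.
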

	
	Every connected bipartite graph $G$ on at least five vertices has a crown graph as an induced subgraph. We call it as an induced crown graph of $G$. Based on the \textit{prn} for crown graphs, we state a lower bound on the \textit{prn} of bipartite graphs. Since the class of bipartite graphs is hereditary\footnote{A class of graphs is hereditary if it contains all induced subgraphs of its members.}, if $H$ is an induced subgraph of $G$, then $\mathcal{R}^p(H) \le \mathcal{R}^p(G)$. Therefore, we have the following result.
	
	\begin{lemma}\label{lowerbd}
		Let $G$ be a bipartite graph and $H_{k,k}$ be a largest induced crown graph of $G$. Then,	$\mathcal{R}^p(G) \ge k$.
	\end{lemma}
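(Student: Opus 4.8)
The plan is to prove Lemma \ref{lowerbd} by combining two facts already established or recalled in the excerpt: the known value of the \textit{prn} of crown graphs, and the hereditary nature of the \textit{prn} under taking induced subgraphs. The statement asserts that if $H_{k,k}$ is a largest induced crown graph of a bipartite graph $G$, then $\mathcal{R}^p(G) \ge k$. Since $H_{k,k}$ is by hypothesis an induced subgraph of $G$, the whole argument reduces to evaluating $\mathcal{R}^p(H_{k,k})$ and invoking monotonicity.

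First I would recall that the class of bipartite graphs is hereditary, so every induced subgraph of a bipartite graph is again bipartite, and in particular every induced crown graph $H_{k,k}$ of $G$ lies within the same class. The key monotonicity principle, stated in the paragraph preceding the lemma, is that whenever $H$ is an induced subgraph of $G$, any permutational representation of $G$ restricts to one of $H$ (by deleting the absent vertices from each permutation), and hence $\mathcal{R}^p(H) \le \mathcal{R}^p(G)$. I would state this restriction argument explicitly: given a realizer $p_1 \cdots p_m$ of $G$ with $m = \mathcal{R}^p(G)$, deleting from each $p_i$ the vertices not in $V(H)$ yields permutations of $V(H)$ whose concatenation still represents $H$, because the alternation property between any two vertices of $H$ is unaffected by removing other letters. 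Thus $\mathcal{R}^p(H) \le \mathcal{R}^p(G)$.

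Next I would apply this with $H = H_{k,k}$, the largest induced crown graph of $G$. The excerpt records that $\mathcal{R}^p(H_{n,n}) = n$ for $n \ge 2$ (established via \cite[proof of Theorem 4]{halldorsson11}). Taking $n = k$ gives $\mathcal{R}^p(H_{k,k}) = k$. Combining this with the monotonicity inequality $\mathcal{R}^p(H_{k,k}) \le \mathcal{R}^p(G)$ yields $k = \mathcal{R}^p(H_{k,k}) \le \mathcal{R}^p(G)$, which is precisely the desired bound $\mathcal{R}^p(G) \ge k$.

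I do not anticipate a genuine obstacle here, since both ingredients are quoted as known results; the proof is essentially a one-line deduction. The only point requiring mild care is to make sure the restriction-of-a-representation step is stated cleanly, i.e. that deleting letters outside $V(H)$ preserves both adjacency (alternation is retained) and non-adjacency (non-alternation is retained) among the surviving vertices, so that the restricted word genuinely represents $H$ and not some supergraph or subgraph of it. This is immediate from the definition of alternation, but it is the one place where an attentive reader would want a sentence of justification. Beyond that, the result follows directly from $\mathcal{R}^p(H_{k,k}) = k$ and heredity.
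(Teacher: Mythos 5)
Your proposal is correct and takes essentially the same route as the paper: the paper also derives the bound by combining the monotonicity of $\mathcal{R}^p$ under induced subgraphs (via heredity) with the known fact $\mathcal{R}^p(H_{k,k}) = k$ from \cite{halldorsson11}. Your explicit justification of the restriction step (deleting letters outside $V(H)$ preserves alternation and non-alternation) simply spells out what the paper asserts without proof.
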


	From the above results, we can conclude the \textit{prn} of a particular type of bipartite graph.
	
	\begin{theorem} \label{main_result}
		Let $G = (A\cup B, E)$ be a bipartite graph and $H_{k,k}$ be a largest induced crown graph of $G$. If $k = \kappa_0$, then $\mathcal{R}^p(G) = \kappa_0 \ \text{or} \ 1 + \kappa_0$.
	\end{theorem}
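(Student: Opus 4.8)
The plan is to squeeze $\mathcal{R}^p(G)$ between two consecutive integers by invoking the upper and lower bounds already established. The upper bound is immediate: Theorem \ref{upperbd} gives $\mathcal{R}^p(G) \le 1 + \kappa_0$ directly, with no further work needed.

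For the lower bound, I would appeal to Lemma \ref{lowerbd}. Since $H_{k,k}$ is a largest induced crown graph of $G$, that lemma yields $\mathcal{R}^p(G) \ge k$. Substituting the hypothesis $k = \kappa_0$ turns this into $\mathcal{R}^p(G) \ge \kappa_0$.

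Combining the two bounds gives
\[\kappa_0 \le \mathcal{R}^p(G) \le 1 + \kappa_0.\]
Because $\mathcal{R}^p(G)$ is a positive integer and the closed interval $[\kappa_0, 1 + \kappa_0]$ contains exactly the two integers $\kappa_0$ and $1 + \kappa_0$, we conclude that $\mathcal{R}^p(G)$ equals one of these two values, as required.

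I expect no genuine obstacle here: the statement is essentially a corollary that packages the hard content — the polynomial-time permutation construction underlying Theorem \ref{upperbd} together with the hereditary argument and the known \textit{prn} of crown graphs behind Lemma \ref{lowerbd} — into a single clean conclusion. The only point worth flagging is that the hypothesis $k = \kappa_0$ is used in exactly one place, namely to align the lower bound with the upper bound; consequently the conclusion genuinely cannot be sharpened to a single value without extra information that distinguishes the $\kappa_0$ case from the $1 + \kappa_0$ case, which is why the theorem is stated as a dichotomy.
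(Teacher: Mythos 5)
Your proof is correct and matches the paper's intended argument exactly: the paper derives Theorem \ref{main_result} precisely by combining the upper bound of Theorem \ref{upperbd} with the lower bound of Lemma \ref{lowerbd}, using the hypothesis $k = \kappa_0$ to make the two bounds differ by one. Nothing is missing, and your remark about why the conclusion cannot be sharpened to a single value is consistent with the paper's treatment.
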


\section{Class with \textit{prn} at most $\kappa_0$}
\label{k0_class}

In this section, we provide a criterion for the chain covers of $P_A$ (or $P_B$) such that the given bipartite graph $G = (A \cup B, E)$ is permutationally $\kappa_0$-representable. 

\begin{notation}
	Let $X$ be a chain in $P_A$ and $b\in B$. The relative neighborhood of $b$ with respect to $X$, i.e. $N(b) \cap X$, is denoted by $N_X(b)$.
\end{notation}

\begin{lemma}
	Suppose $X$ is a chain in $P_A$. For all $b,b'\in B$, the relative neighborhoods $N_X(b)$ and $N_X(b')$ are comparable, i.e., $N_X(b)\subseteq N_X(b')$ or $N_X(b')\subseteq N_X(b)$.
\end{lemma}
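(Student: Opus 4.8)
The plan is to exploit the defining property of the poset $P_A$: for $a, c \in A$ one has $a < c$ in $P_A$ precisely when $N(a) \subsetneq N(c)$. Writing the chain $X$ explicitly as $a_1 < a_2 < \cdots < a_n$, this means $N(a_1) \subsetneq N(a_2) \subsetneq \cdots \subsetneq N(a_n)$. I would then show that, for every $b \in B$, the relative neighborhood $N_X(b) = N(b) \cap X$ is an up-set (equivalently, a suffix) of this chain. Once this is established, comparability is immediate, since any two suffixes of a fixed chain are nested.

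The key step is a short propagation argument. Fix $b \in B$ and suppose $a_i \in N_X(b)$, i.e.\ $b \in N(a_i)$. Since $a_i \in A$ and $G$ is bipartite, $N(a_i) \subseteq B$, and for every $j \ge i$ we have $N(a_i) \subseteq N(a_j)$ by the ordering of the chain. Hence $b \in N(a_i) \subseteq N(a_j)$, so $a_j \in N(b)$ and therefore $a_j \in N_X(b)$. Thus membership in $N_X(b)$ is closed upward along $X$: if $N_X(b) \ne \varnothing$, then $N_X(b) = \{a_{t(b)}, a_{t(b)+1}, \ldots, a_n\}$, where $t(b) = \min\{\, i : a_i \in N(b)\,\}$. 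To finish, given $b, b' \in B$, if either relative neighborhood is empty the inclusion is trivial; otherwise, assuming without loss of generality that $t(b) \le t(b')$, the explicit description yields $N_X(b') = \{a_{t(b')}, \ldots, a_n\} \subseteq \{a_{t(b)}, \ldots, a_n\} = N_X(b)$. In every case $N_X(b)$ and $N_X(b')$ are comparable, as claimed.

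There is no serious obstacle here; the lemma is essentially a reformulation of the chain ordering. The only point that requires care is the direction of the propagation: one must use $N(a_i) \subseteq N(a_j)$ for $j \ge i$ (and not the reverse), together with the fact that the adjacent vertex $b$ lies in $B$, which is exactly the universe containing the neighborhoods of $A$-vertices. This is what forces $N_X(b)$ to be upward closed rather than downward closed, and getting this orientation right is the crux of the argument; the remainder is bookkeeping.
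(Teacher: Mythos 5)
Your proof is correct and rests on exactly the same key fact as the paper's: if $a_r < a_s$ in the chain then $N(a_r) \subseteq N(a_s)$, so adjacency to $b$ propagates upward along $X$. The paper packages this as a proof by contradiction (picking $a_r \in N_X(b)\setminus N_X(b')$ and $a_s \in N_X(b')\setminus N_X(b)$ and deriving $b \in N(a_s)$), whereas you state it directly as ``each $N_X(b)$ is a suffix of the chain, and suffixes are nested,'' which is the same argument in contrapositive form with a slightly more explicit structural conclusion.
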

\begin{proof}
	Let $X:a_1<a_2<\cdots<a_n$. Assume that $N_X(b)$ and $N_X(b')$ are incomparable, i.e.,  $N_X(b)\not \subseteq N_X(b')$ and $N_X(b')\not \subseteq N_X(b)$. Then, $N_X(b) \not = \varnothing$ and $N_X(b')\not = \varnothing$. There exists $a_r \in N_X(b)\backslash N_X(b')$ and $a_s \in N_X(b')\backslash N_X(b)$. As $a_r, a_s \in X$, without loss of generality, assume $a_r<a_s$. Then, $N(a_r)\subseteq N(a_s)$. This implies that $b \in N(a_s)$, which is a contradiction. Hence,  $N_X(b)$ and $N_X(b')$ are comparable. \qed
\end{proof}

If the chain cover has only one chain, it is evident that $\mathcal{R}^p(G) = 2$ (see Example \ref{ex_pa_chain}). Assume $M=\{X_1,X_2,\ldots, X_k\}$ ($k \ge 2$) is a chain cover of poset $P_A$. Let $X_t \in M$. For some $b, b'\in B$, if $N_{X_t}(b) = N_{X_t}(b') = \{a_{tl},a_{t(l+1)},\dots, a_{tn_t}\}$, where $1 \le l\le n_t$, then the permutation corresponding to $X_t$ is of the form
\[p_t : x_t \underline{B_{-a_{tn_t}}} a_{tn_t} \underline{N(a_{tn_t})_{-a_{t(n_t-1)}}} a_{t(n_t-1)} \cdots a_{tl} \underline{N(a_{tl})_{-a_{t(l-1)}}} a_{t(l-1)}\cdots   a_{t1}\underline{N(a_{t1})},\] where $b, b'\in N(a_{tl})_{-a_{t(l-1)}}$. 

Based on this observation, we make a minor modification to the method described in Section \ref{poly_proc}, i.e., for all $b,b'\in B$ if $N_{X_1}(b)=N_{X_1}(b')$ and $N(b')\subsetneq N(b)$, then put $b$ before $b'$ in $p_1$ such that $b=c_p$ and $b'=c_q$, where $p<q$.

\begin{remark}\label{nbd_compare}
	Note that the neighborhoods of no two vertices in $G$ are the same. As $X_i \cap X_j = \varnothing$ for $i \ne j$, we have $N(b') \subsetneq N(b)$ if and only if $N_{X_i}(b') \subseteq N_{X_i}(b)$ for all chains $X_i$ and $N_{X_t}(b') \subsetneq N_{X_t}(b)$ for some chain $X_t$.
\end{remark}

\begin{theorem}\label{k-rep}
	Suppose $M=\{X_1,X_2,\ldots, X_k\}$ $(k \ge 2)$ is a chain cover of the poset $P_A$. If $M$ satisfies the following condition with respect to $B$, then $G$ is permutationally $k$-representable.  
	
	\begin{tabular}{lc}
		For all $b, b' \in B$, if there exist $i, j$ with $i \ne j$ \\ such that $N_{X_i}(b)  \subseteq N_{X_i}(b')$ and $N_{X_j}(b')  \subseteq N_{X_j}(b)$. & \qquad \qquad$\cdots\cdots(\zeta_B)$
	\end{tabular}
\end{theorem}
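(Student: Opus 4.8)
The plan is to show that the word $p_1 p_2 \cdots p_k$ obtained from the chain cover $M$ in Section~\ref{poly_proc} (incorporating the minor modification to $p_1$ recorded above) already represents $G' = (A \cup C, E')$, so that the auxiliary permutation $p_0$ is unnecessary; by the relabeling correspondence used after Theorem~\ref{rep_G'} this gives that $G$ is permutationally $k$-representable. The convenient reformulation is this: since each $p_i$ is a permutation of $A \cup C$, the restriction of $p_1 \cdots p_k$ to a pair $\{u,v\}$ contains exactly one of the blocks $uv$, $vu$ from each $p_i$, so $u$ and $v$ alternate precisely when all $k$ permutations order them the same way, and they fail to alternate precisely when both orders occur. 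Hence it suffices to prove that each edge of $G'$ is ordered uniformly by all $p_i$, while each non-edge is ordered in both ways by the family $p_1, \ldots, p_k$.

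For an edge $\overline{ac}$ with $a \in A$, say $a \in X_s$, the construction writes every neighbor of $a$ to the right of $a$ in $p_s$, and in each $p_i$ with $i \ne s$ the vertex $a$ sits in the prefix $x_i$ whereas $c$ sits in the body; thus $ac \le p_i$ for all $i$ and the edge alternates. For the non-edges I would reuse the case split of Theorem~\ref{rep_G'}. If $a, c \in A$, then using $k \ge 2$ one permutation realizes one order through its body and a second permutation realizes the reverse order through its prefix, exactly as in that proof but now entirely inside $p_1, \ldots, p_k$. If $a \in A$ and $c \in C$, the body of $p_s$ (with $a \in X_s$) gives $ca$ while the prefix of any $p_i$ with $i \ne s$ gives $ac$. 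Neither case uses $p_0$ or the hypothesis $(\zeta_B)$.

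The decisive case is $a, c \in C$, i.e.\ a pair $b, b' \in B$, and here $(\zeta_B)$ enters. By the lemma immediately preceding the statement, $N_{X_i}(b)$ and $N_{X_i}(b')$ are comparable in every chain, so in each $p_i$ the relative position of $b$ and $b'$ is dictated by the larger relative neighborhood (the strictly larger one lying further right), with ties broken by the within-group order, which is increasing in $p_1$ (as adjusted by the modification) and decreasing in each $p_i$ with $i \ge 2$. If $N(b)$ and $N(b')$ are incomparable, then, writing $N(b) = \bigcup_i N_{X_i}(b)$ over the chains that partition $A$, there are two distinct chains in which the strict containment runs oppositely, and those two permutations already exhibit both orders; note that $(\zeta_B)$ holds automatically in this case. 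If instead $N(b') \subsetneq N(b)$, then $N_{X_i}(b') \subseteq N_{X_i}(b)$ for every chain, strictly for at least one (the graph being reduced), so every strict chain yields the order $b'b$; to obtain the reverse order I invoke $(\zeta_B)$, which forces a chain $X_t$ with $N_{X_t}(b) \subseteq N_{X_t}(b')$, hence $N_{X_t}(b) = N_{X_t}(b')$, a genuine tie. If $t = 1$ the modification places $b$ before $b'$, and if $t \ge 2$ the decreasing within-group order places the larger-indexed $b$ before $b'$ (here $b'$ carries the smaller index because it has the smaller relative neighborhood in $X_1$); either way the order $bb'$ appears, so the pair does not alternate.

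I expect the main obstacle to be exactly this last step: one must check that $(\zeta_B)$ excludes precisely the fatal configuration in which $N_{X_i}(b') \subsetneq N_{X_i}(b)$ holds strictly in \emph{every} chain (for then all $k$ permutations would order the pair as $b'b$ and $b, b'$ would alternate), and one must follow carefully how the indices inherited from $p_1$ interact with the reversed within-group order used in $p_2, \ldots, p_k$, so that a tie-chain genuinely produces the opposite order. By contrast, the incomparable subcase and the edge and within-$A$ cases become routine once the reformulation ``non-alternation $\Leftrightarrow$ both orders occur'' is in place.
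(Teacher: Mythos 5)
Your proposal is correct and takes essentially the same route as the paper's proof: edges are ordered uniformly by every $p_i$; non-edges within $A$, and between $A$ and $C$, get both orders from prefix versus body (using $k \ge 2$); and the decisive $C$--$C$ case is settled exactly as in the paper, by invoking $(\zeta_B)$ to produce a chain where the containment is reversed or tied, the tie being resolved by the modification in $p_1$ or the decreasing within-group order in $p_i$, $i \ge 2$. One minor repair: your parenthetical reason that ``$b'$ carries the smaller index'' presumes $N_{X_1}(b') \subsetneq N_{X_1}(b)$; if $X_1$ is itself a tie chain, the modification gives $b$ the smaller index, so a tie chain $X_t$ with $t \ge 2$ would order the pair as $b'b$ rather than $bb'$ --- but in that situation your $t = 1$ branch applies, so the argument stands once the tie chain is taken to be $X_1$ whenever $N_{X_1}(b) = N_{X_1}(b')$.
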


\begin{proof}
	Let $G'$ be the relabeled graph of $G$ and $w'=p_1p_2\cdots p_k$ be the word where $p_1,p_2,\ldots, p_k$ are permutation of vertices of $G'$ as per the construction given in Section \ref{poly_proc}. It is sufficient to show that $w'$ represents $G'$ permutationally as the vertices of $G'$ can be relabeled back to their original labeling in $w'$, which permutationally represents $G$.
	
	We show that two vertices $a$ and $c$ are adjacent in $G'$ if and only if $a$ and $c$ alternate in the word $w'$.
	
	Suppose $\overline{ac} \in E'$. Let $a \in X_t$, say $a = a_{tl}$ for some $1 \le l \le n_t$. Since $c$ is in the neighborhood of $a$, $c$ appears on the right side of $a$ in $p_t$, i.e., $ac \le p_t$. Further, for all $i \ne t$, as $a \in X_t$, we have $a \le x_i$ so that $ac \le p_i$. Hence, $ac \le p_i$ for all $1 \le i \le k$ so that $a$ and $c$ alternate in the word $w'$.
	
	Conversely, suppose $a$ and $c$ are not adjacent in $G'$. We deal with this in three cases. In each case, we identify two permutations: one with the subword $ac$ and the other with the subword $ca$.
	\begin{itemize}
		\item Case 1: $a, c \in A$.
		\begin{itemize}
			\item Subcase 1.1: Suppose $a, c \in X_s$. Then, either $a < c$ or $c < a$. Without loss of generality, assume $a < c$. Then, $ca \le p_s$. Since $a, c \in X_s$, there exists $X_t$ ($t \ne s$) such that $a, c \not\in X_t$. Then, $ac \le X_s^{^\text{o}} \le x_t \le p_t$.  
			\item Subcase 1.2: Suppose $a \in X_s$ and $c \in X_t$, for $t \ne s$. Then, $ac \le X_s^{^\text{o}}c \le x_tc \le p_t$. Similarly, $ca \le p_s$.
		\end{itemize}
		\item Case 2: $a \in A$ and $c \in C$. Suppose $a \in X_s$. As $c \notin N(a)$, clearly $ca \le p_s$. Since $a \in X_s$, there exists $X_t$ ($t \ne s$) such that $a \not\in X_t$. Then,  $ac \le X_s^{^\text{o}}c \le x_tc \le p_t$. 
		\item Case 3: $a, c \in C$. Let $a = c_p$ and $c = c_q$. Without loss of generality, assume $p < q$. Then, $ac = c_pc_q \le p_1$, which means that $N_{X_1}(a) \subseteq N_{X_1}(c)$. As $N_{X_i}(a)\subseteq N_{X_i}(c)$ and $N_{X_j}(c)\subseteq N_{X_j}(a)$ for some $i$ and $j$, there exist a chain $X_s$ such that $N_{X_s}(c)\subseteq N_{X_s}(a)$.
		\begin{itemize}
			\item  Suppose $N_{X_s}(c) =  N_{X_s}(a)$. As $c_p$ and $c_q$ are in the same exclusive neighborhood in $p_s$ and are written in decreasing order, we have $c_qc_p = ca \le p_s$.
			\item  Suppose $N_{X_s}(c) \subsetneq  N_{X_s}(a)$. Then, there exists $a_r\in N_{X_s}(a) \setminus N_{X_s}(c)$ such that $ca 	\le c a_r a \le p_s$.                                                                                                                                                                                                                                                                                                                                                                                                                                                                                                                                                                                                                                                                                                                                                                                                                                                                                                                                                                                                                                   
		\end{itemize} 		
	\end{itemize}
	Hence, $a$ and $c$ do not alternate in $w'$. \qed	
\end{proof}

\begin{corollary}\label{kappa0_rep}
	If $\min\{w(P_A), w(P_B)\} = w(P_A)$ (or $= w(P_B)$), a minimal chain cover of $P_A$ (or $P_B$) satisfies the condition $\zeta_B$ (or $\zeta_A$), then $G$ is permutationally $\kappa_0$-representable.
\end{corollary}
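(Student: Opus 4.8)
The plan is to deduce this corollary almost immediately from Theorem \ref{k-rep} together with Dilworth's Theorem, so the argument will be short; the real content has already been done in Theorem \ref{k-rep}. First I would recall that $\kappa_0 = \min\{w(P_A), w(P_B)\}$ and assume, without loss of generality, that this minimum is attained by $w(P_A)$, so that $\kappa_0 = w(P_A)$. The case $\kappa_0 = w(P_B)$ is handled identically after interchanging the roles of $A$ and $B$ and replacing $\zeta_B$ by $\zeta_A$, invoking the same $A$--$B$ symmetry of the construction of Section \ref{poly_proc} that was already used to pass from $\mathcal{R}^p(G) \le 1 + w(P_A)$ to $\mathcal{R}^p(G) \le 1 + w(P_B)$ in the lead-up to Theorem \ref{upperbd}.

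Next I would apply Dilworth's Theorem: a minimal chain cover of $P_A$ consists of exactly $w(P_A) = \kappa_0$ chains, say $M = \{X_1, \ldots, X_{\kappa_0}\}$. By hypothesis this cover $M$ satisfies the condition $\zeta_B$ with respect to $B$. Thus $M$ is a chain cover of $P_A$ with $k = \kappa_0$ chains satisfying $\zeta_B$, which is precisely the hypothesis of Theorem \ref{k-rep}. Applying that theorem with $k = \kappa_0$ yields that $G$ is permutationally $\kappa_0$-representable, as claimed.

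The only point requiring care, and the one I would flag as the sole (minor) obstacle, is the hypothesis $k \ge 2$ in Theorem \ref{k-rep}. When $\kappa_0 \ge 2$ there is nothing further to verify. When $\kappa_0 = 1$ the poset $P_A$ is itself a chain and a minimal cover consists of a single chain; this degenerate situation falls outside the scope of Theorem \ref{k-rep} and is instead governed by the observation preceding it, namely that a one-chain cover gives $\mathcal{R}^p(G) = 2$. I would therefore either state the corollary under the standing assumption $\kappa_0 \ge 2$ or treat this single case separately. I do not expect any genuine difficulty: the entire substance of the corollary is the translation ``minimal chain cover size $=$ width $= \kappa_0$'' provided by Dilworth's Theorem, after which Theorem \ref{k-rep} supplies the conclusion directly.
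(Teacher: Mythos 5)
Your proposal is correct and matches the paper's (implicit) argument exactly: the corollary is stated as an immediate consequence of Theorem \ref{k-rep}, using Dilworth's Theorem to identify the size of a minimal chain cover of $P_A$ (or $P_B$) with $w(P_A)$ (or $w(P_B)$) $= \kappa_0$. Your side remark on the $\kappa_0 = 1$ case is a reasonable point of care, though note that with a single chain the condition $\zeta_B$ (which requires two distinct indices $i \ne j$) cannot hold, so the corollary is vacuous there rather than false.
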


\begin{remark}\label{anti_k0-rep}
	Let $G= (A \cup B, E)$ be a bipartite graph such that both $P_A$ and $P_B$ are antichains. Then, $P_A$ (also $P_B$) has only one chain cover. For all $u, v \in A$ (also in $B$), since $N(u)$ and $N(v)$ are incomparable, the chain cover of $P_A$ (of $P_B$) satisfies $\zeta_B$ (respectively, $\zeta_A$). Hence, $\mathcal{R}^p(G) \le \kappa_0$. In particular, every regular bipartite graph, in which all vertices have the same degree, is permutationally $\kappa_0$-representable. 
\end{remark}

\section{Comparison with Broere's Result}
\label{comp_broere}

We now show that the upper bound for \textit{prn} of a bipartite graph obtained in Theorem \ref{upperbd} is an improvement of the upper bound given by Broere (Theorem \ref{known_result}). 

Let $G = (A \cup B, E)$ be a reduced bipartite graph. Accordingly, the number of distinct neighborhoods of $A$ is $|A|$, i.e., $\alpha = |\{N(a)|a\in A\}| = |A|$. Also, the number of distinct neighborhoods of $B$ is $|B|$, i.e., $\beta = |\{N(b)|b \in B\}| = |B|$. Hence, since $w(P_A) \le |A|$ and $w(P_B) \le |B|$, we have $\kappa_0 \le \min\{\alpha, \beta\}$. 

In case $\mathcal{R}^p(G) = 1 + \kappa_0$, we observe that $\kappa_0 < \min\{\alpha, \beta\}$. On the contrary, suppose $\kappa_0 = \min\{\alpha, \beta\}$. We arrive at a contradiction as shown below in two cases. Without loss of generality, assume $\alpha \le \beta$. Then, $\kappa_0 = \alpha = |A|$. 
\begin{itemize}
	\item If $w(P_A) \le w(P_B)$, then $w(P_A) = \kappa_0 = |A|$ so that $P_A$ is an antichain. Hence, by Remark \ref{anti_k0-rep}, $G$ is permutationally $\kappa_0$-representable. But $\mathcal{R}^p(G) = 1 + \kappa_0$.
	
	\item Else, $w(P_B) = \kappa_0 < w(P_A) \le |A| = \min\{\alpha, \beta\}$. 
\end{itemize}

Hence, if $\mathcal{R}^p(G) \le \kappa_0$, then clearly $\kappa_0 \le \min\{\alpha, \beta\}$. And if $\mathcal{R}^p(G) = 1 + \kappa_0$, we have $\kappa_0 < \min\{\alpha, \beta\}$. 

\begin{example}
\begin{figure}[!h]
	\centering
	\begin{minipage}{.6\textwidth}
		\centering
		\[\begin{tikzpicture}[scale=0.5]			
			\vertex (1) at (0.5,0) [label=below:$1$] {};  
			\vertex (2) at (1.5,0) [label=below:$2$] {};
			\vertex (3) at (2.5,0) [label=below:$3$] {};
			\vertex (4) at (3.5,0) [label=below:$4$] {};
			\vertex (5) at (0,2) [label=above:$5$] {};  
			\vertex (6) at (1,2) [label=above:$6$] {};
			\vertex (7) at (2,2) [label=above:$7$] {};
			\vertex (8) at (3,2) [label=above:$8$] {};
			\vertex (9) at (4,2) [label=above:$9$] {}; 		
			\path
			(1) edge (5)
			(1) edge (7)
			(2) edge (5)
			(2) edge (6)
			(2) edge (8)
			(3) edge (9)
			(3) edge (7)
			(4) edge (7)
			(4) edge (9);
		\end{tikzpicture}\]
		\caption{A bipartite graph}
		\label{fig_G}
	\end{minipage}
\end{figure}
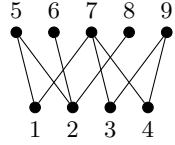
Consider the bipartite graph $G$ given in Fig. \ref{fig_G}. In \cite{Broere18}, Broere observed that $G$ is permutationally 3-representable. However, note that the reduced graph of $G$ satisfies Corollary \ref{kappa0_rep} and $\kappa_0 = 2$. Hence, $\mathcal{R}^p(G) = 2$. 
\end{example}

\section{Examples}
\label{kappa0_ex}

In this section, we consider some examples of bipartite graphs and determine their \textit{prn} using the method given in this work. Further, we also give their representation number.

%Reduced graphs. Extreme examples. In between cases are complicated. $H_{k, k}$ as induced subgraph. 

\begin{example}\label{ex_pa_chain}
Let $G = (A\cup B, E)$ be a bipartite graph with $A = \{a_1,a_2,\ldots, a_n\}$. If $N(a_1) \subseteq N(a_2) \subseteq \cdots \subseteq N(a_n) = B$, then the poset $P_A$ is a chain. Note that $P_B$ is also a chain. Hence, $\mathcal{R}^p(G) = \mathcal{R}(G) = 2$.
\end{example}

\begin{example}		
\label{deg1}
		Let $G = (A\cup B, E)$ be a bipartite graph containing $H_{k, k}$ as its largest induced crown graph, and all other vertices have degree one. As the graphs under consideration are reduced, all the vertices of degree one are adjacent to distinct vertices. Note that for the corresponding reduced graph,  we will extend the graph $G$ to a graph $H = (C \cup D, F)$ such that $|C| = |D| = 2k$ by adding vertices, if required, of degree one in both the partite sets. Note that $G$ is an induced subgraph of $H$. The poset $P_C$ induced by the neighborhood graph of $H$ with respect to $C$ (as well as $P_D$) will have a depiction of a crown graph. An example of $G$ for $k = 5$ and its extension $H$ along with $P_C$ are depicted in Fig. \ref{crowns}.  Using Corollary \ref{kappa0_rep}, $H$ is permutationally $k$-representable. Further, since $H_{k, k}$ is an induced subgraph of $H$ as well as in $G$, using Theorem \ref{main_result}, we have $\mathcal{R}^p(G) = \mathcal{R}^p(H) = k$. 
		
		Note that the graphs $G$ and $H$ are obtained by successive addition of vertices with a new edge to the crown graph $H_{k, k}$. Hence, by \cite[Proposition 15]{kitaev13}, $\mathcal{R}(G) = \mathcal{R}(H) = \lceil \frac{k}{2} \rceil$, where $k \ge 5$. For $k = 4$, 	$\mathcal{R}(G) = \mathcal{R}(H) = 3 $, and $\mathcal{R}(G) = \mathcal{R}(H) = 2 $, for $k \le 3$.
			\begin{figure}[tbh!]
			\centering
				\begin{minipage}{.3\textwidth}
				\centering
				\begin{tikzpicture}[scale=0.6]
				\vertex (1) at (1,0) [label=left:$ 1$] {};  		
				\vertex (2) at (2,0) [label=left:$2$] {};
				\vertex (3) at (3,0) [label=left:$3$] {};		
				\vertex (4) at (4,0) [label=right:$4 $] {};			
				\vertex (6) at (1,-3) [label=below:$6$] {};
				\vertex (7) at (2,-3) [label=below:$7$] {};
							
				\vertex (11) at (1,-1.5) [label=left:$ 1'$] {};  		
				\vertex (12) at (2,-1.5) [label=left:$2'$] {};
				\vertex (13) at (3,-1.5) [label=left:$3'$] {};		
				\vertex (14) at (4,-1.5) [label=right:$4'$] {};			
				\vertex (18) at (3,1.5) [label=above:$8'$] {};
				\vertex (19) at (2,1.5) [label=above:$7'$] {};
				\path					
				(2) edge (11)
				(3) edge (11)
				(4) edge (11)
				(1) edge (12)
				(1) edge (13)
				(1) edge (14)
				(2) edge (13)
				(2) edge (14)
				(3) edge (12)
				(3) edge (14)
				(4) edge (12)
				(4) edge (13)
				(3) edge (18)
				(2) edge (19)
				(11) edge (6)
				(12) edge (7);
				\end{tikzpicture}
				
				$G$
			\end{minipage}
			\begin{minipage}{.3\textwidth}
				\centering
				\begin{tikzpicture}[scale=0.6]
				\vertex (1) at (1,0) [label=left:$ 1$] {};  		
				\vertex (2) at (2,0) [label=left:$2$] {};
				\vertex (3) at (3,0) [label=left:$3$] {};		
				\vertex (4) at (4,0) [label=right:$4 $] {};			
				\vertex (6) at (1,-3) [label=below:$6$] {};
				\vertex (7) at (2,-3) [label=below:$7$] {};
				\vertex (8) at (3,-3) [label=below:$8$] {};
				\vertex (9) at (4,-3) [label=below:$9$] {};
				\vertex (11) at (1,-1.5) [label=left:$ 1'$] {};  		
				\vertex (12) at (2,-1.5) [label=left:$2'$] {};
				\vertex (13) at (3,-1.5) [label=left:$3'$] {};		
				\vertex (14) at (4,-1.5) [label=right:$4'$] {};			
				\vertex (16) at (1,1.5) [label=above:$6'$] {};
				\vertex (17) at (2,1.5) [label=above:$7'$] {};
				\vertex (18) at (3,1.5) [label=above:$8'$] {};
				\vertex (19) at (4,1.5) [label=above:$9'$] {};
				\path	
				(2) edge (11)
				(3) edge (11)
				(4) edge (11)
				(1) edge (12)
				(1) edge (13)
				(1) edge (14)
				(2) edge (13)
				(2) edge (14)
				(3) edge (12)
				(3) edge (14)
				(4) edge (12)
				(4) edge (13)
				(1) edge (16)
				(2) edge (17)
				(3) edge (18)
				(4) edge (19)
				(11) edge (6)
				(12) edge (7)
				(13) edge (8)
				(14) edge (9);
			\end{tikzpicture}
			
			$H$
			\end{minipage}
			\begin{minipage}{.3\textwidth}
				\centering
				\begin{tikzpicture}[scale=0.6]
				\vertex (1) at (1,0) [label=above:$ 1$] {};  		
			\vertex (2) at (2,0) [label=above:$2$] {};
			\vertex (3) at (3,0) [label=above:$3$] {};		
			\vertex (4) at (4,0) [label=above:$4 $] {};			
			\vertex (6) at (1,-1.5) [label=below:$6$] {};
			\vertex (7) at (2,-1.5) [label=below:$7$] {};
			\vertex (8) at (3,-1.5) [label=below:$8$] {};
			\vertex (9) at (4,-1.5) [label=below:$9$] {};
			\path
			(6) edge (2)
			(6) edge (3)
			(6) edge (4)
			(7) edge (1)
			(7) edge (3)
			(7) edge (4)
			(8) edge (1)
			(8) edge (2)
			(8) edge (4)
			(9) edge (1)
			(9) edge (3)
			(9) edge (2);
			\end{tikzpicture}
			
			$P_C$
			\end{minipage}
			\caption{Depiction for Example \ref{deg1}}
			\label{crowns}
		\end{figure}
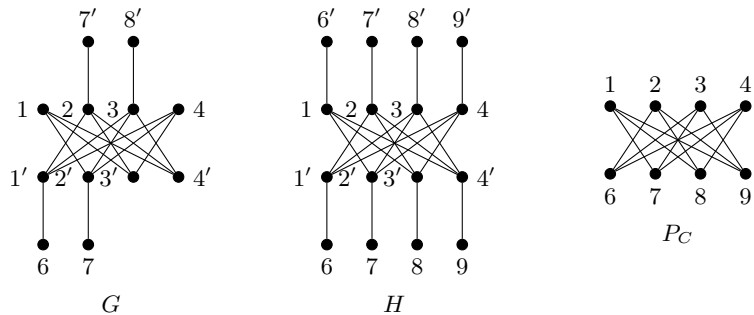
\end{example}

\begin{example}		
\label{all_adj}
			Let $G^k = (A\cup B, E)$ be a bipartite graph containing $H_{k, k}$ as its largest induced crown graph, and every other vertex is adjacent to all vertices of the other partite set. It is sufficient to consider the reduced version of $G^k$ in which at most one vertex in $A$ and at most one in $B$ have neighborhoods $B$ and $A$, respectively. An example of $G^5$ and its $P_A$ are depicted in Fig. \ref{star}. Note that the depiction of $P_A$ (and $P_B$) is a star. Then, the neighborhood graph with respect to $A$ and $B$ is a star. Using Corollary \ref{kappa0_rep} and Theorem \ref{main_result}, we have $\mathcal{R}^p(G^k) = k$, for $k \ge 2$. It can observed that $\mathcal{R}^p(G^1) = 2$.  
		
		\begin{figure}[h!]
			\centering
			\begin{minipage}{.6\textwidth}
				\begin{tabular}{ccc}
				\begin{tikzpicture}[scale=0.6]
				\vertex (1) at (1,0) [label=above:$ 1$] {};  		
				\vertex (2) at (2,0) [label=above:$2$] {};
				\vertex (3) at (3,0) [label=above:$3$] {};		
				\vertex (4) at (4,0) [label=above:$4 $] {};			
				\vertex (5) at (5,0) [label=above:$5$] {};
				\vertex (6) at (6,0) [label=above:$6$] {};

				\vertex (11) at (1,-1.5) [label=below:$ 1'$] {};  		
				\vertex (12) at (2,-1.5) [label=below:$2'$] {};
				\vertex (13) at (3,-1.5) [label=below:$3'$] {};		
				\vertex (14) at (4,-1.5) [label=below:$4'$] {};			
				\vertex (15) at (5,-1.5) [label=below:$5'$] {};
				\vertex (16) at (6,-1.5) [label=below:$6'$] {};
			
				\path	
				
				(2) edge (11)
				(3) edge (11)
				(4) edge (11)
				(5) edge (11)
				(1) edge (12)
				(1) edge (13)
				(1) edge (14)
				(1) edge (15)
				(2) edge (13)
				(2) edge (14)
				(2) edge (15)
				(3) edge (12)
				(3) edge (14)
				(3) edge (15)
				(4) edge (12)
				(4) edge (13)
				(4) edge (15)
				(5) edge (12)
				(5) edge (13)
				(5) edge (14) 
				(6) edge (11)
				(6) edge (12)
				(6) edge (13)
				(6) edge (14)
				(6) edge (15)
				(6) edge (16)
				(16) edge (1)
				(16) edge (2)
				(16)  edge (3)
				(16) edge (4)
				(16) edge (5);
				\end{tikzpicture}&\qquad\qquad&
				\begin{tikzpicture}[scale=0.6]
				\vertex (1) at (1,0) [label=below:$ 1$] {};  		
				\vertex (2) at (2,0) [label=below:$2$] {};
				\vertex (3) at (3,0) [label=below:$3$] {};		
				\vertex (4) at (4,0) [label=below:$4 $] {};			
				\vertex (5) at (5,0) [label=below:$5$] {};
				\vertex (6) at (3,1.5) [label=above:$6$] {};
				
				\path
				(1) edge (6)
				(2) edge (6)
				(3) edge (6)
				(4) edge (6)
				(5) edge (6);
				\end{tikzpicture}
			\end{tabular}
		\caption{Depiction for Example \ref{all_adj}}
		\label{star}
		\end{minipage}
		\begin{minipage}{.3\textwidth}
			\centering
			\begin{tikzpicture}[scale=0.4]				
				\vertex (1) at (0,0) [label=above:$ $] {};  
				\vertex (2) at (2,0) [label=above:$ $] {};
				\vertex (3) at (3,-1.5) [label=right:$ $] {};
				\vertex (4) at (2,-3) [label=below:$ $] {};
				\vertex (5) at (0,-3) [label=below:$ $] {};  
				\vertex (6) at (-1,-1.5) [label=left:$ $] {};
				\vertex (7) at (1,-1.5) [label=below:$ $] {};
				\path
				(1) edge(2)
				(2) edge (3)
				(3) edge (4)
				(4) edge (5)
				(5) edge (6)
				(6) edge (1)
				(7) edge (1)
				(7) edge (3)			
				(7) edge (5);	
			\end{tikzpicture}
			\caption{The graph $S$}
			\label{cirle_ob}
		\end{minipage}
		\end{figure}

	By extending the method given in \cite[Theorem 7]{glen18}, we now determine the representation number of $G^k$. As per the method, without loss of generality, we construct a word representing $G^k$ for an even number $k$. For $k \ge 5$, suppose $A = \{a, a_1,a_2,\ldots,a_k\}$ and  $B = \{b, b_1,b_2,\ldots,b_k\}$ such that $N(a) = B$, $N(b) = A$ and the $H_{k,k}$ is formed over the vertices $a_i$'s and $b_i$'s. Construct a word representing the crown graph $H_{k,k}$ by setting the permutations $P(a_r, a_s)$ over $a_i$'s and $P(b_r, b_s)$ over $b_i$'s, for all $ 1 \le r < s \le k$. Then, insert the vertex $a$ at the center in each permutation $P(a_r,a_s)$ and insert $b$ at the center in each permutation $P(b_r,b_s)$, for all $ 1 \le r < s \le k$. The resultant word represents $G^k$, as $a$ alternates with all vertices of $B$ and does not alternate with vertices of $A$. Similarly, $b$ alternates with all vertices of $A$ but does not alternate with vertices of $B$. Hence, $\mathcal{R}(G^k) = \lceil \frac{k}{2} \rceil$, for $k \ge 5$. Note that $\mathcal{R}(G^4) \ge 3$, as  $\mathcal{R}(H_{4,4}) = 3$. Hence, construct a word for $G^6$ and delete appropriate vertices from the word to observe that $\mathcal{R}(G^4) = 3$. Note that $\mathcal{R}(H_{3,3}) = 2$ but $G^3$ contains the graph $S$ given in Fig. \ref{cirle_ob} as induced subgraph. Since $S$ is a forbidden induced subgraph for circle graphs (cf. \cite{Bouchet_94}), we conclude that $\mathcal{R}(G^3) = 3$. For $k = 1, 2$, as $\mathcal{R}^p(G^k) = 2$, we have $\mathcal{R}(G^k) = 2$.
	
\end{example}

\section{Extended Crown Graphs}
\label{Example-ecg}

In this section, we provide a class of bipartite graphs, called extended crown graphs, for which the \textit{prn} can be investigated in view of Theorem \ref{main_result}. Thus, extended crowns graphs are a class of bipartite graphs whose \textit{prn} attains the upper bound given in this work. In this section, we also provide a partial characterization for extended crown graphs of \textit{prn} three.

An extended crown graph is constructed by extending a crown graph based on a given poset. Let $P$ be a poset. We construct a bipartite graph $G = (A \cup B, E)$ on $2n$ vertices, where $|P| = n$,  as per the following procedure: 

\begin{enumerate}
	\item Let $A$ be the set of elements of $P$ and $B = \{v'\; :\; v \in A\}$.
	\item Consider a largest antichain $M = \{a_1,a_2,\ldots, a_k\} \subseteq A$ of the poset $P$.  
	\item Let $M' = \{a_1', a_2', \ldots, a_k'\} \subseteq B$.
	\item For $1 \le i, j \le k$ and $i \ne j$, put $\{a_i, a_j'\} \in E$, i.e., construct a crown graph of $2k$ vertices with $M$ and $M'$ as partite sets.
	\item Then, for all $v \in A \setminus M$,  put $\{v, v'\} \in E$. 
	\item Further, for all $u, v \in A$, if $u < v$ in $P$, then iteratively make the vertices of $N(u)$ adjacent to $v$, i.e., for all $w \in N(u), \{w, v\} \in E$.
	\item The resultant graph $G = (A \cup B, E)$ is an extended crown graph. 
\end{enumerate}

\begin{remark}
	If a poset on $n$ elements is an antichain, then the corresponding extended crown graph is the crown graph $H_{n,n}$.
\end{remark}

\begin{remark}\label{w1poset}
	An extended crown graph of a chain will have $H_{1,1}$ as its largest induced crown graph. An example of the extended crown graph of a 4-element chain considering the antichain $\{1\}$ is depicted in Fig. \ref{ec_4-c}.  
	\begin{figure}
		\centering
		\begin{minipage}{.4\textwidth}
			\centering
			\begin{tikzpicture}[scale=0.5]
			\vertex (1) at (1,0) [label=left:$ 1$] {};  
			\vertex (2) at (1,-1) [label=left:$2$] {};
			\vertex (3) at (1,-2) [label=left:$3$] {};
			\vertex (4) at (1,-3) [label=left:$4$] {};
			\node (P) at (0,-1.5) [label=left:$P :$] {};
			\path		
			(1) edge (2)
			(2) edge (3)
			(3) edge (4);
			\end{tikzpicture}
		\end{minipage}
		\begin{minipage}{.4\textwidth}
			\centering
			\begin{tikzpicture}[scale=0.5]
			\vertex (1) at (1,0) [label=above:$ 1$] {};  		
			\vertex (2) at (2,0) [label=above:$2$] {};
			\vertex (3) at (3,0) [label=above:$3$] {};		
			\vertex (4) at (4,0) [label=above:$4 $] {};			
			\vertex (11) at (1,-1.5) [label=below:$ 1'$] {};  		
			\vertex (12) at (2,-1.5) [label=below:$2'$] {};
			\vertex (13) at (3,-1.5) [label=below:$3'$] {};		
			\vertex (14) at (4,-1.5) [label=below:$4'$] {};			
			\node (P) at (0.5,-0.75) [label=left:$G: $] {};
			\path	
			(2) edge (12)
			(3) edge (13)
			(4) edge (14)
			(1) edge (12)
			(1) edge (13)
			(1) edge (14)
			(2) edge (13)
			(2) edge (14)
			(3) edge (14);
			\end{tikzpicture}
		\end{minipage}
		\caption{An extended crown graph of a 4-element chain}
		\label{ec_4-c}
	\end{figure}
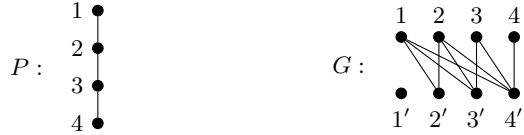
\end{remark}

\begin{remark}
	Extended crown graphs generated by a poset (using different largest antichains) need not be isomorphic. However, extended crown graphs generated by a poset of width two are isomorphic.
\end{remark}

\begin{remark}\label{re_w2}
	Let $G = (A \cup B, E)$ be the extended crown graph generated by a poset $P$ of width two. It can be easily observed that $P_A$ and $P$ are the same, where $P_A$ is the poset induced by the neighborhood graph of $G$ with respect to $A$. This need not be true for posets of width at least three. For example, the poset $P$ and the corresponding $P_A$ given in Fig. \ref{pa!=p} are not the same.
	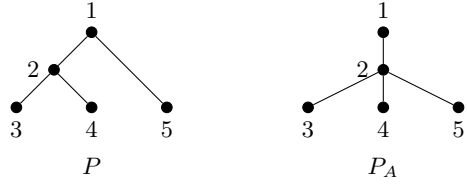
\begin{figure}
		\centering
		\begin{tabular}{ccc}
			\begin{tikzpicture}[scale=0.5]
			\vertex (1) at (0,0) [label=above:$1$] {};  		
			\vertex (2) at (-1,-1) [label=left:$2$] {};
			\vertex (3) at (-2,-2) [label=below:$3$] {};		
			\vertex (4) at (0,-2) [label=below:$4$] {};	
			\vertex (5) at (2,-2) [label=below:$5$] {};
			\path
			(1) edge (2)		
			(2) edge (3)
			(2) edge (4)
			(1) edge (5);			
			\end{tikzpicture}&\qquad \qquad \qquad&
			\begin{tikzpicture}[scale=0.5]
			\vertex (1) at (0,0) [label=above:$1$] {};  		
			\vertex (2) at (0,-1) [label=left:$2$] {};
			\vertex (3) at (-2,-2) [label=below:$3$] {};		
			\vertex (4) at (0,-2) [label=below:$4$] {};	
			\vertex (5) at (2,-2) [label=below:$5$] {};
			\path
			(1) edge (2)		
			(2) edge (3)
			(2) edge (4)
			(2) edge (5);			
			\end{tikzpicture}\\
			$P$ & & $P_A$
		\end{tabular}
		\caption{An example for Remark \ref{re_w2}}
		\label{pa!=p}
	\end{figure} 
\end{remark}

However, in the following proposition, we observe that the widths of $P$ and $P_A$ are the same. Accordingly, the \textit{prn} of extended crown graphs can be estimated as stated in Corollary \ref{prn_ec}.

\begin{proposition}\label{wPaisw_P}
	Suppose $G = (A \cup B, E)$ is an extended crown graph defined by a poset $P$. Then, $w(P_A) = w(P)$ and hence, $\kappa_0 = w(P)$.
\end{proposition}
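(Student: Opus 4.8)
The plan is to extract from the construction an explicit description of the neighborhoods in $G$, use it to sandwich $w(P_A)$ between $w(P)$ and $w(P)$, and then recover $\kappa_0 = w(P)$ by exhibiting an antichain of the same size inside $P_B$. The technical heart is a neighborhood formula. Writing $N_0(u)$ for the neighborhood of $u \in A$ created by Steps 4 and 5 alone (so that $N_0(a_i) = M' \setminus \{a_i'\}$ for $a_i \in M$, and $N_0(w) = \{w'\}$ for $w \in A \setminus M$), I would first prove that the closure performed in Step 6 yields
\[ N(v) = \bigcup_{u \le v \text{ in } P} N_0(u) \qquad (v \in A). \]
The inclusion $\supseteq$ is immediate, since at the fixed point of Step 6 one has $N(u) \subseteq N(v)$ whenever $u < v$ in $P$. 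The inclusion $\subseteq$ I would prove by induction on the iterative edge additions: the invariant $N(v) \subseteq \bigcup_{u \le v} N_0(u)$ holds before Step 6, and whenever a vertex $w \in N(u)$ (with $u < v$) is added to $N(v)$, the inductive hypothesis places $w$ in some $N_0(t)$ with $t \le u$, so transitivity of $P$ gives $t \le v$ and the invariant persists.

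With this formula in hand, the inequality $w(P_A) \le w(P)$ is immediate: from the formula $u \le v$ in $P$ forces $N(u) \subseteq N(v)$, i.e. $u \le v$ in $P_A$, so the order of $P$ is contained in that of $P_A$; hence every antichain of $P_A$ is an antichain of $P$. For the reverse inequality I would verify that the antichain $M = \{a_1,\dots,a_k\}$ used in the construction (with $k = w(P)$) is still an antichain in $P_A$. Because $M$ is an antichain of $P$, no $a_l \le a_i$ holds for $l \ne i$, so the formula gives $a_i' \notin N(a_i)$; on the other hand $a_i' \in N_0(a_j) \subseteq N(a_j)$ for every $j \ne i$. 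Thus $a_i' \in N(a_j) \setminus N(a_i)$ and, symmetrically, $a_j' \in N(a_i) \setminus N(a_j)$, so $N(a_i)$ and $N(a_j)$ are incomparable. Hence $M$ witnesses $w(P_A) \ge k$, and combining the two bounds gives $w(P_A) = w(P)$.

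It remains to deduce $\kappa_0 = \min\{w(P_A), w(P_B)\} = w(P)$, for which it suffices to check $w(P_B) \ge w(P)$. Reading the same formula from the $B$-side, the vertex $a_i'$ is adjacent to $v \in A$ exactly when $a_j \le v$ for some $a_j \in M$ with $j \ne i$. In particular $a_i \notin N(a_i')$ (as $M$ is an antichain), while $a_i \in N(a_j')$ for $j \ne i$ (take $a_i \le a_i$). Therefore $M' = \{a_1',\dots,a_k'\}$ is an antichain of size $k$ in $P_B$, so $w(P_B) \ge w(P) = w(P_A)$ and the minimum defining $\kappa_0$ equals $w(P)$.

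The hard part will be the neighborhood formula: making the iterative closure of Step 6 precise and proving the $\subseteq$ direction cleanly, since one must rule out edges being created that are not accounted for by the down-set $\{u : u \le v\}$. Once that description is established, every remaining step — the containment of orders, the incomparability of the $N(a_i)$, and the dual statement for the $N(a_i')$ — reduces to the single fact that $M$ is an antichain of $P$, so the rest is routine bookkeeping rather than a genuine obstacle.
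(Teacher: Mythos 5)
Your proof is correct and follows essentially the same route as the paper's: lower-bound $w(P_A)$ by showing the antichain $M$ survives as an antichain of $P_A$ (the crown on $M \cup M'$ persists), upper-bound it by noting that Step 6 forces $N(u) \subseteq N(v)$ whenever $u < v$ in $P$, and then obtain $\kappa_0 = w(P)$ from $M'$ being an antichain of $P_B$. The only difference is one of rigor: your closure formula $N(v) = \bigcup_{u \le v \text{ in } P} N_0(u)$ actually proves what the paper merely ``observes'' from the construction, namely that the iterative Step 6 never creates an edge $\overline{a_i a_i'}$, so the crown on $M \cup M'$ is genuinely an induced subgraph of the final graph.
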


\begin{proof}
	Let $M$ be a largest antichain of $P$. It can be observed from the construction of $G$ using $M$ that the neighborhoods of any pair of vertices in $M$ are incomparable, as the vertices of $M$ and $M'$ form a crown graph. Hence, $M$ is an antichain in $P_A$ so that $w(P_A) \ge w(P)$. 
	
	On the other hand, as per Step-6 in the construction of $G$, if $u < v$ in $P$, then $N(u) \subseteq N(v)$ in $G$. Thus, any two comparable elements in $P$ are comparable in $P_A$. Hence, $w(P_A) \le w(P)$. 
	
	Further, since $G$ has a crown graph of size $w(P)$, we have $w(P_B) \ge w(P_A)$. Consequently, $\kappa_0 = \min\{w(P_A), w(P_B)\} = w(P)$.
	\qed
\end{proof}

If $w(P) = k$, since $H_{k, k}$ is an induced subgraph of $G$, in view of Lemma \ref{lowerbd}, we have the following corollary of Proposition \ref{wPaisw_P}.

\begin{corollary} \label{crown_bound}
	If $w(P) = k$, then $H_{k,k}$ is the largest induced crown graph of $G$.
\end{corollary}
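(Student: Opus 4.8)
The plan is to combine Proposition~\ref{wPaisw_P} with the lower bound already established in Lemma~\ref{lowerbd}, so the corollary really only has to certify that the crown graph $H_{k,k}$ sitting inside $G$ is in fact the \emph{largest} induced crown graph, not merely \emph{an} induced crown graph of size $k$. First I would invoke Proposition~\ref{wPaisw_P}: if $w(P)=k$, then $\kappa_0 = w(P) = k$, and in particular $w(P_A) = w(P_B) = k$. Next, since the largest antichain $M$ of $P$ together with $M'$ forms a crown graph of $2k$ vertices by construction (Step~4), the graph $G$ contains $H_{k,k}$ as an induced subgraph; this gives the existence half.

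The substance of the argument is the maximality: I must rule out $G$ containing any induced crown graph $H_{m,m}$ with $m > k$. The key observation is that an induced $H_{m,m}$ in a bipartite graph $G = (A\cup B, E)$ forces $m$ vertices $a_1,\dots,a_m \in A$ whose neighborhoods are pairwise incomparable (each $a_i$ is adjacent to all but one of the corresponding $B$-vertices, so no $N(a_i)$ can contain another $N(a_j)$). Hence these $m$ vertices form an antichain of size $m$ in $P_A$, forcing $w(P_A) \ge m$. But $w(P_A) = w(P) = k$ by Proposition~\ref{wPaisw_P}, so $m \le k$. Therefore no induced crown graph larger than $H_{k,k}$ exists, and $H_{k,k}$ is the largest induced crown graph of $G$.

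The main obstacle I anticipate is pinning down the claim that an induced $H_{m,m}$ yields an antichain of size $m$ in $P_A$ cleanly enough that it applies to \emph{any} induced crown graph, regardless of how its two colour classes are distributed across the bipartition $\{A,B\}$ of $G$. One must argue that in an induced copy of $H_{m,m}$, one colour class lands entirely inside $A$ (and the other inside $B$), which follows because $G$ is bipartite and $H_{m,m}$ is connected for $m\ge 2$, so each connected component of the crown graph must be monochromatically split across $A$ and $B$; since $H_{m,m}$ is connected, its bipartition must coincide (up to swapping) with the induced restriction of $\{A,B\}$. Once the colour classes are identified with subsets of $A$ and $B$, the incomparability of neighborhoods inside $A$ is immediate from the defining adjacency pattern of the crown graph. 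I would therefore present the maximality argument as the heart of the proof and lean on Proposition~\ref{wPaisw_P} and Lemma~\ref{lowerbd} for the remaining bookkeeping, keeping the $m=1$ degenerate case (where $H_{1,1}$ is a single edge) in mind as a boundary consistent with Remark~\ref{w1poset}.
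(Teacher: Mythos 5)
Your strategy is the one the paper intends but does not spell out: existence of an induced $H_{k,k}$ comes from Step 4 of the construction, and maximality comes from turning any induced $H_{m,m}$ into an antichain of size $m$ in $P_A$, which Proposition \ref{wPaisw_P} caps at $w(P_A)=w(P)=k$. This is genuinely the missing substance (note that the route suggested by the paper's own one-line justification, Lemma \ref{lowerbd} together with $\mathcal{R}^p(G)\le 1+\kappa_0$, would only give $m\le k+1$), and your incomparability argument for the neighborhoods is correct; it is the same device the paper uses inside the proof of Proposition \ref{wPaisw_P} to obtain $w(P_B)\ge w(P_A)$.

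There is, however, one false step: $H_{2,2}$ is $K_{2,2}$ minus a perfect matching, i.e.\ two disjoint edges, so $H_{m,m}$ is connected only for $m=1$ and $m\ge 3$, not ``for $m\ge 2$'' as you claim. Your colour-class alignment argument therefore breaks exactly at $m=2$, which is the case that matters when $k=1$ (extended crown graphs of chains, Remark \ref{w1poset}): an induced $2K_2$ on $x_1,x_2,y_1,y_2$ with edges $\overline{x_1y_2}$ and $\overline{x_2y_1}$ can sit with $x_1,y_1\in A$ and $x_2,y_2\in B$, so its crown bipartition need not align with $\{A,B\}$. The gap is easily repaired without any appeal to connectivity: each of the two disjoint edges has exactly one endpoint in $A$; these two $A$-vertices are non-adjacent, and (using that the copy is induced) each is adjacent to a $B$-vertex of the configuration that the other is not, so their neighborhoods are incomparable, giving an antichain of size $2$ in $P_A$ and contradicting $w(P_A)=k=1$. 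With that boundary case handled, your proof is complete.
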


This, in turn, establishes that the extended crown graphs satisfy Theorem \ref{main_result}, as stated in the following corollary. 

\begin{corollary}\label{prn_ec}
	If $G$ is an extended crown graph, then $\mathcal{R}^p(G) = \kappa_0$ or $1 + \kappa_0 $.
\end{corollary}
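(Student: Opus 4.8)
The plan is to chain together results already established in the preceding material, so the corollary follows almost immediately once the right pieces are assembled. The statement to prove is: if $G$ is an extended crown graph, then $\mathcal{R}^p(G) = \kappa_0$ or $1 + \kappa_0$. The natural strategy is to verify that every hypothesis of Theorem \ref{main_result} holds for an extended crown graph $G$ generated by a poset $P$, and then simply invoke that theorem.

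First I would set $k = w(P)$ and recall from Proposition \ref{wPaisw_P} that $\kappa_0 = w(P) = k$. Next, I would apply Corollary \ref{crown_bound}, which asserts precisely that when $w(P) = k$, the crown graph $H_{k,k}$ is the \emph{largest} induced crown graph of $G$. These two facts together say exactly that the largest induced crown graph of $G$ is $H_{k,k}$ with $k = \kappa_0$. This is verbatim the hypothesis ``$k = \kappa_0$'' required by Theorem \ref{main_result}. Therefore Theorem \ref{main_result} applies directly and yields $\mathcal{R}^p(G) = \kappa_0$ or $1 + \kappa_0$, which is the desired conclusion.

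Concretely, the proof is only a couple of sentences: let $G$ be an extended crown graph defined by a poset $P$, put $k = w(P)$, and observe that Proposition \ref{wPaisw_P} gives $\kappa_0 = w(P) = k$ while Corollary \ref{crown_bound} gives that $H_{k,k}$ is the largest induced crown graph of $G$. Since $k = \kappa_0$, the hypothesis of Theorem \ref{main_result} is satisfied, and hence $\mathcal{R}^p(G) = \kappa_0$ or $1 + \kappa_0$.

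There is essentially no hard step here, since all the genuine work has been discharged earlier: Proposition \ref{wPaisw_P} is where the geometry of the construction (Steps 4 and 6 of the extended-crown recipe) is used to pin down the width of $P_A$, and Corollary \ref{crown_bound} is where the induced-crown structure is identified via Lemma \ref{lowerbd}. The only thing I would be careful about is making explicit the identification of the parameter $k$ in Theorem \ref{main_result} with $w(P)$, so that the two prerequisites ($\kappa_0 = k$ and ``$H_{k,k}$ is the largest induced crown graph'') line up on the same value of $k$; once that bookkeeping is clear, the corollary is an immediate consequence.
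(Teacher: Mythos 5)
Your proposal is correct and follows exactly the paper's route: the paper derives Corollary \ref{prn_ec} by combining Proposition \ref{wPaisw_P} (which gives $\kappa_0 = w(P)$) with Corollary \ref{crown_bound} (which gives that $H_{k,k}$ with $k = w(P)$ is the largest induced crown graph), so that the hypothesis $k = \kappa_0$ of Theorem \ref{main_result} is satisfied. Your bookkeeping remark about aligning the parameter $k$ across the two prerequisites is precisely the (implicit) content of the paper's one-line justification.
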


\begin{remark}
	If $G$ is an extended crown graph defined by a poset of width one, then $\mathcal{R}^p(G) = 2$ because $G$ is a bipartite graph. 
\end{remark}	
In the following, we characterize the \textit{prn} of extended crown graphs defined by the posets of width two. 

\subsection{Posets of Width 2}
\label{poset_w2}
	
Let $P$ be a poset of width two. By Theorem \ref{main_result}, the \textit{prn} of an extended crown graph $G$ defined by $P$ is either two or three. Here, in Theorem \ref{chr_prn3}, we prove that $\mathcal{R}^p(G) = 2$ if and only if the cover graph of $P$ is a path or a disconnected graph. 

First, note that if the cover graph of $P$ is disconnected, then $P$ is a union of two chains (in which no two elements chosen one each from the two chains are comparable). In this case, the graph $G$ is disconnected, having two components. It can be observed that the poset induced by the neighborhood graph of each of these components is a chain. Accordingly, in view of Example \ref{ex_pa_chain}, we can conclude the \textit{prn} of $G$ as per the following remark.

\begin{remark}
	If $P$ is a union of two chains, then $\mathcal{R}^p(G) = 2$. 
\end{remark}

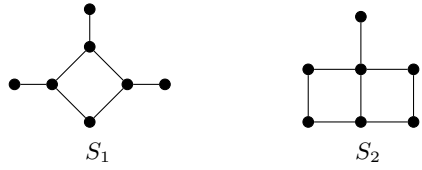
\begin{figure}
	\centering
	\begin{tabular}{ccc}
		\begin{tikzpicture}[scale=0.5]
		\vertex (1) at (1,1) [label=above:$ $] {};  
		\vertex (2) at (2,2) [label=left:$ $] {};
		\vertex (4) at (2,0) [label=right:$ $] {};
		\vertex (6) at (0,1) [label=right:$ $] {};
		\vertex (3) at (3,1) [label=left:$ $] {};
		\vertex (5) at (2,3) [label=right:$ $] {};
		\vertex (7) at (4,1) [label=right:$ $] {};
		\path		
		(1) edge (2)
		(2) edge (3)
		(3) edge (4)
		(1) edge (4)
		(2) edge (5)
		(1) edge (6)
		(3) edge (7); 		
		\end{tikzpicture}&\qquad \qquad\qquad&
		\begin{tikzpicture}[scale=0.7]
		\vertex (1) at (1,1) [label=above:$ $] {};  
		\vertex (2) at (2,1) [label=left:$ $] {};
		\vertex (3) at (2,0) [label=left:$ $] {};
		\vertex (4) at (1,0) [label=right:$ $] {};
		\vertex (5) at (3,1) [label=right:$ $] {};
		\vertex (6) at (3,0) [label=right:$ $] {};
		\vertex (7) at (2,2) [label=right:$ $] {};
		\path		
		(1) edge (2)
		(2) edge (3)
		(3) edge (4)
		(1) edge (4)
		(2) edge (5)
		(5) edge (6)
		(3) edge (6)
		(2) edge (7); 		
		\end{tikzpicture}\\
		$S_1$ & \qquad & $S_2$
	\end{tabular}	
	\caption{Forbidden induced subgraphs for \textit{prn} two graphs}
	\label{forb_perm}
\end{figure}

If the cover graph of $P$ is connected, then one among the posets $P_1, P_1', P_2$, $P_3, P_3', P_4$ and $P_5$ given in Fig. \ref{extended_crowns} is a subposet of $P$. Note that the cover graph of each of $P_1, P_1'$ and $P_2$ is a path, but it is not in the case of $P_3, P_3', P_4$ and $P_5$. The extended crown graph defined by both $P_3, P_3'$ is $G_3$, while it is $G_4$ for $P_4$ and $G_5$ for $P_5$ as shown in Fig. \ref{extended_crowns}. Recall from \cite{Gallai} that $S_1$ and $S_2$ given in Fig. \ref{forb_perm} are forbidden induced subgraphs for the \textit{prn} two graphs. As $S_1$ is an induced subgraphs of $G_3$ and $G_4$, and $S_2$ is an induced subgraph of $G_5$, the \textit{prn} of $G_3$, $G_4$ and $G_5$ are at least 3. By the method given in Section \ref{poly_proc}, we can construct 3-uniform words that permutationally represent $G_3, G_4$, and $G_5$ (see Appendix \ref{word_construct}). Hence, $\mathcal{R}^p(G_3) =\mathcal{R}^p(G_4) = \mathcal{R}^p(G_5) = 3$.

\begin{figure}[h!]
	\centering
	\begin{tabular}{|c|c|c|}
		\hline
		&&\\
		\begin{tikzpicture}[scale=0.5]
		\vertex (1) at (1,1) {};  
		\vertex (2) at (2,2) {};
		\vertex (3) at (3,1) {};
		\path
		(1) edge (2)		
		(2) edge (3);
		\end{tikzpicture}&
		\begin{tikzpicture}[scale=0.5]
		\vertex (1) at (1,2) {};  
		\vertex (2) at (2,1) {};
		\vertex (3) at (3,2) {};
		\path
		(1) edge (2)		
		(2) edge (3);
		\end{tikzpicture}&
		\begin{tikzpicture}[scale=0.6]
		\vertex (1) at (1,1) {};  
		\vertex (2) at (1,2) {};
		\vertex (3) at (2,1) {};
		\vertex (4) at (2,2) {};
		\path
		(1) edge (2)		
		(2) edge (3)
		(3) edge (4);
		\end{tikzpicture}\\
		$P_1$ & $P_1'$ & $P_2$\\	
		\hline
		\begin{tikzpicture}[scale=0.5]
		\vertex (1) at (0,0) [label=above:$ $] {};  		
		\vertex (2) at (0,-1) [label=left:$ $] {};
		\vertex (3) at (-1,-2) [label=below:$ $] {};		
		\vertex (4) at (1,-2) [label=below:$ $] {};	
		\path
		(1) edge (2)		
		(2) edge (3)
		(2) edge (4);			
		\end{tikzpicture} 
		\begin{tikzpicture}[scale=0.5]
		\vertex (1) at (0,0) [label=above:$ $] {};  		
		\vertex (2) at (0,1) [label=left:$ $] {};
		\vertex (3) at (-1,2) [label=below:$ $] {};		
		\vertex (4) at (1,2) [label=below:$ $] {};	
		\path
		(1) edge (2)		
		(2) edge (3)
		(2) edge (4);			
		\end{tikzpicture} & 
		\begin{tikzpicture}[scale=0.5]
		\vertex (1) at (-2,0) [label=above:$ $] {};  		
		\vertex (2) at (-2,-2) [label=below:$ $] {};
		\vertex (3) at (0,0) [label=above:$ $] {};		
		\vertex (4) at (0,-2) [label=below:$ $] {};	
		\path
		(1) edge (2)
		(1) edge (4)
		(3) edge (2)
		(3) edge (4);			
		\end{tikzpicture} & 	
		\begin{tikzpicture}[scale=0.5]
		\vertex (1) at (0,0) [label=above:$ $] {};  		
		\vertex (2) at (-1,-1) [label=left:$ $] {};
		\vertex (3) at (0,-2) [label=below:$ $] {};		
		\vertex (4) at (1,-1) [label=right:$ $] {};	
		\path
		(1) edge (2)		
		(1) edge (4)
		(2) edge (3)
		(4) edge (3);			
		\end{tikzpicture}	 \\
		$P_3$\qquad\qquad $P_3'$ & $P_4$ & $P_5$ \\
		\begin{tikzpicture}[scale=0.5]
		\vertex (1) at (0,-1) [label=left:$ $] {};  		
		\vertex (2) at (0,-3) [label=below:$ $] {};
		\vertex (3) at (-2,-2) [label=left:$ $] {};		
		\vertex (4) at (2,-2) [label=right:$ $] {};		
		\vertex (11) at (0,0) [label=above:$ $] {};  		
		\vertex (12) at (0,-2) [label=left:$ $] {};
		\vertex (13) at (1,-2) [label=above right:$ $] {};		
		\vertex (14) at (-1,-2) [label=above left:$ $] {};
		\node (P1) at (0,-4)	[label=below:$ $] {};		
		\path
		(11) edge (1)
		(1) edge (14)
		(1) edge (12)
		(1) edge (13)
		(2) edge (14)
		(2) edge (12)
		(2) edge (13)
		(3) edge (14)
		(4) edge (13);			
		\end{tikzpicture} & 
		\begin{tikzpicture}[scale=0.5]
		\vertex (1) at (0,-1) [label=left:$ $] {};  		
		\vertex (2) at (0,-3) [label=below:$ $] {};
		\vertex (3) at (-2,-2) [label=left:$ $] {};		
		\vertex (4) at (2,-2) [label=right:$ $] {};		
		\vertex (11) at (0,0) [label=above:$ $] {};  		
		\vertex (12) at (0,-4) [label=left:$ $] {};
		\vertex (13) at (1,-2) [label=above right:$ $] {};		
		\vertex (14) at (-1,-2) [label=above left:$ $] {};
		\path
		(11) edge (1)
		(1) edge (14)
		(1) edge (13)
		(2) edge (14)
		(2) edge (12)
		(2) edge (13)
		(3) edge (14)
		(4) edge (13);			
		\end{tikzpicture} & 
		\begin{tikzpicture}[scale=0.7]
		\vertex (1) at (0,0) [label=left:$ $] {};  		
		\vertex (2) at (1,0) [label=below:$ $] {};
		\vertex (3) at (2,0) [label=left:$ $] {};		
		\vertex (4) at (0,-1) [label=right:$ $] {};		
		\vertex (5) at (1,-1) [label=above:$ $] {};  		
		\vertex (6) at (2,-1) [label=left:$ $] {};
		\vertex (7) at (1,1) [label=above right:$ $] {};		
		\vertex (8) at (1,-2) [label=above left:$ $] {};	
		\path
		(1) edge (2)
		(2) edge (3)
		(3) edge (6)
		(6) edge (5)
		(5) edge (4)
		(4) edge (1)
		(2) edge (5)
		(7) edge (2)
		(8) edge (5);			
		\end{tikzpicture}\\
		$G_3$ & $G_4$ & $G_5$ \\
		\hline
	\end{tabular}
	\caption{Posets of width two and corresponding extended crown graphs}
	\label{extended_crowns}			
\end{figure}
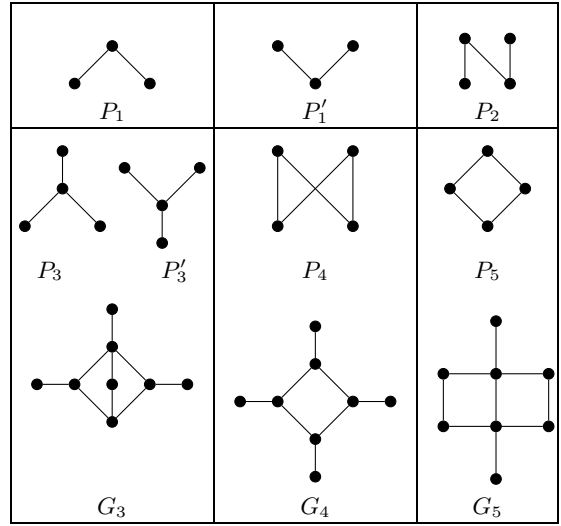

Suppose the cover graph of $P$ is a path. Then, $P$ has $P_1, P_1'$ or $P_2$ as one of its subposets, and hence, $P$ can be seen as any one of the two types shown in Fig. \ref{path_poset}. Accordingly, as per the following cases, we prove that an extended crown graph $G$ defined by $P$ has \textit{prn} two. 
\begin{itemize}
	\item $P$ is of Type-1: Let $P = X_1 \cup X_2 \cup \{a\}$, where $X_1$ and $X_2$ are chains and none of the elements of $X_1$ are comparable to elements of $X_2$ and vice versa. Also, the element $a$ is comparable to all elements of $P$. Note that $\{\{a\} \cup X_1, X_2\}$ is a minimal chain cover of $P$ and hence of $P_A$ in view of Remark \ref{re_w2}. The chain cover satisfies the condition given in Theorem \ref{k-rep}. Hence, $\mathcal{R}^p(G) = 2$.
	\begin{figure}
		\centering
		\begin{minipage}{.5\textwidth}
			\centering
			\begin{tikzpicture}[scale=0.5]
			\vertex (1) at (1,1) [label=left:$ $] {};  
			\vertex (2) at (2,2) [label=below:$ a $] {};
			\vertex (3) at (3,1) [label=right:$ $] {};
			\vertex (4) at (3,0) [label=below:$ $] {};
			\vertex (5) at (1,0) [label=below:$ $] {};
			\vertex (6) at (1,-1) [label=left:$ $] {}; 
			\vertex (7) at (3,-1) [label=below:$ $] {};
			\vertex (8) at (1,-2) [label=left:$ $] {}; 
			\vertex (9) at (3,-2) [label=below:$ $] {};
			\vertex (10) at (5,-1) [label=left:$ $] {};  
			\vertex (11) at (6,-2) [label=above:$ a $] {};
			\vertex (12) at (7,-1) [label=right:$ $] {};
			\vertex (13) at (7,0) [label=below:$  $] {};
			\vertex (14) at (5,0) [label=below:$  $] {};
			\vertex (15) at (5,1) [label=left:$  $] {}; 
			\vertex (16) at (7,1) [label=below:$  $] {};
			\vertex (17) at (5,2) [label=left:$ $] {}; 
			\vertex (18) at (7,2) [label=below:$ $] {};
			\path		
			(1) edge (2)
			(2) edge (3)
			(3) edge (4)
			(5) edge (1)
			(6) edge (8)
			(7) edge (9)
			(6) edge [dashed] (5)
			(7) edge[dashed] (4)
			(10) edge (11)
			(11) edge (12)
			(12) edge (13)
			(14) edge (10)
			(15) edge (17)
			(16) edge (18)
			(15) edge [dashed] (14)
			(16) edge[dashed] (13);
			\end{tikzpicture}
			
			Type-1
		\end{minipage}
		\begin{minipage}{.3\textwidth}
			\centering
			\begin{tikzpicture}[scale=0.5]
			\vertex (1) at (1,1) [label=left:$a_k $] {};  
			\vertex (3) at (3,1) [label=right:$a_{n} $] {};
			\vertex (4) at (3,0) [label=right:$a_{n-1} $] {};
			\vertex (5) at (1,0) [label=left:$a_{k-1} $] {};
			\vertex (14) at (3,-1) [label=right:$a_j $] {};
			\vertex (15) at (1,-1) [label=left:$a_i $] {};
			\vertex (6) at (1,-2) [label=left:$ a_2$] {}; 
			\vertex (7) at (3,-2) [label=right:$a_{k+2} $] {};
			\vertex (8) at (1,-3) [label=left:$ a_1$] {}; 
			\vertex (9) at (3,-3) [label=right:$ a_{k+1} $] {};
			\path		
			(3) edge (4)
			(5) edge (1)
			(6) edge (8)
			(7) edge (9)
			(1) edge (9)
			(4) edge [dashed] (14)
			(5) edge[dashed] (15)
			(15) edge [dashed] (6)
			(7) edge[dashed] (14);
			\end{tikzpicture}
			
			Type-2
		\end{minipage}
		\caption{Posets whose cover graph is a path}
		\label{path_poset}
	\end{figure}
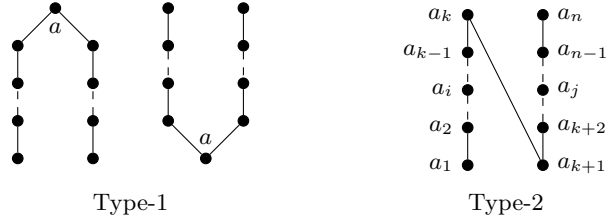		
	
	\item $P$ is of Type-2: Consider the largest antichain $\{a_i, a_j\}$ of $P$. The respective extended crown graph $G$ is as shown in Fig. \ref{ec_type-2}.  Consider the following permutations :
	$$p_1 : w_1 a_{k+1} a_k a_{k+1}' a_k' w_2 \ \text{and} \ p_2 : w_3 w_4 $$		
	where, 
	\begin{align*}
	w_1 &= a_n a_n' a_{n-1} a_{n-1}'\cdots a_j a_i' \cdots a_{k+2} a_{k+2}'\\
	w_2 &= a_{k-1} a_{k-1}'\cdots a_i a_j' \cdots a_{2} a_{2}' a_{1} a_{1}' \\
	w_3 &= a_{1} a_{2} \cdots a_i \cdots a_k a_{1}' a_{2}' \cdots a_j' \cdots a_{k}'\\
	w_4 &= a_{k+1} a_{k+2} \cdots a_j\cdots a_n a_{k+1}' a_{k+2}' \cdots a_i'\cdots a_n'		
	\end{align*}
	It can be observed that the word $p_1p_2$ represents $G$ permutationally. Hence, $\mathcal{R}^p(G) = 2$.
\end{itemize}

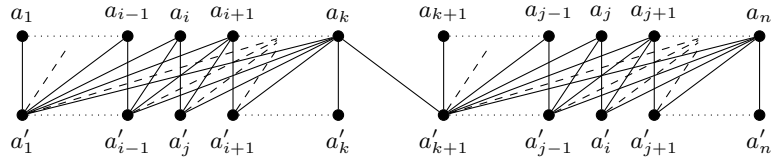
\begin{figure}
	\centering
	\begin{tikzpicture}[scale=0.7]
	\vertex (1) at (0,0) [label=above:$a_1$] {};
	\node (a) at (1,0) [label=above:$$] {};
	\vertex (2) at (2,0) [label=above:$a_{i-1}$] {};
	\vertex (3) at (3,0) [label=above:$a_i$] {};		
	\vertex (4) at (4,0) [label=above:$a_{i+1}$] {};
	\node (b) at (5,0) [label=above:$$] {};
	\vertex (5) at (6,0) [label=above:$a_{k}$] {};	
	\vertex (6) at (8,0) [label=above:$a_{k+1}$] {};	
	\node (c) at (9,0) [label=above:$$] {};
	\vertex (7) at (10,0) [label=above:$a_{j-1}$] {};	
	\vertex (8) at (11,0) [label=above:$a_j$] {};	
	\vertex (9) at (12,0) [label=above:$a_{j+1}$] {};	
	\node (d) at (13,0) [label=above:$$] {};
	\vertex (10) at (14,0) [label=above:$a_n$] {};	
	
	\vertex (11) at (0,-1.5) [label=below:$a_1'$] {};
	\vertex (12) at (2,-1.5) [label=below:$a_{i-1}'$] {};
	\vertex (13) at (3,-1.5) [label=below:$a_j'$] {};		
	\vertex (14) at (4,-1.5) [label=below:$a_{i+1}'$] {};	
	\vertex (15) at (6,-1.5) [label=below:$a_{k}'$] {};	
	\vertex (16) at (8,-1.5) [label=below:$a_{k+1}'$] {};	
	\vertex (17) at (10,-1.5) [label=below:$a_{j-1}'$] {};	
	\vertex (18) at (11,-1.5) [label=below:$a_i'$] {};	
	\vertex (19) at (12,-1.5) [label=below:$a_{j+1}'$] {};	
	\vertex (20) at (14,-1.5) [label=below:$a_n'$] {};	
	\path		
	(1) edge[dotted] (2)
	(4) edge[dotted] (5)
	(6) edge[dotted] (7)
	(9) edge[dotted] (10)
	(11) edge[dotted] (12)
	(14) edge[dotted] (15)
	(16) edge[dotted] (17)
	(19) edge[dotted] (20)
	(11) edge (1)
	(11) edge (2)
	(11) edge (3)
	(11) edge (4)
	(11) edge (5)
	(11) edge[dashed] (a)
	(11) edge[dashed] (b)
	(12) edge (2)
	(12) edge (3)
	(12) edge (4)
	(12) edge (5)	
	(12) edge[dashed] (b)
	(13) edge (3)
	(13) edge (4)
	(13) edge (5)	
	(13) edge[dashed] (b)
	(14) edge (4)
	(14) edge (5)	
	(14) edge[dashed] (b)
	(15) edge (5)
	(16) edge (5)
	(16) edge (6)
	(16) edge (7)
	(16) edge (8)
	(16) edge (9)
	(16) edge (10)
	(16) edge[dashed] (c)
	(16) edge[dashed] (d)
	(17) edge (7)
	(17) edge (8)
	(17) edge (9)
	(17) edge (10)	
	(17) edge[dashed] (d)
	(18) edge (8)
	(18) edge (9)
	(18) edge (10)	
	(18) edge[dashed] (d)
	(19) edge (9)
	(19) edge (10)	
	(19) edge[dashed] (d)
	(20) edge (10);
	
	\end{tikzpicture}
	\caption{Extended crown graph for Type-2 poset}
	\label{ec_type-2}
\end{figure}

Thus, the result obtained in this section is summarized in the following theorem.

\begin{theorem}\label{chr_prn3}
	Let $G$ be the extended crown graph defined by a poset $P$ of width two. Then, $\mathcal{R}^p(G) = 3$ if and only if the cover graph of $P$ is connected and not a path.
\end{theorem}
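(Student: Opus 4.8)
The plan is to leverage the dichotomy that the rest of Section \ref{poset_w2} has already set up. Since $P$ has width two, Proposition \ref{wPaisw_P} gives $\kappa_0 = w(P) = 2$, and Corollary \ref{prn_ec} then forces $\mathcal{R}^p(G) \in \{2,3\}$. Hence the biconditional is equivalent to showing that $\mathcal{R}^p(G) = 2$ holds \emph{exactly} when the cover graph of $P$ is disconnected or a path, and it suffices to run through the three mutually exclusive cases for the cover graph of $P$: disconnected, a connected path, and connected-but-not-a-path. In the first two I would assemble the $\mathcal{R}^p(G)=2$ arguments, and in the third the $\mathcal{R}^p(G)=3$ argument.

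For the two cases giving $2$, I would reuse the constructions developed earlier in the section. If the cover graph is disconnected, then a width-two poset splits as a disjoint union of two chains, so $G$ has two components, each an extended crown graph of a chain whose induced neighborhood poset is again a chain; Example \ref{ex_pa_chain} together with the corollary of Lemma \ref{disconnected} then yields $\mathcal{R}^p(G) = 2$. If the cover graph is a connected path, then $P$ is a fence that must be of Type-1 or Type-2 (Fig. \ref{path_poset}). For Type-1 I would take the chain cover $\{\{a\} \cup X_1, X_2\}$ of $P_A = P$ (using Remark \ref{re_w2}), check that it satisfies condition $\zeta_B$, and invoke Theorem \ref{k-rep} (equivalently Corollary \ref{kappa0_rep}). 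For Type-2 I would verify directly that the exhibited pair $p_1, p_2$ alternates on every adjacent pair and fails to alternate on every non-adjacent pair, giving $\mathcal{R}^p(G) = 2$.

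For the case giving $3$, I would first establish the structural lemma that a width-two poset whose cover graph is connected and not a path must contain one of $P_3, P_3', P_4, P_5$ as an induced subposet. The argument: a connected graph that is not a path either has a vertex of degree at least three or contains a cycle. A degree-$\ge 3$ vertex $v$ of the Hasse diagram cannot cover (nor be covered by) three pairwise incomparable elements, since width two forbids antichains of size three; so $v$ has the "two-down-one-up" or "one-down-two-up" shape, producing $P_3$ or $P_3'$. For a cycle, the width-two constraint rules out long fences (an alternating $2\ell$-fence would give an antichain of size $\ell$), forcing a four-element cycle, which as a subposet is $P_4$ or $P_5$. I would then transfer this to $G$: the four witnessing elements together with an appropriate set of their primed mates should induce precisely $S_1$ or $S_2$, exactly as observed for the atomic graphs $G_3, G_4, G_5$. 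Since $S_1, S_2$ are forbidden induced subgraphs for permutation graphs (the \textit{prn}-two graphs) by \cite{Gallai}, this gives $\mathcal{R}^p(G) \ge 3$ and hence $\mathcal{R}^p(G) = 3$.

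The main obstacle I anticipate is precisely this transfer step. Containment of a subposet such as $P_3$ in $P$ does not automatically make $S_1$ an induced subgraph of $G$, because Step-6 of the extended-crown construction adds edges from $N(u)$ to every $v$ with $u < v$, so a naively chosen vertex set could pick up spurious edges and induce a denser graph. I would resolve this by pinning down the exact witness vertices — the four poset elements and the specific primed copies realizing the crown and its attachments — and checking that within this set no extra Step-6 edges appear, so that the induced subgraph is exactly $S_1$ or $S_2$. This is careful bookkeeping rather than a conceptual difficulty, but it is the point at which the proof must be done honestly rather than by appeal to the atomic cases $G_3, G_4, G_5$.
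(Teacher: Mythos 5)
Your proposal is correct and follows the same route as the paper: the same trichotomy on the cover graph, the same $\mathcal{R}^p(G)=2$ arguments (union of two chains via Example \ref{ex_pa_chain}; Type-1 via Theorem \ref{k-rep}; Type-2 via the explicit pair of permutations), the upper bound $\mathcal{R}^p(G)\le 3$ from Corollary \ref{prn_ec}, and the lower bound via Gallai's forbidden subgraphs $S_1,S_2$. The genuine difference lies exactly at the step you flag as the main obstacle: the paper never performs that transfer. It only verifies that the three atomic graphs $G_3,G_4,G_5$ --- the extended crown graphs of the four-element posets $P_3/P_3'$, $P_4$, $P_5$ themselves --- contain $S_1$ or $S_2$, and then passes to the theorem, leaving implicit both the structural claim (a width-two poset with connected, non-path cover graph has one of $P_3,P_3',P_4,P_5$ as a subposet) and the transfer (that this containment forces $S_1$ or $S_2$ to be induced in the extended crown graph of $P$ itself). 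Your concern about the transfer is well founded, since the extended crown graph of a subposet $Q$ need not sit inside that of $P$ as an induced subgraph: the construction for $P$ uses a largest antichain of $P$, which may be disjoint from the elements witnessing $Q$. The bookkeeping you propose does go through. For instance, for $P_3$, i.e.\ $x_1,x_2<y<z$ with $x_1\parallel x_2$, choose $b_1\in N(x_1)\setminus N(x_2)$ and $b_2\in N(x_2)\setminus N(x_1)$ (these exist because $P_A=P$ for width two, Remark \ref{re_w2}, so incomparable elements have incomparable neighborhoods), and $c\in N(z)\setminus N(y)$ (nonempty since $y<z$ forces $N(y)\subsetneq N(z)$ in the construction); then $\{x_1,x_2,y,z,b_1,b_2,c\}$ induces exactly $S_1$, namely the $4$-cycle $z\,b_1\,y\,b_2$ with pendants $x_1,c,x_2$ on the consecutive vertices $b_1,z,b_2$. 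Analogous witnesses handle $P_3'$, $P_4$, and $P_5$ (the last giving $S_2$). One correction to your structural lemma: when the cover graph is a cycle, width two does not force a four-element cycle (two chains of length at least two glued at a common top and bottom form an arbitrarily long Hasse cycle of width two); what width two bounds is the number of local maxima of the cycle, by two, and a cycle with one (resp.\ two) local maxima contains $P_5$ (resp.\ $P_4$) as a subposet. With these adjustments, your write-up would in fact be more complete than the paper's own proof.
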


In the following, we further provide the representation number of a special class of extended crown graphs.

\begin{theorem}\label{ec_cycle}
	Let $P$ be a poset of width two whose cover graph is a cycle. If $G$ is an extended crown graph defined by $P$, then $\mathcal{R}(G) = 2$.
\end{theorem}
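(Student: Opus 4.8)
The plan is to show that $G$ has representation number two by exhibiting a $2$-uniform word, i.e.\ a concatenation of two permutations, whose alternation relation on the vertices exactly realizes the edges of $G$; equivalently, to show that $G$ is a circle graph (so that $\mathcal{R}(G)\le 2$ by \cite{halldorsson11}) and is not a complete graph (so that $\mathcal{R}(G)\ge 2$). Since the cover graph of the width-two poset $P$ is a cycle, $P$ must look like a ``crown cycle'': by Theorem \ref{chr_prn3} we already know $\mathcal{R}^p(G)=3$, so $G$ is genuinely a comparability graph of dimension three, and the interesting content is that its (non-permutational) representation number drops to two. First I would pin down the combinatorial structure of $P$: a poset of width two whose Hasse diagram is a single cycle; such a poset is an up-down alternating cycle, and its extended crown graph $G$ (built via Proposition \ref{wPaisw_P} with $w(P)=2$, so $H_{2,2}=C_4$ is the largest induced crown) has a very regular, cyclic adjacency pattern.

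Next I would write $G$ explicitly. Using Remark \ref{re_w2}, $P_A=P$, so the two chains covering $P$ give a clean description of the adjacencies. Concretely, I would label the $2n$ vertices of $G$ as $a_1,\dots,a_n\in A$ and $a_1',\dots,a_n'\in B$ following the cyclic order induced by the Hasse cycle, and record that $\overline{a_i a_j'}\in E$ precisely when the corresponding comparabilities of $P$ force $a_j'\in N(a_i)$ (Step-4 through Step-6 of the extended-crown construction). With this labeling in hand, I would then construct two permutations $q_1,q_2$ of all $2n$ vertices and claim that $q_1q_2$ represents $G$. The natural template to follow is the Type-2 construction already displayed for $\mathcal{R}^p(G)=2$ in Section \ref{poset_w2} and the crown-graph word-construction technique from \cite{glen18,kitaev13}: arrange the $A$-vertices and $B$-vertices in interleaved blocks respecting the two chains, with the two ``crown'' vertices placed so that each $a_i$ alternates with exactly the $a_j'$ in its neighborhood and with no non-neighbor.

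The verification step is a direct but slightly tedious check: for every pair of vertices I must confirm that they alternate in $q_1q_2$ if and only if they are adjacent in $G$. I would organize this by vertex type, as in the proofs of Theorem \ref{rep_G'} and Theorem \ref{k-rep} --- the cases $a,c\in A$, $a\in A$ with $c\in B$, and $a,c\in B$ --- and within the mixed case separate the comparable-chain pairs from the incomparable ``crown'' pair. The cyclic symmetry of $P$ should let me handle all indices uniformly rather than one at a time.

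The hard part will be choosing the two permutations so that the single non-transitive feature of the cycle --- the ``wrap-around'' point where the two chains close up into a cycle --- is handled correctly. In the cycle, unlike a path, there are \emph{two} incomparable extremal pairs that must each induce the non-edges and edges of a $C_4$ crown simultaneously, and a naive interleaving that works along the path (Type-2) can fail to keep a non-adjacent pair non-alternating once the chains are joined. I expect the key trick will be, as in \cite{glen18}, to reverse the block order of $B$ relative to $A$ in the second permutation so that comparabilities collapse to alternations while the two crown vertices are positioned to break alternation with their non-neighbors; confirming that this reversal does not accidentally create a spurious alternation at the wrap-around is the crux of the argument.
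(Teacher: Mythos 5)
Your proposal contains a fatal internal contradiction. You correctly record (via Theorem \ref{chr_prn3}) that $\mathcal{R}^p(G)=3$ when the cover graph of $P$ is a cycle, but you then plan to ``construct two permutations $q_1,q_2$ of all $2n$ vertices and claim that $q_1q_2$ represents $G$.'' If such permutations existed, then by definition $G$ would be permutationally $2$-representable, i.e.\ $\mathcal{R}^p(G)\le 2$ --- contradicting the very fact you just cited. The root of the error is the opening sentence, where you equate ``$2$-uniform word'' with ``concatenation of two permutations.'' These are not the same: a $2$-uniform word only requires each letter to occur exactly twice, with no constraint that the first $n$ positions (or any block) form a permutation. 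The graphs representable by $2$-uniform words are the circle graphs, whereas those representable by concatenations of two permutations are the permutation graphs, a strictly smaller class; the crown graph $H_{3,3}$ (with $\mathcal{R}(H_{3,3})=2$ but $\mathcal{R}^p(H_{3,3})=3$) already separates them, and the present theorem is another instance of exactly this separation. So no choice of $q_1,q_2$, however cleverly interleaved or reversed, can succeed, and the ``crux'' you identify (handling the wrap-around by block reversal within two permutations) is not a hard step but an impossible one.

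The paper's proof avoids this by constructing a genuinely non-permutational $2$-uniform word. After disposing of the height-two case separately (there $P=P_4$ and $G_4$ is a circle graph --- a case your proposal also omits), it takes the chain cover $X_1: a_1<\cdots<a_r$, $X_2: b_1<\cdots<b_s$ and builds
\[w = w_{1}\, w_{2}\, w_{3}\, a_{r}\, w_{4}\, a_1'\, w_{5}\, w_{6},\]
where, e.g., $w_1$ interleaves $X_1$ with its primed partners (with $b_j'$ substituted at the antichain position), $w_2$ repeats $X_1\setminus\{a_r\}$, $w_4$ interleaves $X_2$ with its primed partners, and so on. In this word the two occurrences of a letter such as $a_1$ both lie to the left of the second occurrence of $a_r$, so $w$ is not a concatenation of two permutations --- which is precisely what allows it to represent a graph of \emph{prn} three with a $2$-uniform word. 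Any correct repair of your argument must abandon the two-permutation framework and produce (or otherwise certify, e.g.\ via an explicit chord diagram showing $G$ is a circle graph) such a scattered-occurrence word, then verify alternation case by case as the paper does in its appendix.
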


\begin{proof}
	Since $G$ is a bipartite graph with more than two vertices, $R(G) \ge 2$. Since $w(P)$ = 2, by Theorem \ref{chr_prn3}, $R(G) \le 3$. If $P$ is of height two, then  $P_4$ is the only possible poset, and $\mathcal{R}(G_4) = 2$, as $G_4$ is a circle graph. If $P$ is of height greater than two, then  we show that there is a 2-uniform word that represents $G$.
	
	Let $\{X_1, X_2\}$ be a minimal chain cover of $P$ such that $X_1 : a_1 < a_2 < \cdots < a_{r}$ and $X_2 :  b_1 < b_2 < \cdots< b_{s}$. Assume $s < r$ so that $a_1$ and $a_{r}$ are the minimum and maximum elements of $P$, respectively.  If $\{a_i,b_j\}$, for some $i$ and $j$, is the largest antichain chosen.\\
	Consider the following words:
	\begin{align*}
	w_{1} & = a_{r} a_{r}' a_{r -1 } a_{r -1}' \cdots  a_i b_j' \cdots  a_2 a_2' a_1 a_1', \text{with the elements of } X_1 \cup X_1'\\
	w_{2}& = a_1 a_2 \cdots a_{r -1}, \text{with the elements of } X_1 \setminus \{a_{r}\} \\
	w_{3} & = b_1' b_2' \cdots a_i'\cdots b_{s}', \text{with the elements of } X_2'\\
	w_{4} &= b_{s} b_{s}' b_{s -1 } b_{s -1}'  \cdots  b_j a_i'\cdots b_2 b_2' b_1 b_1', \text{with the elements of } X_2 \cup X_2'\\
	w_{5} & = b_1 b_2 \cdots b_{s}, \text{with the elements of } X_2\\
	w_{6} & =  a_2' \cdots b_j'\cdots a_{r}', \text{with the elements of }  X_1'
	\end{align*}
	Note that the word $w$ is given by
	\[w = w_{1} w_{2} w_{3} a_{r} w_{4} a_1' w_{5} w_{6}\]
	is a 2-uniform word. It can be observed that any two vertices $a$ and $b$ are adjacent in $G$ if and only if $a$ and $b$ alternate in $w$ (cf. Appendix \ref{proof_details}). Hence, $\mathcal{R}(G) = 2$.\qed 
\end{proof}

\begin{corollary}
	Let $P$ be a poset of width two whose cover graph is a cycle. An extended crown graph defined by $P$ is a circle graph.
\end{corollary}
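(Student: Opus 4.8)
The plan is to derive the statement directly from Theorem~\ref{ec_cycle} together with the characterization of circle graphs recalled in the introduction. Specifically, it was shown in \cite{halldorsson11} that the class of graphs with representation number at most two is precisely the class of circle graphs; equivalently, a word-representable graph $G$ is a circle graph if and only if $\mathcal{R}(G) \le 2$.

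First I would apply Theorem~\ref{ec_cycle} to conclude that the extended crown graph $G$ defined by $P$ satisfies $\mathcal{R}(G) = 2$, and in particular $\mathcal{R}(G) \le 2$. Then, invoking the above characterization, $G$ must be a circle graph, which establishes the corollary. In effect, the corollary simply repackages the representation-number computation of Theorem~\ref{ec_cycle} through the lens of the circle-graph characterization.

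Since the heavy lifting---namely, the explicit construction of the $2$-uniform word representing $G$ in the proof of Theorem~\ref{ec_cycle}---has already been carried out, this corollary requires no further combinatorial work. There is no genuine obstacle here; the only point one must verify is that the circle-graph characterization is applicable, and this holds because $G$ is bipartite and hence word-representable, so that $\mathcal{R}(G)$ is well defined and the equivalence $\mathcal{R}(G) \le 2 \iff G \text{ is a circle graph}$ may be invoked.
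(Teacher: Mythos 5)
Your proposal is correct and matches the paper's (implicit) argument exactly: the corollary follows immediately from Theorem~\ref{ec_cycle}, which gives $\mathcal{R}(G)=2$, combined with the characterization from \cite{halldorsson11} that the word-representable graphs with representation number at most two are precisely the circle graphs. The paper offers no separate proof for this corollary, as it is exactly this one-line deduction.
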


\section{Conclusion}

In this paper, we observed that the notions of the \textit{prn} of a comparability graph and the dimension of its induced poset are the same, and accordingly, we reconciled the \textit{prn} of some comparability graphs. We have employed the notion of a neighborhood graph to provide a cubic-time computational procedure for constructing permutations of the vertices of a given bipartite graph whose concatenation represents the bipartite graph permutationally. As a result, we establish a better upper bound for the \textit{prn} of a bipartite graph compared to the best-known upper bound given in \cite{Broere18}. In fact, the given procedure estimates the \textit{prn} of certain bipartite graphs. Further, based on our observations, the representation number of a specific type of bipartite graph is conjectured as per the following:  

\begin{conjecture}
	Let $G$ be a bipartite graph. For $\kappa_0 \ge 5$, if $H_{\kappa_0,\kappa_0}$ is a largest induced crown graph in $G$, then $\mathcal{R}(G)$ equals $\lceil \frac{\kappa_0}{2} \rceil$ or $\lceil \frac{1+\kappa_0}{2} \rceil$.
\end{conjecture}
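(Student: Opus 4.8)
The plan is to split the statement into a lower bound and an upper bound and to exploit the fact that the two admissible values coincide in the odd case: since $\lceil\frac{\kappa_0}{2}\rceil=\lceil\frac{1+\kappa_0}{2}\rceil$ precisely when $\kappa_0$ is odd, the genuine content is (i) $\mathcal{R}(G)\ge\lceil\frac{\kappa_0}{2}\rceil$ in all cases, and (ii) $\mathcal{R}(G)\le\lceil\frac{1+\kappa_0}{2}\rceil$. The lower bound (i) I would obtain at once from heredity: deleting every occurrence of a fixed vertex from a $t$-uniform word leaves a $t$-uniform word representing the induced subgraph, so $\mathcal{R}$ is monotone under taking induced subgraphs. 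As $H_{\kappa_0,\kappa_0}$ is an induced subgraph of $G$ and $\mathcal{R}(H_{\kappa_0,\kappa_0})=\lceil\frac{\kappa_0}{2}\rceil$ for $\kappa_0\ge 5$ by \cite{glen18}, this gives $\mathcal{R}(G)\ge\lceil\frac{\kappa_0}{2}\rceil$, exactly the reasoning of Lemma \ref{lowerbd}.

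The whole difficulty lies in the upper bound (ii). The motivation is that Theorem \ref{main_result} already supplies a permutational representation of size at most $1+\kappa_0$, and for the crown itself word-representation halves this to $\lceil\frac{\kappa_0}{2}\rceil$; the conjecture asserts that the same near-halving survives the extra structure of $G$. My plan is therefore to build an explicit $\lceil\frac{1+\kappa_0}{2}\rceil$-uniform word for $G$ by generalising the crown-graph construction of \cite{glen18} together with the insertion idea already used in Examples \ref{deg1} and \ref{all_adj}. First I would fix a minimal chain cover of $P_A$ (or of $P_B$, whichever realises $\kappa_0$) and single out the $2\kappa_0$ vertices of the induced crown as a backbone. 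I would then start from a $\lceil\frac{1+\kappa_0}{2}\rceil$-uniform word $w_0$ representing this crown on the backbone: for odd $\kappa_0$ this is the optimal word of \cite{glen18}, and for even $\kappa_0$ I would pad Glen's $\frac{\kappa_0}{2}$-uniform word by one block (legitimate by the standard upward monotonicity that a $k$-representable graph is $(k{+}1)$-representable, cf. \cite{kitaev15mono}), thereby creating free slots. Each remaining vertex of $G$ would then be inserted into $w_0$, once per block, at positions dictated by its neighborhood relative to each chain, mimicking the centre-insertion that succeeds in Example \ref{all_adj}.

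To prove correctness I would verify, using Lemmas \ref{lemma_1}, \ref{corollary_1} and \ref{lemma_2}, that each insertion (a) leaves every alternation and non-alternation already present among backbone vertices untouched, (b) makes the inserted vertex alternate with exactly its neighbors, and (c) yields the correct pattern between any two inserted vertices. The comparability of the relative neighborhoods $N_{X_i}(b)$ and $N_{X_i}(b')$ along each chain is what should let these placements be chosen consistently and simultaneously, while the one block of slack (the gap between $\lceil\frac{1+\kappa_0}{2}\rceil$ and $\lceil\frac{\kappa_0}{2}\rceil$ in the even case) is precisely what absorbs vertices whose neighborhoods straddle two consecutive blocks of $w_0$.

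The main obstacle is exactly conditions (b)--(c) in full generality. Glen's halving rests on the rigid symmetry of the crown — every vertex misses a single neighbor, the standard-example poset — whereas in an arbitrary $G$ with induced crown $H_{\kappa_0,\kappa_0}$ the chains have unequal lengths, the exclusive-neighborhood blocks differ, and the extra vertices may nest intricately. I would therefore organise the argument as an induction on $|V(G)|-2\kappa_0$, the number of vertices outside the crown: removing one such vertex preserves the hypothesis with the \emph{same} $\kappa_0$ (the largest induced crown is unchanged), so by induction the smaller graph has a $\lceil\frac{1+\kappa_0}{2}\rceil$-uniform word, and the crux collapses to showing that a single vertex can always be reinserted without raising the uniformity — an insertion-without-growth property whose special cases are \cite[Proposition 15]{kitaev13} and Examples \ref{deg1}--\ref{all_adj}. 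Establishing this property \emph{uniformly}, rather than case-by-case, is where I expect the real work to lie, and it is the reason the statement is posed here only as a conjecture.
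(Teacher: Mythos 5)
This statement appears in the paper only as a conjecture: the paper contains no proof of it, and the evidence offered there is exactly the material you invoke ($\mathcal{R}(H_{n,n})=\lceil n/2\rceil$ for $n\ge 5$ from \cite{glen18}, Examples \ref{deg1} and \ref{all_adj}, and the bound $\mathcal{R}(G)\le\mathcal{R}^p(G)\le 1+\kappa_0$ of Theorem \ref{upperbd}). The parts of your proposal that are actually argued are sound: the split into two bounds is exactly right (the two admissible values coincide for odd $\kappa_0$), the lower bound follows from heredity of $k$-uniform words together with \cite{glen18} just as in Lemma \ref{lowerbd}, and your inductive setup is consistent --- deleting a vertex outside a largest induced crown cannot increase $w(P_A)$ or $w(P_B)$ (deletion preserves all neighborhood containments), while the untouched crown $H_{\kappa_0,\kappa_0}$ keeps both widths at least $\kappa_0$, so the hypothesis persists with the same $\kappa_0$.

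But the upper bound, which you correctly identify as the whole difficulty, is not proved: everything is delegated to an ``insertion without growth'' lemma for which you give no argument, and that lemma \emph{is} the conjecture. Two concrete reasons the sketch cannot be completed as written. First, the correctness tools you cite do not apply: Lemmas \ref{lemma_1}, \ref{corollary_1} and \ref{lemma_2} govern words $p_1\cdots p_k$ in which every block is a permutation of \emph{all} vertices, whereas a $\lceil(1+\kappa_0)/2\rceil$-uniform word representing $G$ can never have that form once $\kappa_0\ge 3$: restricting it to the crown's vertices would witness $\mathcal{R}^p(H_{\kappa_0,\kappa_0})\le\lceil(1+\kappa_0)/2\rceil<\kappa_0$, contradicting $\mathcal{R}^p(H_{n,n})=n$. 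Second, the insertion property is false without the hypothesis $\kappa_0\ge 5$, and your argument uses that hypothesis only to quote \cite{glen18} for the backbone word: Example \ref{all_adj} shows $\mathcal{R}(H_{3,3})=2$ yet $\mathcal{R}(G^3)=3$, because inserting a dominating vertex into a $2$-uniform word for $H_{3,3}$ is obstructed by the circle-graph obstruction $S$ of Fig. \ref{cirle_ob} \cite{Bouchet_94}; so any local, block-by-block insertion scheme that is insensitive to $\kappa_0$ is doomed. Note also that for odd $\kappa_0$ your padding trick provides no slack at all, since $\lceil(1+\kappa_0)/2\rceil=\lceil\kappa_0/2\rceil$; precisely in the case where the conjecture pins $\mathcal{R}(G)$ to a single value, your induction must reinsert a vertex into an optimal-length word with zero room to spare. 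In short: a reasonable research plan that correctly reduces the statement to one crisp lemma, but that lemma is left unproved, so the statement remains open --- as the paper itself asserts by posing it as a conjecture.
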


This works opens up some specific questions towards finding the \textit{prn} of various subclasses of comparability graphs. For instance, one may target to characterize the graph classes that correspond to the poset classes whose dimensions are known in the literature. Further, it is interesting to extend the given computational procedure to comparability graphs in general. In line with the contributions in this paper, we also ask some specific questions related to the class of extended crown graphs: 1) Whether an extended crown graph defined by a poset of width two is a circle graph? 2) Characterize the posets of width at least three in terms of the \textit{prn} (and representation number) of their extended crown graphs.

\subsubsection{Acknowledgment.} 
The first author is thankful to the Council of Scientific and Industrial Research (CSIR), Government of India, for awarding the research fellowship for pursuing Ph.D. at IIT Guwahati. 

\bibliographystyle{abbrv}
%\bibliography{thesisref}

\appendix

\section{Demonstration of Procedure given in Section \ref{poly_proc}}
\label{demo}

\begin{example}\label{chain_ex}	
	Let $G = (A \cup B, E)$ be the bipartite graph given in Fig. \ref{fig_G1}.	
	\begin{figure}[!htb]
		\centering
		\begin{minipage}{.55\textwidth}
			\centering
			\begin{tikzpicture}[scale=0.5]				
				\node (A) at (-0.2,0) [label=left:$A$] {};				
				\node (B) at (-0.2,2) [label=left:$B$] {};			
				\vertex (1) at (0,0) [label=below:$1$] {};  
				\vertex (2) at (1,0) [label=below:$2$] {};
				\vertex (3) at (2,0) [label=below:$3$] {};
				\vertex (4) at (3,0) [label=below:$4$] {};
				\vertex (5) at (4,0) [label=below:$5$] {};  
				\vertex (6) at (5,0) [label=below:$6$] {};
				\vertex (7) at (0,2) [label=above:$7$] {};
				\vertex (8) at (1,2) [label=above:$8$] {};
				\vertex (9) at (2,2) [label=above:$9$] {};
				\vertex (10) at (3,2) [label=above:$10$] {};
				\vertex (11) at (4,2) [label=above:$11$] {};  				
				\path				
				(1) edge (8)
				(1) edge (9)
				(2) edge (9)
				(2) edge (10)
				(3) edge (11)				
				(4) edge (7)
				(4) edge (8)
				(4) edge (9)
				(4) edge (10)
				(5) edge (8)
				(5) edge (9)
				(5) edge (10)
				(5) edge (11)
				(6) edge (7)
				(6) edge (8)
				(6) edge (9)	
				(6) edge (10)
				(6) edge (11);		
			\end{tikzpicture}
			\caption{A bipartite graph}
			\label{fig_G1}
		\end{minipage}
		\begin{minipage}{.4\textwidth}
			\centering
			\begin{tikzpicture}[scale=0.7]
				\vertex (1) at (0,0) [label=above:$ $] {};  
				\vertex (2) at (-1,-1) [label=left:$ $] {}; 
				\vertex (3) at (0,-1) [label=right:$ $] {}; 
				\vertex (4) at (1,-1) [label=right:$ $] {}; 
				\vertex (5) at (-2,-2) [label=below:$ $] {}; 
				\vertex (6) at (0,-2) [label=below:$ $] {}; 
				\vertex (7) at (2,-2) [label=below:$ $] {}; 
				\path
				(1) edge (2)
				(1) edge (3)
				(1) edge (4)
				(2) edge (5)
				(3) edge (6)
				(4) edge (7);
			\end{tikzpicture}
			\caption{$S_3$}
			\label{S_3}
		\end{minipage}		
	\end{figure}
	Note that the poset $P_{A}$ shown in Fig. \ref{fig_N1} has width three. Consider the following chain cover of $P_A$: 
	\begin{align*}
		X_1 &: 1 < 4\\
		X_2 &: 2 < 5 < 6 \\
		X_3 & : 3
	\end{align*}
	The corresponding permutations are as per the following, in which two-digit numbers (i.e., 10 and 11) are indicated in brackets. 
	\begin{align*}
		p_1 &= 2 5 6 3 (11) 4 7 (10) 1 8 9\\
		p_2 &= 1 4 3 6 7 5 8 (11) 2 9 (10)\\
		p_3 &= 1 4 2 5 6 9 8 (10) 7 3 (11)
	\end{align*}
	Note that the chain cover $\{X_1, X_2, X_3\}$ satisfies Corollary \ref{kappa0_rep} and hence, $p_1p_2p_3$ represent the graph $G$ permutationally.
	
	\begin{figure}[!htb]
		\centering
		\begin{minipage}{.4\textwidth}
			\centering
			\[\begin{tikzpicture}[scale=0.5]			
				\vertex (1) at (0,1) [label=below:$1$] {};  
				\vertex (2) at (2,1) [label=below:$2$] {};
				\vertex (3) at (4,1) [label=below:$3$] {};
				\vertex (4) at (1,2.5) [label=left:$4$] {};
				\vertex (5) at (3,2.5) [label=right:$5$] {};  
				\vertex (6) at (2,4) [label=above:$6$] {};
				\path				
				(1) edge (4)
				(1) edge (5)
				(2) edge (4)
				(2) edge (5)
				(3) edge (5)
				(4) edge (6)
				(5) edge (6);		
			\end{tikzpicture}\]
			\caption{$P_A$}
			\label{fig_N1}
		\end{minipage}
		\begin{minipage}{.4\textwidth}
			\centering
			\[\begin{tikzpicture}[scale=0.5]
				\vertex (7) at (0,1) [label=below:$7$] {};  
				\vertex (8) at (-1,2.5) [label=left:$8$] {};
				\vertex (9) at (0,4) [label=above:$9$] {};
				\vertex (10) at (1,2.5) [label=right:$10$] {};
				\vertex (11) at (3,2.5) [label=right:$11$] {};
				\path				
				(7) edge (8)
				(7) edge (10)
				(8) edge (9)
				(10) edge (9);				
			\end{tikzpicture}\]
			\caption{$P_B$}
			\label{fig_N3}
		\end{minipage}
	\end{figure}
	
	The poset $P_{B}$ shown in Fig. \ref{fig_N3}, has also width three. Consider the following chain cover of $P_B$: 
	\begin{align*}
		Y_1 &: 7<8<9 \\
		Y_2 &: 10 \\
		Y_3 &: 11 
	\end{align*}	
	The corresponding permutations are as per the following:
	\begin{align*}
		p_1' &= (10) (11) 3 9 2 8 5 1 7 6 4\\
		p_2' &=   7 8 9 (11) 1 3 (10)4 6 5 2 \\
		p_3' &= 7 8 9 (10) 4 1 2 (11) 6 5 3.
	\end{align*}
	The word $w' = p_1'p_2'p_3$  does not represent $G$  as the vertices 1 and 6 are not adjacent in $G$ while $w'_{\{1,6\}}=161616$. Also, note that
	\begin{align*}
		N_{\{7<8<9\}}(1) &\subsetneq N_{\{7<8<9\}}(6)\\
		N_{\{10\}}(1) &\subsetneq N_{\{10\}}(6)\\
		N_{\{11\}}(1) &\subsetneq N_{\{11\}}(6)
	\end{align*}
	We concatenate $p_0'=(11) (10) 7 8 9  4 6 1 5 2 3$ to the word $w'$ so that $p_0'p_1'p_2'p_3'$ represents $G$ permutationally. Hence, $\mathcal{R}^p(G) \le 3$. However, since $G$ contains $S_3$ shown in Fig. \ref{S_3}, which is a forbidden induced subgraph for permutation graphs \cite{Gallai}, we have $\mathcal{R}^p(G) = 3$.     
\end{example}

\section{Words for $G_3$, $G_4$ and $G_5$ given in Section \ref{poset_w2}}
\label{word_construct}

We use the procedure presented in Section \ref{poly_proc} and construct words which represent the graphs $G_3$, $G_4$ and $G_5$, given in Section \ref{poset_w2}, permutationally. Consider the labeling for the graphs as shown in Fig. \ref{labelling_g3-5}. 

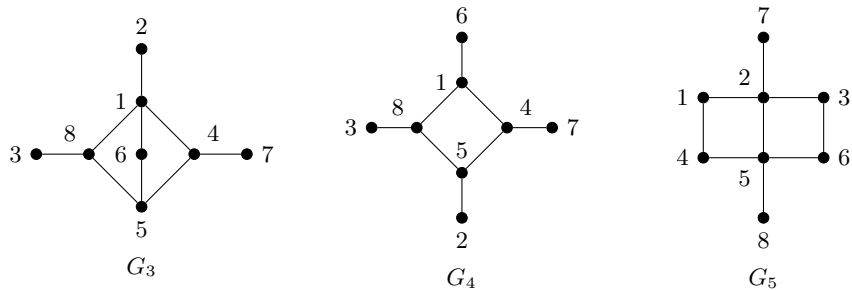
\begin{figure}[h!]
	\centering
	\begin{minipage}{.34\textwidth}
		\centering
		\begin{tikzpicture}[scale=0.7]
			\vertex (1) at (0,-1) [label=left:$ 1$] {};  		
			\vertex (2) at (0,-3) [label=below:$ 5$] {};
			\vertex (3) at (-2,-2) [label=left:$ 3$] {};		
			\vertex (4) at (2,-2) [label=right:$ 7$] {};		
			\vertex (11) at (0,0) [label=above:$ 2$] {};  		
			\vertex (12) at (0,-2) [label=left:$ 6$] {};
			\vertex (13) at (1,-2) [label=above right:$ 4$] {};		
			\vertex (14) at (-1,-2) [label=above left:$ 8$] {};		
			\path
			(11) edge (1)
			(1) edge (14)
			(1) edge (12)
			(1) edge (13)
			(2) edge (14)
			(2) edge (12)
			(2) edge (13)
			(3) edge (14)
			(4) edge (13);			
		\end{tikzpicture}
		
		$G_3$
	\end{minipage}
	\begin{minipage}{.34\textwidth} 
		\centering
		\begin{tikzpicture}[scale=0.6]
			\vertex (1) at (0,-1) [label=left:$1 $] {};  		
			\vertex (2) at (0,-3) [label=above:$ 5$] {};
			\vertex (3) at (-2,-2) [label=left:$ 3$] {};		
			\vertex (4) at (2,-2) [label=right:$ 7$] {};		
			\vertex (11) at (0,0) [label=above:$ 6$] {};  		
			\vertex (12) at (0,-4) [label=below:$2 $] {};
			\vertex (13) at (1,-2) [label=above right:$4 $] {};		
			\vertex (14) at (-1,-2) [label=above left:$ 8$] {};
			\path
			(11) edge (1)
			(1) edge (14)
			(1) edge (13)
			(2) edge (14)
			(2) edge (12)
			(2) edge (13)
			(3) edge (14)
			(4) edge (13);			
		\end{tikzpicture} 
		
		$G_4$
	\end{minipage}
	\begin{minipage}{.3\textwidth}
		\centering
		\begin{tikzpicture}[scale=0.8]
			\vertex (1) at (0,0) [label=left:$ 1$] {};  		
			\vertex (2) at (1,0) [label=above left:$ 2$] {};
			\vertex (3) at (2,0) [label=right:$3 $] {};		
			\vertex (4) at (0,-1) [label=left:$4 $] {};		
			\vertex (5) at (1,-1) [label=below left:$5 $] {};  		
			\vertex (6) at (2,-1) [label=right:$6 $] {};
			\vertex (7) at (1,1) [label=above:$7 $] {};		
			\vertex (8) at (1,-2) [label=below:$ 8$] {};	
			\path
			(1) edge (2)
			(2) edge (3)
			(3) edge (6)
			(6) edge (5)
			(5) edge (4)
			(4) edge (1)
			(2) edge (5)
			(7) edge (2)
			(8) edge (5);			
		\end{tikzpicture}
		
		$G_5$
	\end{minipage}
	\caption{A labeling of $G_3, G_4$ and $G_5$}
	\label{labelling_g3-5}
\end{figure}

\begin{figure}[h!]
	\centering
	\begin{minipage}{.3\textwidth}
		\centering
		\begin{tikzpicture}[scale=0.6]
		\vertex (1) at (0,0) [label=above:$ 1$] {};  		
		\vertex (5) at (0,-1) [label=below:$ 5$] {};
		\vertex (3) at (-1,-2) [label=below:$ 3$] {};		
		\vertex (7) at (1,-2) [label=below:$ 7$] {};		
		\path
		(1) edge (5)
		(5) edge (3)
		(5) edge (7);			
		\end{tikzpicture}
		
		$P_A$ of $G_3$
	\end{minipage}
	\begin{minipage}{.3\textwidth} 
		\centering
		\begin{tikzpicture}[scale=1.2]
		\vertex (1) at (0,0) [label=above:$1 $] {};  		
		\vertex (5) at (1,0) [label=above:$ 5$] {};
		\vertex (3) at (0,-1) [label=below:$ 3$] {};		
		\vertex (7) at (1,-1) [label=below:$ 7$] {};	
		\path
		(1) edge (3)
		(1) edge (7)
		(5) edge (3)
		(5) edge (7);			
		\end{tikzpicture} 
		
		$P_A$ of $G_4$
	\end{minipage}
	\begin{minipage}{.3\textwidth}
		\centering
		\begin{tikzpicture}[scale=0.7]
		\vertex (5) at (0,0) [label=above:$ 5$] {};  		
		\vertex (1) at (-1,-1) [label=left:$ 1$] {};
		\vertex (3) at (1,-1) [label=right:$3 $] {};		
		\vertex (7) at (0,-2) [label=below:$7 $] {};	
		\path
		(5) edge (1)
		(5) edge (3)
		(1) edge (7)
		(3) edge (7);			
		\end{tikzpicture}
		
		$P_A$ of $G_5$
	\end{minipage}
	\caption{Posets with respect to the partite set $A= \{1,3,5,7\}$}
	\label{p_as_g3-5}
\end{figure}
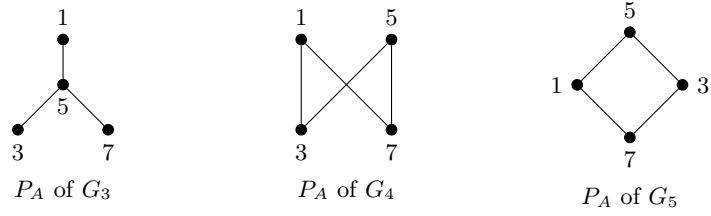

Note that $A = \{1,3,5,7\}$ and $B = \{2,4,6,8\}$ are the partite sets for each graph. In each case, the poset $P_A$ induced by the neighborhood graph with respect to $A$ is depicted in Fig. \ref{p_as_g3-5}. The words representing the graphs are obtained in the Table \ref{table_g3-5}. 

\begin{table}[h!]
	\centering
\begin{tabular}{|l|c|c|c|c|}
	\cline{3-5}
	\multicolumn{2}{l|}{}& $G_3$ & $G_4$ & $G_5$ \\
	\hline
	\multicolumn{2}{|l|}{Chain Cover} & $ \{3 < 5 < 1\}$ and $\{7\}$   & $ \{3 < 1\}$ and $\{7 < 5\}$  & $\{ 7 < 1 <  5\}$ and $\{3\}$	\\
	\hline
	\multirow{3}{*}{Permutations} & $p_1$ & 7 1 2 5 4 6 3 8 &  7 5 2 1 4 6 3 8 & 3 5 6 8 1 4 7 2 \\ \cline{2-5}
				& $p_2$ & 3 5 1 8 6 2 7 4 & 3 1 6 5 8 2 7 4 & 7 1 5 4 8 3 2 6 \\ \cline{2-5}
				& $p_0$ & 7 3 5 1 8 6 4 2 & 7 5 3 1 8 6 4 2 & 3 7 1 5 2 4 8 6 \\ \cline{2-5}
	\hline
	\multicolumn{2}{|l|}{Word} & $p_0p_1p_2$ & $p_0p_1p_2$ &	$p_0p_1p_2$ \\
	\hline
\end{tabular}\\[5pt]
\caption{Construction of words for $G_3$, $G_4$ and $G_5$}
\label{table_g3-5}
\end{table}

\section{Details for the Proof of Theorem \ref{ec_cycle}}
\label{proof_details}

\begin{proof}	
	Now, we show that $a$ and $b$ are adjacent in $G$ if and only if $a$ and $b$ alternate in $w$.
	Suppose $a$ and $b$ are not adjacent in $G$. We show that $a$ and $b$ do not alternate in $w$ through various cases as per the following:
	\begin{itemize}
		\item $a, b \in A$: 
		\begin{itemize}
			\item $a, b \in X_1$: Let $a= a_l$ and $b = a_m$, where $l < m$. We have, $baab \le  w_1 w_2 a_{r} \le w$.
			\item $a, b \in X_2$: Let $a = b_l$ and $b = b_m$, where $l < m$. Then,  $baab \le  w_4 w_5 \le w $.
			\item $a \in X_1, b \in X_2$: Let $a = a_l$ and $b = b_m$, for all $l$ and $m$. Then, $aabb \le  w_1  w_2 a_{r}   w_4 w_5 \le w$.
		\end{itemize} 
		
		\item $a, b \in B$: 
		\begin{itemize}
			\item $a, b \in B \setminus \{a_i',b_j'\}$: Let $a= a_l'$ and $b=a_m'$, where $l < m$. We have, $baab \le  w_1 a_1'w_6 \le w$. Similarly, $a= b_l'$ and $b=b_m'$, where $l < m$, then  $abba \le  w_3  w_4  \le w $. Assume $a = a_l'$ and $b = b_m'$, for all $l$ and $m$. Then, $abba \le  w_1 w_3   w_4 a_1' w_6 \le w$.
			\item $a = a_i'$: If $b = a_m'$, where $m \ne i$, then $baab \le w_1 w_3 w_4 w_6 \le w$. Let $b = b_m'$, where $m \ne j$. If $m < j$, then $baab \le w_3w_4 \le w$. Else, if $m > j$, then $abba \le w_3w_4 \le w$. If $b = b_j'$, then $baab \le w_1w_3w_4w_6 \le w.$
			\item $a = b_j'$: If $b = b_m'$, where $m \ne j$, then $abba \le w_1w_3w_4w_6 \le w$. Let $b = a_m'$, where $m \ne i$. If $m < i$, then $abba \le w_1 a_1' w_6 \le w$. Else, if $m >i$, then $baab \le w_1 w_6 \le w$. 
		\end{itemize}
		
		\item $a \in A$ and $b \in B$: 
		\begin{itemize}
			\item $a \in X_1 $: Let $a = a_l$, where $ 1 \le l \le {r-1}$. If $b = a_m'$, for all $m > l$. We have, $baab \le  w_1 w_2 w_6 \le w$. Similarly, $b = b_m'$, for all $m \ne j$. We have, $aabb \le  w_1 w_2 w_3  w_4  \le w $.
			\item $a \in X_2 $: Let $a= b_l$, where $1 \le l \le s$. If $b= b_m '$, for all  $m > l$. We have, $bbaa \le  w_3 w_4 w_5 \le w $. If $b= a_m'$, for all  $m \ne i$. We have $baab \le w_1  w_4  w_5 w_6 \le w$.
			\item $b = a_i'$: If $a = a_l$, for all $l$, then $aabb \le w_1 w_2 w_3 w_4 \le w$. If $a = b_m$, for all $m < j$, then $bbaa \le w_3w_4w_5 \le w$.
			\item $b = b_j'$: If $a = a_l$, for all $ l < i$, then $baab \le w_1 w_4 w_5 w_6 \le w$. If $a = b_m$, for all $m$, then $baab \le w_1 w_4 w_5 w_6 \le w$.
		\end{itemize} 
	\end{itemize}
	Thus, $a$ and $b$ do not alternate in $w$.
	
	Conversely, suppose $a$ and $b$ are adjacent in $G$. We deal this part in the following cases:
	\begin{itemize}
		\item  $a \in X_1$ and $b \in X_1'\setminus \{a_i'\}$: Let $a = a_l$ and $b = a_m'$, where $l > m $. We have, $abab \le w_1 w_2 a_{r}a_1'  w_6 \le w$.
		\item  $a \in X_2$ and $b \in X_2' \setminus \{b_j'\}$: Let $a = b_l$ and $b = b_m'$, where $l > m $. We have, $baba \le  w_3  w_4  w_5 \le w$ .
		\item $a = a_{r}$ and $b \in B$: Clearly, $abab \le  w_1  w_3 a_{r}  w_4 a_1' w_6\le  w$.  
		\item $a \in A$ and $b = a_1'$ : If $a \in X_1$, then $abab \le w_1 w_2 a_1' \le w $. Else, if $a \in X_2$, then $baba \le w_1 w_4 a_1'w_5 \le w.$ 
		\item $a \in X_1$ and $b = b_j'$: Let $a = a_l$, where $l \ge  i$. Then, $abab \le w_1w_2w_6 \le w$.
		\item  $a \in X_2$ and $b = a_i'$: Let $a = b_l$, where $l \ge  j$. Then, $baba \le w_3 w_4 w_5 \le w$.
	\end{itemize}
	Thus, $a$ and $b$ alternate in $w$.
	\qed
\end{proof}

\end{document}